\newtheorem{thm}{Theorem}[section]
\newtheorem{lemma}{Lemma}[section]
\newtheorem{prop}[lemma]{Proposition}
\newtheorem{cor}[lemma]{Corollary}
\newtheorem{rmk}[lemma]{Remark}
\newtheorem*{acklg}{Acknowledgements}
\numberwithin{equation}{section}
\title{Derivation of a nonlinear Schr\"{o}dinger equation with a general power-type nonlinearity}
\author{ Zhihui Xie }
\address{Department of Mathematics \newline\indent The University of Texas at Austin}
\email{zxie@math.utexas.edu}
\date{\today}
\begin{document}

\begin{abstract}
In this paper we study the derivation of a certain type of NLS from many-body interactions of bosonic particles. We consider a model with a finite linear combination of $n$-body interactions, where $n \geq 2$ is an integer. We show that the $k$-particle marginal density of the BBGKY hierarchy converges when particle number goes to infinity, and the limit solves a corresponding infinite Gross-Pitaevskii hierarchy. We prove the uniqueness of factorized solution to the Gross-Pitaevskii hierarchy based on a priori space time estimates. The convergence is established by adapting the arguments originated or developed in \cite{ESY}, \cite{KSS} and \cite{CPquintic}. For the uniqueness part, we expand the procedure followed in \cite{KM} by introducing a different board game argument to handle the new contraction operator. This new board game argument helps us obtain a good estimate on the Duhamel terms. In \cite{KM}, the relevant space time estimates are assumed to be true, while we give a prove for it. 
\end{abstract}

\maketitle

\section{Introduction}
The nonlinear Schr\"odinger equation (NLS) is a macroscopic model for a quantum mechanical system, with different type of nonlinearities depending on the way we model the interaction potential (cubic, quintic, Hartree, etc.) in a quantum many body system. A derivation of the corresponding PDE that governs the system is a hot topic in mathematical physics that has been drawing much attention during the past decade. Some of the references in this direction include \cite{Spohn},\cite{EY2001},\cite{EESY},\cite{ESY06}, \cite{ESY},\cite{KM},\cite{KSS},\cite{CPquintic},\cite{GMM1},\cite{GMM2},\cite{xuwen} etc. In particular, the sequence of crucial works by Elgart, Erd\"os, Schlein and Yau \cite{EY2001}, \cite{EESY}, \cite{ESY06}, \cite{ESY} studied a model of Bose gas in $\mathbb{R}^3$ with pairwise interactions and rescaled potentials $V^{(p)}_N$ approaching a delta function. They proved that the $k$-particle density matrix for BBGKY hierarchy converges to that of the infinite hierarchy (GP hierarchy), 
which is actually governed by the solution of the cubic non-linear Schr\"odinger equation. In their work, uniqueness of 
solutions to the GP hierarchy is established via Feynman diagrams. 


In this paper, we derive a nonlinear Schr\"odinger equation with a linear combination of power type nonlinearities. Our work is motivated by \cite{KSS} and \cite{CPquintic}, in which the authors consider a quantum model with $2$-body interactions \cite{KSS} and $3$-body interactions \cite{CPquintic} respectively and obtain cubic and quintic NLS correspondingly that correctly describes the system. It is also worth mentioning that in \cite{CPquintic}, Chen and Pavlovi\'c predict that, if both $2$-body and $3$-body interactions are present in a quantum model, then that would lead (via Gross-Pitaevskii limit) to a NLS with a linear combination of cubic and quintic nonlinearities. In this paper, we will give a proof of that claim. Actually, we generalize the prediction from \cite{CPquintic} and derive the NLS with a finite linear combination of power nonlinearities. We also note that a particular example of such kind of NLS was studied by Tao-Visan-Zhang in \cite{TVZ}, in which local and global wellposedness and 
related questions are explored.

\subsection{BBGKY hierarchy}

We consider a quantum mechanical system of $N$ bosonic particles in ${\mathbb{R}}^d$, with $d \in \{1,2\}$. Let $p$ and $p_0$ be positive integers, fixed $p_0$, $1\leq p\leq p_0$. The time evolution of the $N$ particle wave function $\psi_N \in L^2_s(\mathbb{R}^{dN})$ is governed by the Schr\"odinger equation
\begin{equation}   \label{Hamiltonian equation}
 i\partial_t \psi_{N,t} = H_N \psi_{N,t}
\end{equation}
with the Hamiltonian
\small
\begin{equation}  \label{Hamiltonian}
 H_N:=\sum_{i=1}^{N}(-\Delta_{x_i})+\sum_{p=1}^{p_0}\frac{1}{N^p}\sum_{1\leq i_1<\cdots<i_{p+1}\leq N}N^{pd\beta}V^{(p)}\big(N^{\beta}(x_{i_1}-x_{i_2}),\cdots,N^{\beta}(x_{i_1}-x_{i_{p+1}})\big)
\end{equation}
\normalsize
on Hilbert space $L^2_s(\mathbb{R}^{dN})$, which is the subspace of $L^2(\mathbb{R}^{dN})$ consisting of all functions satisfying 
$$\psi_N(x_{\sigma(1)},x_{\sigma(2)}...,x_{\sigma(N)})=\psi_N(x_1,x_2...,x_N),$$
for any permutation $\sigma \in S_N$ and $0<\beta<\frac{1}{2dp+2}$. Also we assume that for all $1\leq p\leq p_0$ the $(p+1)$-body interaction potential $V^{(p)} \in W^{p,\infty}(\mathbb{R}^{pd})$ is a non-negative function with sufficient regularity and it is translation-invariant so that it can be written in the above form. For instance, when $p=2$, we have that $V^{(2)}(x_1-x_2, x_2-x_3, x_1-x_3)=V^{(2)}(x_1-x_2 ,-(x_1-x_2)+(x_1-x_3), x_1-x_3)=V^{(2)}(x_1-x_2, x_1-x_3).$ The first part of the Hamiltonian represents the kinetic energy part, while the second is the sum of interaction potentials involving $p+1$ particles. 
 
Note that \eqref{Hamiltonian equation} is linear, which together with the fact that $H_N$ is a self-adjoint operator implies that global in time solutions can be 
written by means of the unitary group generated by $H_N$ as
\begin{equation}
 \psi_{N,t}=e^{-iH_N t} \psi_N,  \qquad \forall t \in \mathbb{R}
\end{equation}

As in previous works on derivation of NLS from many body quantum dynamical systems \cite{ESY, KSS, CPquintic},  
we define the corresponding \emph{$N$-particle density matrix} as follows: 
\begin{equation}
\gamma_{N,t}(t,\textbf{x}_N;\textbf{x}'_N)=\psi_{N,t}(\textbf{x}_N) \bar\psi_{N,t}(\textbf{x}'_N)
\end{equation}
$\bar\psi_{N,t}$ denotes the complex conjugate of $\psi_{N,t}$.
Then \eqref{Hamiltonian equation} implies that
\begin{equation}      \label{Heisenberg equation}
 i\partial_t \gamma_{N,t}=[H_N, \gamma_{N,t}],
\end{equation}
where the Heisenberg commutator is given as usual $[A,B]:=AB-BA$. The $L^2$-normalization of $\psi_{N,t}$ implies that $Tr\gamma_{N,t}=1$.
By taking partial trace of $\gamma_{N,t}$ over the last $N-k$ particles we define the \emph{k-particle marginal density}:
\begin{equation}
 \gamma_{N,t}^{(k)}(t,\textbf{x}_k;\textbf{x}'_k)=\int \gamma_{N,t}(t,\textbf{x}_k,\textbf{x}_{N-k};\textbf{x}'_k,\textbf{x}_{N-k})d{\textbf{x}_{N-k}}
\end{equation}
where $\textbf{x}_k=(x_1,\cdots,x_k), \textbf{x}_{N-k}=(x_{k+1},\cdots,x_N), \; k=1,...,N$. 

Let $V^{(p)}_N(x_1,x_2,\cdots,x_p):=N^{pd\beta}V^{(p)}(N^{\beta}x_1,N^{\beta}x_2,\cdots,N^{\beta}x_{p})$. We can verify that the marginal densities satisfy the so called BBGKY hierarchy
\small
\begin{equation} \label{BBGKY}
 \begin{split}
  & i\partial_t \gamma_{N,t}^{(k)} =  \\
  & \quad \sum_{i=1}^{k}[-\Delta_{x_i}, \gamma_{N,t}^{(k)}]   \\
  & +\sum_{p=1}^{p_0}\Bigg\{\frac{1}{N^{p}}\sum_{1\leq i_1<\cdots<i_{p+1}\leq k}[V^{(p)}_N(x_{i_1}-x_{i_2},\cdots,x_{i_1}-x_{i_{p+1}}),\gamma_{N,t}^{(k)}]  \\
  & \qquad\quad +\frac{N-k}{N^{p}}\sum_{1\leq i_1<\cdots<i_{p} \leq k} Tr_{k+1}[V^{(p)}_N(x_{i_1}-x_{i_2},\cdots,x_{i_1}-x_{i_{p}},x_{i_1}-x_{k+1}),\gamma_{N,t}^{(k+1)}]  \\
  & \qquad\quad +\frac{(N-k)(N-k-1)}{N^{p}} \sum_{1\leq i_1<\cdots<i_{p-1}\leq k}Tr_{k+1}Tr_{k+2}  \\
  & \qquad \qquad \qquad [V^{(p)}_N(x_{i_1}-x_{i_2},\cdots,x_{i_1}-x_{i_{p-1}},x_{i_1}-x_{k+1},x_{i_2}-x_{k+2}),\gamma_{N,t}^{(k+2)}]  \\
  & \qquad\quad +\cdots  \\
  & \qquad\quad +\frac{(N-k)(N-k-1)\cdots(N-k-p+1)}{N^{p}}\sum_{1 \leq i_1 \leq k} Tr_{k+1}Tr_{k+2} \cdots Tr_{k+p}  \\
  & \qquad \qquad \qquad [V^{(p)}_N(x_{i_1}-x_{k+1},x_{i_1}-x_{k+2},\cdots,x_{i_1}-x_{k+p}),\gamma_{N,t}^{(k+p)}]\Bigg\}. 
\end{split}
\end{equation}
\normalsize
Here we use the convention that $\gamma_{N,t}^{(k)}=0$, whenever $k>N$. The symbol $Tr_{k+j}$ denotes the partial trace over the $m$-th particle, i.e, the kernel of the $k$-particle operator  $Tr_{k+1}[V^{(p)}_N(x_{i_1}-x_{i_2},\cdots,x_{i_1}-x_{i_{p}},x_{i_1}-x_{k+1}),\gamma_{N,t}^{(k+1)}]$ is given by
\small
\begin{equation}
  \begin{split}
  & \big(Tr_{k+1}[V^{(p)}_N(x_{i_1}-x_{i_2},\cdots,x_{i_1}-x_{i_{p}},x_{i_1}-x_{k+1}),\gamma_{N,t}^{(k+1)}]\big)(\textbf{x}_k;\textbf{x}'_k)  \\
  & = \int V^{(p)}_N(x_{i_1}-x_{i_2},\cdots,x_{i_1}-x_{i_{p}},x_{i_1}-x_{k+1})\gamma^{(k+1)}(\textbf{x}_k,x_{k+1};\textbf{x}'_k,x_{k+1})dx_{k+1}  \\
  &\quad -\int V^{(p)}_N(x'_{i_1}-x'_{i_2},\cdots,x'_{i_1}-x'_{i_{p}},x'_{i_1}-x_{k+1})\gamma^{(k+1)}(\textbf{x}_k,x_{k+1};\textbf{x}'_k,x_{k+1}) dx_{k+1}
  \end{split} 
\end{equation}
\normalsize
Let us present a heuristic argument on what one expects when taking $N \to \infty$. 
In particular, we note that all the terms in \eqref{BBGKY}, except the first term on the RHS and the last term in the bracket, are expected to vanish for fixed k and sufficiently small $\beta$, because $\frac{1}{N^{p}} \to 0$, $\frac{\prod_{i=0}^{j}(N-k-i)}{N^{p}} \to 0$, $\forall 0 \leq j \leq p-2$. The last interaction term on the RHS is expected to 
survive thanks to $\frac{\prod_{i=0}^{p-1}(N-k-i)}{N^{p}} \to 1$. 
Indeed, one can make this heuristic precise and prove existence of a weak sequential limit of \eqref{BBGKY} in 
the same topology that was originally used in \cite{ESY}, and subsequently in \cite{KSS, CPquintic}. Details are presented in Section \ref{section: convergence}. In such a way one shows that the corresponding infinite(GP) hierarchy is a weak sequential limit of \eqref{BBGKY}. 

\subsection{GP hierarchy} 
Following the convention in \cite{CPcauchy}, we formally write down the limit of \eqref{BBGKY} as $N \to \infty$, as follows: 
\begin{equation} \label{p-GP hierarchy}
 i\partial_t \gamma^{(k)}_{\infty,t}=\sum_{j=1}^{k}\big(-\Delta_{x_j}+\Delta_{x'_j}\big)\gamma^{(k)}_{\infty,t}+ \sum_{p=1}^{p_0} b_0^{(p)}\sum_{j=1}^{k}B_{j;k+1,\dots,k+p}\gamma^{(k+p)}_{\infty,t}
\end{equation}
for any $k\geq 1$. We call \eqref{p-GP hierarchy} \emph{cubic Gross-Pitaevskii (GP) hierarchy} if $p=1$; \emph{quintic GP hierarchy} if $p=2$ and \emph{septic GP hierarchy} if $p=3$, and so on. 
Here $b_0^{(p)}$ is the $L^1$ norm of the non-negative potential: $b_0^{(p)}=\int_{\mathbb{R}^{pd}} V^{(p)}(x_1,\cdots,x_p)dx_1\cdots dx_p$. 

The \emph{contraction operator} is given via 
\begin{equation}  \label{definition of Bj}
B_{j;k+1,\cdots,k+p}:=B^{+}_{j;k+1,\cdots,k+p}-B^{-}_{j;k+1,\cdots,k+p} 
\end{equation}
where
\begin{equation}    \label{defintion of Bj+}
 \begin{split}
  & \quad \Big(B^{+}_{j;k+1,\cdots,k+p}\gamma^{(k+p)}_{\infty,t}\Big)(t,\textbf{x}_k,\textbf{x}'_k) \\
  & :=\int \delta(x_j-x_{k+1}) \delta (x_j-x'_{k+1})\cdots \delta(x_j-x_{k+p}) \delta (x_j-x'_{k+p}) \\
  & \qquad \times \gamma^{(k+p)}_{\infty,t}(t,x_1,\cdots,x_{k+p};x'_1,\cdots,x'_{k+p}) dx_{k+1}dx'_{k+1}\cdots dx_{k+p}dx'_{k+p}
 \end{split}
\end{equation}
and
\begin{equation}     \label{defintion of Bj-}
  \begin{split}
  & \quad \Big(B^{-}_{j;k+1,\cdots,k+p}\gamma^{(k+p)}_{\infty,t}\Big)(t,\textbf{x}_k,\textbf{x}'_k) \\  
  & := \int \delta(x'_j-x_{k+1}) \delta (x'_j-x'_{k+1})\cdots \delta(x'_j-x_{k+p}) \delta (x'_j-x'_{k+p})   \\
  & \qquad \times \gamma^{(k+p)}_{\infty,t}(t,x_1,\cdots,x_{k+p};x'_1,\cdots,x'_{k+p}) dx_{k+1}dx'_{k+1}\cdots dx_{k+p}dx'_{k+p}
 \end{split}
\end{equation}

We can check that	
\begin{equation}      \label{factorized solution}
 \gamma_{\infty,t}^{(k)}=\Ket{\phi_t} \Bra{\phi_t}^{\otimes k}=\prod_{j=1}^{k} \phi_t(x_j)\bar\phi_t(x'_j)
\end{equation}
is a solution to \eqref{p-GP hierarchy} if $\phi_t$ is a solution to the nonlinear Schr\"odinger equation
\begin{equation}
 i\partial_t \phi_t=-\Delta \phi_t+\sum_{p=1}^{p_0}b_0^{(p)}|\phi_t|^{2p} \phi_t.
\end{equation}
We hope to establish the uniqueness on solutions of the Gross-Pitaevskii hierarchy, and build the following convergence under appropriate topology:
\begin{equation}
 \gamma_{N,t}^{(k)} \to \gamma_{\infty,t}^{(k)}, \qquad as\ \ N\to \infty, \quad \forall k \geq 1
\end{equation}
The uniqueness of solution to cubic GP hierarchy is proved in \cite{ESY} by Erd\"os-Schlein-Yau in a suitable space. By use of s sophisticated Feynman graph expansions. Fourier integrals associated to these graphs take most of the efforts in their analysis. Then later in \cite{KM}, a new method has been developed by Klainerman and Machedon to deal with the uniqueness part in a different space of density matrices. This approach also uses the expansion introduced in \cite{ESY}, but the authors take advantage of the space-time estimate obtained from free evolving Schr\"odinger equations, and thus yielding a comparatively simpler analysis on the contributions of expansion terms. Subsequent works like \cite{KSS} by Kirkpatrick, Schlein and Staffilani, and \cite{CPquintic} by Chen and Pavlovi\'{c} proceeded along their lines when considering the Bose gas with pair and three-body interactions respectively, and the solutions obtained in both \cite{KSS} and \cite{CPquintic} satisfies the Klainerman-Machedon bounds.

In this paper, we prove the following result:
\begin{thm}  \label{thm:main result}
 Let $p_0\geq 1$ be a fixed integer. Suppose that for all $1\leq p\leq p_0$ the potential $V^{(p)}\in W^{p,\infty}(\mathbb{R}^{dp})$ and $V^{(p)}\geq 0$ is translation-invariant.  Let $d\in \{1,2\}$ and $0<\beta<\frac{1}{2dp_0+2}$. $\{\psi_N\}_{N\geq 1}$ is a family of functions that satisfy
\begin{equation}     \label{condition for Nk order growth of Hamiltonian}
 \sup_{N}\frac{1}{N}\langle \psi_N, H_N\psi_N\rangle <\infty
\end{equation}
and assume $\{\psi_N\}_{N\geq 1}$ exhibits asymptotic factorization: $\exists \phi\in L^2(\mathbb{R}^d)$ such that $Tr\big|\gamma_N^{(1)}-\Ket{\phi} \Bra{\phi}\big|\to 0$ as $N\to 0$. $\gamma_N^{(1)}$ is the 1-particle marginal density associated with $\psi_N$.  \\
Then we have
\begin{equation}     \label{main convergence}
 Tr\big|\gamma_{N,t}^{(k)}-\Ket{\phi_t} \Bra{\phi_t}^{\otimes k}\big|\to 0 \quad as \ \ N\to \infty
\end{equation}
Here $\gamma_{N,t}^{(k)}$ is the k-particle marginal density associated to $\psi_{N,t}=e^{-iH_N t} \psi_N$, and $\phi_t$ solves the nonlinear Schr\"odinger equation: $ i\partial_t \phi_t=-\Delta \phi_t+\sum_{p=1}^{p_0} b_0^{(p)}|\phi_t|^{2p} \phi_t
$ with initial condition $\phi_0=\phi$ and potential constant $b_0^{(p)}=\int_{\mathbb{R}^{pd}}V^{(p)}(x)dx<\infty$.
\end{thm}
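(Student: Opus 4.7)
The plan is to follow the now-standard three-stage scheme for deriving NLS from BBGKY (a priori bounds $\Rightarrow$ compactness and identification of limit points $\Rightarrow$ uniqueness of the factorized solution of the GP hierarchy), but to adapt each stage to accommodate the finite linear combination of $(p+1)$-body interactions for $1\le p\le p_0$. First I would establish a priori energy estimates: starting from the assumption $\sup_N N^{-1}\langle\psi_N,H_N\psi_N\rangle<\infty$ together with non-negativity of each $V^{(p)}$, iterate the standard $H_N^k\ge C^k N^k\,(1-\Delta_{x_1})\cdots(1-\Delta_{x_k})$ type lower bound (as in \cite{ESY,KSS,CPquintic}) to get uniform-in-$N$ Sobolev bounds $\mathrm{Tr}\bigl((1-\Delta_{x_1})\cdots(1-\Delta_{x_k})\gamma_{N,t}^{(k)}\bigr)\le C^k$; the key technical point here is that each $V^{(p)}$ contributes positively, so the algebra from the quintic case in \cite{CPquintic} extends with only bookkeeping changes to the general $p_0$.

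Next I would use these bounds together with equicontinuity in $t$ (obtained by testing \eqref{BBGKY} against a dense family of compact operators as in \cite{ESY,KSS}) to extract, by a standard Arzel\`a--Ascoli argument in the weak-$\ast$ topology on trace-class operators, a subsequence $\{\gamma_{N_\ell,t}^{(k)}\}$ converging for every $k$ to some $\gamma_{\infty,t}^{(k)}$. Then I would pass to the limit term-by-term in \eqref{BBGKY}: the combinatorial prefactors $\tfrac{(N-k)\cdots(N-k-j+1)}{N^p}$ with $j<p$ all vanish, only the fully-contracted term with $j=p$ survives with prefactor $\to 1$, and replacing $V^{(p)}_N$ by $b_0^{(p)}\delta$ is justified by the Sobolev bounds plus standard delta-function approximation arguments (adapted from \cite{KSS,CPquintic} using $0<\beta<\frac{1}{2dp_0+2}$, which is the binding constraint). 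This identifies every limit point as a solution of the GP hierarchy \eqref{p-GP hierarchy} satisfying the Klainerman--Machedon space-time bounds, with initial data $|\phi\rangle\langle\phi|^{\otimes k}$ given by the asymptotic factorization hypothesis.

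Finally, uniqueness of the GP solution forces $\gamma_{\infty,t}^{(k)}=|\phi_t\rangle\langle\phi_t|^{\otimes k}$ and upgrades weak-$\ast$ convergence of non-negative trace-class operators of equal trace to trace-norm convergence, yielding \eqref{main convergence}. The uniqueness argument is the heart of the matter and the step I expect to be hardest: I would iterate the Duhamel expansion of \eqref{p-GP hierarchy} and organize the enormous number of resulting Duhamel terms via a board game argument generalizing \cite{KM}, where now each ``move'' adjoins one of several contraction operators $B_{j;k+1,\dots,k+p}$ with $p$ ranging over $1,\dots,p_0$; the goal of the board game is to reduce an a priori huge sum over orderings of interactions of mixed arities to a much smaller sum of equivalence classes, each of which can be bounded using a Klainerman--Machedon-type space-time estimate
\begin{equation*}
\bigl\|B_{j;k+1,\dots,k+p}\,e^{it\sum_{i=1}^{k+p}\Delta_{x_i}}\gamma^{(k+p)}\bigr\|_{L^2_t L^2_{\mathbf{x}_k,\mathbf{x}_k'}}
\le C\,\bigl\|S^{(k+p)}\gamma^{(k+p)}\bigr\|
\end{equation*}
which must be proved (not assumed as in \cite{KM}) for each $p\le p_0$. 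I would prove the estimate following the Strichartz/trace-estimate strategy of \cite{KSS,CPquintic}, and then combine the new board game bookkeeping with the improved estimate to show that the total Duhamel contribution at level $n$ decays factorially fast in $n$ for small times, giving uniqueness on a short interval and hence globally by a standard continuation argument.
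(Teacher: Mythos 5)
Your proposal follows essentially the same three-stage architecture as the paper's proof: iterated energy estimates yielding uniform Sobolev bounds on the marginals, compactness plus identification of every weak-$*$ limit point as a solution of the mixed GP hierarchy by discarding the subleading BBGKY terms and approximating $V_N^{(p)}$ by $b_0^{(p)}\delta$, and uniqueness via a generalized Klainerman--Machedon board game combined with a proved (not assumed) Strichartz-type bound for the contraction operators of each arity $p\le p_0$. The one point your sketch elides is that the paper treats $d=1$ and $d=2$ separately: for $d=1$ the purely spatial a priori estimates (Theorems \ref{thm:priori spacial bound on the limiting hierarchy} and \ref{thm:priori induction estimate on the limiting hierarchy}) already close the iteration and no Strichartz bound or board game is required, whereas for $d=2$ the admissible ranges of the Sobolev exponent $\alpha$ in those two spatial estimates are disjoint, so the free-evolution Strichartz bound (Theorem \ref{thm:free evolving bound theorem}) and the combinatorial grouping become genuinely necessary.
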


The bulk of this  paper is devoted to the proof of Theorem \ref{thm:main result}. The strategy we follow is to identify the limit of $\Gamma_{N,t}=\{\gamma_{N,t}^{(k)}\}_{k=1}^N$ as the unique solution to \eqref{p-GP hierarchy}; or in other words, every limit (under suitable topology) of $\Gamma_{N,t}$ solves \eqref{p-GP hierarchy} uniquely, since \eqref{factorized solution} is a solution, then \eqref{main convergence} follows by compactness.  

The idea to prove uniqueness of the infinite hierarchy in \cite{KM} consists of the following three major steps. First, we express each solution $\gamma^{(k)}$ in terms of the future iterates $\gamma^{(k+p_0)},..., \gamma^{(k+n p_0)}$ using Duhamel formula (we choose all $p$ to be $p_0$ for a upper bound of the number of terms). Since for each $p_0$, the operator $B^{k}_{k+p_0}=\sum_{j=1}^k B_{j, k+1,\cdots,k+p_0 n}$ is a sum of $k$ operators, the iterated Duhamel formula involves up to $k(k+p_0)\cdots\big(k+p_0(n-1)\big) \sim n!$ terms (see $J^k$ in \eqref{gamma expansion}). Then in the second step, we use a combinatorial argument to group these iterated terms into equivalence classes that we can bound. Finally, we treat each equivalence class with the Strichartz type estimate \eqref{free evolving bound}. 

Compared to \cite{CPquintic}, the main novelties are: 
\begin{itemize} 
\item  in the proof of an a priori energy bound (Proposition \ref{prop:finite priori energy bound}) which had to be carefully done due to presence of many terms in the interacting potential;
\item in the combinatorial argument, since in the case considered in this paper the matrices associated with iterated Duhamel terms reflect a combination of different interactions. 
\end{itemize} 
 
\subsection*{Organization of the paper} 
In section 2, we prove a priori energy bound for the BBGKY solutions and summarize mains steps on establishing the convergence of $k$-particle marginals to the infinite hierarchy. In section 3, we obtain two space-time estimates for the limiting hierarchy. In section 4, a free evolving bound on the limiting hierarchy is presented, which is later used to prove the uniqueness of solutions in 2D case. Sections 5-7 are devoted to the proof of uniqueness of solutions to the limiting hierarchy. We prove 1D case in section 5. In section 6, we obtain the results from board game arguments (first introduced in \cite{KM}), which, combined with the bounds in section 3 and 4 lead to the uniqueness in 2D case. Finally, two technical lemmas are included in the appendix sections.

\section{Convergence}   \label{section: convergence}

\subsection{A Priori Energy Bounds}
From the energy estimates, following \cite{KSS},\cite{CPquintic},\cite{ESY},\cite{EESY},\cite{EY2001}, we will be able to obtain the priori bounds below. 

\begin{prop}   \label{prop:finite priori energy bound}
 Suppose $0<\beta<\frac{1}{2dp_0+2}$, then there exists a constant $C$ (depends on $p_0, V^{(p)}, d$), such that for every $k$, there exists $N_0(k)$ such that 
\begin{equation}   \label{finite priori energy bound}
 \langle \psi, (H_N+N)^k\psi \rangle \geq C^k N^k \langle \psi, (1-\Delta_{x_1})\cdots(1-\Delta_{x_k})\psi \rangle
\end{equation}
for all $N\geq N_0(k)$, $\psi \in L^2_s(\mathbb{R}^{dN})$. The Hamiltonian $H_N$ is defined as in \eqref{Hamiltonian}.
\end{prop}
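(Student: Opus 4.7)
The plan is to prove the bound by induction on $k$, following the scheme of \cite{ESY} (used also in \cite{KSS}, \cite{CPquintic}) and adapting it to the full finite sum $\sum_{p=1}^{p_0}$ of $(p+1)$-body interactions. Throughout, write $\mathcal{H}:=H_N+N$ and $S_j:=1-\Delta_{x_j}$, and work on $L^2_s(\mathbb{R}^{dN})$ so that Bose symmetry of $\psi$ can be exploited at every step. The base case $k=0$ is trivial; for $k=1$, the non-negativity of every $V^{(p)}$ gives $H_N\geq\sum_{i=1}^N(-\Delta_{x_i})$, and Bose symmetry converts this into $\langle\psi,H_N\psi\rangle\geq N\langle\psi,-\Delta_{x_1}\psi\rangle$, hence $\langle\psi,\mathcal{H}\psi\rangle\geq N\langle\psi,S_1\psi\rangle$ with $C=1$.

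For the inductive step from $k$ to $k+2$, I would split $\mathcal{H}=\mathcal{H}'_k+\mathcal{R}_k$, where $\mathcal{H}'_k$ retains only the pieces of $\mathcal{H}$ involving neither $x_{k+1}$ nor $x_{k+2}$ and $\mathcal{R}_k$ collects the two kinetic operators $-\Delta_{x_{k+1}},-\Delta_{x_{k+2}}$ together with every interaction cluster of any body-size that touches $\{k+1,k+2\}$. Expanding
\begin{equation*}
\mathcal{H}^{k+2}\;=\;(\mathcal{H}'_k+\mathcal{R}_k)\,\mathcal{H}^{k}\,(\mathcal{H}'_k+\mathcal{R}_k),
\end{equation*}
the main contribution $\mathcal{H}'_k\,\mathcal{H}^{k}\,\mathcal{H}'_k$ is treated by applying the inductive hypothesis to the inner $\mathcal{H}^{k}$ (producing $C^{k}N^{k}S_1\cdots S_k$) and then running the $k=1$ argument on the two flanking copies of $\mathcal{H}'_k$, which by symmetry under permutations fixing $\{1,\dots,k\}$ extracts the missing factor $C^2N^2 S_{k+1}S_{k+2}$.

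The cross terms $\mathcal{H}'_k\,\mathcal{H}^{k}\,\mathcal{R}_k+\mathcal{R}_k\,\mathcal{H}^{k}\,\mathcal{H}'_k$ and the tail $\mathcal{R}_k\,\mathcal{H}^{k}\,\mathcal{R}_k$ form the technical core. Each piece of $\mathcal{R}_k$ coming from the $p$-body interaction carries a combinatorial factor of size at most $N^{p}$ divided by $N^p$, multiplied by the rescaled potential $N^{pd\beta}V^{(p)}(N^{\beta}\cdot)$; using the regularity $V^{(p)}\in W^{p,\infty}$, I would integrate by parts up to $p$ times to move derivatives off $V^{(p)}_N$ onto $\psi$, and apply Cauchy--Schwarz together with Sobolev embedding (available because $d\in\{1,2\}$) to estimate each such term by $N^{\theta_p}$ times a constant multiple of the main term, with $\theta_p<0$ strictly for every $1\leq p\leq p_0$. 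The hypothesis $\beta<\frac{1}{2dp_0+2}$ is precisely what guarantees $\theta_{p_0}<0$, so that for $N\geq N_0(k)$ chosen large enough all errors can be absorbed into half the main term, closing the induction with a constant $C=C(p_0,V^{(p)},d)$ independent of $N$.

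The main obstacle, genuinely new compared with the single-interaction settings of \cite{KSS} and \cite{CPquintic}, is that $\mathcal{R}_k$ is a \emph{sum} of $p_0$ distinct multi-body pieces and that the expansion couples each of these pieces with all the others through the inner $\mathcal{H}^{k}$. One therefore has to verify that every combined scaling, arising for any mixed choice of $p,p'\in\{1,\dots,p_0\}$ at different positions in the expansion, remains strictly negative uniformly in $p$; this is exactly the bookkeeping that forces a single threshold $\beta<\frac{1}{2dp_0+2}$ rather than a $p$-dependent bound, and it is the place where the CPquintic argument must be genuinely reworked.
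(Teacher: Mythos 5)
Your overall strategy---a two-step induction, sandwiching the inductive hypothesis around an inner power of $\mathcal H = H_N + N$, and estimating the remaining pieces via symmetry, commutators, and Cauchy--Schwarz---is the same as the paper's, and your identification of the regularity requirement $V^{(p)}\in W^{p,\infty}$ (one integration by parts per body-order) and of the role of $\beta<\frac{1}{2dp_0+2}$ is correct. However, there are two concrete points where your proposal differs in a way that would create genuine difficulties.

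First, the order in which you split $\mathcal H$ and apply the inductive hypothesis matters. You write $\mathcal H^{k+2}=(\mathcal H'_k+\mathcal R_k)\,\mathcal H^{k}\,(\mathcal H'_k+\mathcal R_k)$ and then propose to apply the inductive hypothesis to the inner $\mathcal H^{k}$ piece by piece. The inductive bound only holds on $L^2_s$, i.e.\ for fully bosonic wave functions, and $\mathcal H'_k$ and $\mathcal R_k$ are not permutation invariant (they single out particles $k+1,k+2$), so $\mathcal H'_k\psi$ and $\mathcal R_k\psi$ are in general not symmetric and the inductive hypothesis does not apply to $\langle\mathcal H'_k\psi,\mathcal H^{k}\mathcal H'_k\psi\rangle$, let alone to the cross pieces. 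The paper avoids this by first applying the inductive hypothesis with the outer factors left intact---since $H_N+N$ commutes with all permutations, $(H_N+N)\psi\in L^2_s$---obtaining $\langle\psi,(H_N+N)^{n+2}\psi\rangle\ge C^nN^n\langle\psi,(H_N+N)S_1^2\cdots S_n^2(H_N+N)\psi\rangle$, and only then decomposes the two outer copies.

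Second, your decomposition is genuinely different from the paper's, and the difference is not cosmetic. You split by whether a term touches $\{k+1,k+2\}$, putting the kinetic operators $-\Delta_{x_{k+1}},-\Delta_{x_{k+2}}$ and all interaction clusters meeting $\{k+1,k+2\}$ into $\mathcal R_k$. The paper instead uses $h_1=\sum_{j>n}S_j^2$ (pure kinetic energy of the ``back'' particles) and puts the kinetic energy of the first $n$ particles together with \emph{all} potentials into $h_2$. The advantage of the latter choice is that $\langle\psi,h_2\,S_1^2\cdots S_n^2\,h_2\,\psi\rangle=\|(S_1\cdots S_n)h_2\psi\|^2\ge0$ and is dropped in one step, so \emph{all} potential-potential cross terms (in particular those mixing different body-orders $p\ne p'$) disappear automatically; the only surviving error terms are kinetic-times-potential. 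In your split, after dropping $\mathcal R_k(\cdot)\mathcal R_k\ge0$ and $V'(\cdot)V'\ge0$, one is still left with the signless cross terms $V'(\cdot)V_R+V_R(\cdot)V'$ between the two groups of potentials. These can be rescued only by recombining them with the two squares $V'(\cdot)V'$ and $V_R(\cdot)V_R$ to form $(V'+V_R)(\cdot)(V'+V_R)\ge0$---at which point you have effectively reproduced the paper's grouping. As written, your proposal does not address these terms, and this is the step that would actually fail.

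Finally, a smaller remark: you speculate that the obstacle is that ``the expansion couples each of these pieces with all the others through the inner $\mathcal{H}^{k}$,'' but once the inductive hypothesis is applied first (as above), the inner power is replaced by $S_1^2\cdots S_n^2$ and the different $p$'s enter only additively in the two outer factors, so no new coupling arises there; the paper simply observes that $p_0$ is finite and treats each $p$ separately. The bound $\beta<\frac{1}{2dp+2}$ comes from the $j=2$ case of the error analysis (one derivative moved onto $\psi$) and the most restrictive value $p=p_0$, exactly as you guessed.
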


\begin{proof}
 We adapt the proof in \cite{KSS},\cite{CPquintic} to the current case. It's a two-step induction over $k\geq 0$. For $k=0$ the statement is trivial and for $k=1$ the statement follows from $V^{(p)}_N \geq 0$. In order to illustrate the techniques here, we check one more case before the running of induction. Write $S_i=(1-\Delta_{x_i})^{\frac{1}{2}}$ and the interactions in two groups $h_1$ and $h_2$, such that $H_N+N=h_1+h_2$:
\begin{align*}
  & h_1=\sum\limits_{j=n+1}^{N} S_j^2    \\
  & h_2=\sum_{j=1}^{n} S_j^2+\sum_{p=1}^{p_0}\sum_{i_1<i_2<\cdots<i_{p+1}} N^{-p}V^{(p)}_N(x_{i_1}-x_{i_2},\cdots,x_{i_1}-x_{i_{1+p}})
\end{align*}
 For $k=2$, let $h_1=\sum\limits_{j=1}^{N} S_j^2$ and $h_2=\sum_p \sum N^{-p}V^{(p)}_N(x_{i_1}-x_{i_2},\cdots,x_{i_1}-x_{i_{1+p}})$, then since $h_2^2\geq 0$, 
\begin{equation}    \label{k=2}
 \begin{split}
  & \langle \psi, (H_N+N)^2\psi \rangle   \\
  &=\langle \psi, h_1^2\psi \rangle + \langle \psi, h_1h_2\psi \rangle+ \langle \psi, h_2h_1\psi \rangle +\langle \psi, h_2^2\psi \rangle   \\
  &\geq \langle \psi, h_1^2\psi \rangle + \langle \psi, h_1h_2\psi \rangle+ \langle \psi, h_2h_1\psi \rangle   \\
  &=\boxed{N(N-1)\langle\psi, S_1^2S_2^2\psi\rangle + N\langle\psi, S_1^4\psi\rangle}  \quad  (\text{``leading terms''}) \\
  &\quad + \boxed{\sum_{p=1}^{p_0} N\sum N^{-p}(\langle\psi, S_1^2 V^{(p)}_N(x_{i_1}-x_{i_2},\cdots,x_{i_1}-x_{i_{1+p}})\psi\rangle+c.c )} \quad (\text{``error terms''})
\end{split}
\end{equation}
where c.c denotes ``complex conjugate``.  We keep the ``leading terms'' in RHS of \eqref{k=2} and look for a lower bound of the terms in the last line (``error terms''). As in \cite{CPquintic}, let $\dot{S}_j=(\dot{S}_{j,i})_{i=1}^d:=i\nabla_{x_j}$, then $S_j^2=1+\dot{S}_j^2=1-\Delta_{x_j}$. For sufficiently large $N$, by the permutation symmetry of $\psi$:
\begin{align}
  &N\sum N^{-p}(\langle\psi, S_1^2 V^{(p)}_N(x_{i_1}-x_{i_2},\cdots,x_{i_1}-x_{i_{1+p}})\psi\rangle+c.c )     \notag \\
  &= N^{1-p}(N-1)\cdots(N-p-1)(\langle\psi, S_1^2V^{(p)}_N(x_2-x_3,\cdots,x_2-x_{2+p})\psi\rangle +c.c )  \notag \\
  &\quad +N^{1-p}(N-1)\cdots(N-p)(\langle\psi, S_1^2V^{(p)}_N(x_1-x_2,\cdots,x_1-x_{1+p})\psi\rangle+c.c )    \notag \\
  &\geq C^2N^2(\langle\psi, S_1^2V^{(p)}_N(x_2-x_3,\cdots,x_2-x_{2+p})\psi\rangle+c.c )  \notag \\
  &\quad +CN(\langle\psi, (1+\dot{S}_1^2)V^{(p)}_N(x_1-x_2,\cdots,x_1-x_{1+p})\psi\rangle+c.c )   \notag \\
  &\geq CN(\langle\psi, \dot{S}_1^2 V^{(p)}_N(x_1-x_2,\cdots,x_1-x_{1+p})\psi\rangle+c.c )   \label{drop positive potential} \\
  &\geq -CN\big|\langle\psi, \dot{S}_1\big(\nabla_{x_1}V^{(p)}_N(x_1-x_2,\cdots,x_1-x_{1+p})\big)\psi\rangle\big|    \notag \\
  &\geq -CN\rho\big|\langle\psi, S_1^2\psi\rangle\big|-\frac{CN}{\rho}\big|\langle\psi, |\nabla_{x_1}V^{(p)}_N|^2\psi\big|   \notag  \\
  &\geq -CN\rho\big|\langle\psi, S_1^2\psi\rangle\big|-\frac{CN}{\rho}\|\nabla V^{(p)}_N\|^2_{L^{\infty}(\mathbb{R}^{dp})}\langle\psi, S_1^2S_2^2\psi\rangle   \label{apply lemma sobolev inequality}   \\
  &=-CN\rho\big|\langle\psi, S_1^2\psi\rangle\big|-\frac{CN^{1+(2pd+2)\beta}}{\rho}\|\nabla V^{(p)}\|^2_{L^{\infty}(\mathbb{R}^{dp})}\langle\psi, S_1^2S_2^2\psi\rangle   \notag
\end{align}
To obtain \eqref{drop positive potential}, we dropped positive terms using the positivity of $V^{(p)}_N$; $\rho>0$ is arbitrary and we've applied Lemma \ref{lem: sobolev inequality} to obtain \eqref{apply lemma sobolev inequality}. Thus 
\small
\begin{equation*}
 \begin{split}
 &\quad \langle \psi, (H_N+N)^2\psi \rangle    \\
 &\geq  N(N-1)\langle\psi, S_1^2S_2^2\psi\rangle + N\langle\psi, S_1^4\psi\rangle-\sum_{p=1}^{p_0}\Big(CN\rho\big|\langle\psi, S_1^2\psi\rangle\big|+\frac{CN^{1+(2pd+2)\beta}}{\rho}\langle\psi, S_1^2S_2^2\psi\rangle \Big)   \\
 &\geq C^2N^2\langle\psi, S_1^2S_2^2\psi\rangle,  \qquad \text{for big $N$ with $\beta<\frac{1}{2dp_0+2}$.}
\end{split}
\end{equation*}
\normalsize

The basic idea in the proof is to derive a lower bound of the ``error terms'' which is further dominated by the ``leading terms''. Now assume \eqref{finite priori energy bound} is true for all $k\leq n$, then we prove it holds for $k=n+2$. 
For big enough $N$, by the induction assumption, we have (since $H_N+N$ is self-adjoint):
\begin{equation}      \label{by induction}
 \langle \psi, (H_N+N)^{n+2}\psi \rangle  \geq C^nN^n\langle \psi, (H_N+N)S_1^2 \cdots S_n^2(H_N+N)\psi \rangle
\end{equation}
Then it follows that
\begin{equation}      \label{two group identity}
 \begin{split}
  & \langle \psi, (H_N+N)S_1^2 \cdots S_n^2(H_N+N)\psi \rangle   \\
  & =\langle \psi, h_1S_1^2 \cdots S_n^2h_1\psi \rangle + \langle \psi, h_1S_1^2 \cdots S_n^2h_2\psi \rangle   \\
  & \quad + \langle \psi, h_2S_1^2 \cdots S_n^2h_1\psi \rangle+\langle \psi, h_2S_1^2 \cdots S_n^2h_2\psi \rangle   \\
\end{split}
\end{equation}
Note that $h_2S_1^2 \cdots S_n^2h_2 \geq 0$. Combine \eqref{by induction} and \eqref{two group identity} and use the permutation symmetry of $\psi$ to get:
\small
\begin{equation}    \label{n+2 case}
 \begin{split}
  & \langle \psi, (H_N+N)^{n+2}\psi \rangle    \\
  & \geq C^nN^n\big(\langle \psi, h_1S_1^2 \cdots S_n^2h_1\psi \rangle + \langle \psi, h_1S_1^2 \cdots S_n^2h_2\psi \rangle +\langle \psi, h_2S_1^2 \cdots S_n^2h_1\psi \rangle \big)\\
  &\geq C^nN^n(N-n)(N-n-1)\langle \psi, S_1^2 \cdots S_{n+2}^2\psi \rangle + C^nN^n(N-n)n\langle \psi, S_1^4S_2^2 \cdots S_{n+1}^2\psi \rangle    \\
  & \quad + \sum_{p=1}^{p_0} C^nN^n\frac{(N-n)}{N^{p}}\sum_{i_1<\cdots<i_{p}}\big(\langle \psi, S_1^2 \cdots S_{n+1}^2V^{(p)}_N(x_{i_1}-x_{i_2},\cdots,x_{i_1}-x_{i_{1+p}})\psi \rangle+c.c \big)
 \end{split}
\end{equation}
\normalsize
The last term above is the error term we want to control. Again by permutation symmetry of $\psi$, we can further break down the interactions of the last term in \eqref{n+2 case} for big enough $N$:
\small
\begin{align}
  & \langle \psi, (H_N+N)^{n+2}\psi \rangle    \notag  \\
  & \geq  C^{n+2}N^{n+2} \langle \psi, S_1^2 \cdots S_{n+2}^2\psi \rangle + C^{n+1}N^{n+1}\big( \langle \psi, S_1^4 \cdots S_{n+1}^2\psi \rangle    \label{leading terms}   \\
  & \quad +\sum_{p=1}^{p_0} C^nN^{n-p}(N-n)(N-n-1)\cdots(N-n-1-p)  \label{group h1 interactions}  \\
  & \qquad \times \langle \psi, S_1^2 \cdots S_{n+1}^2V^{(p)}_N(x_{n+2}-x_{n+3},\cdots,x_{n+2}-x_{n+2+p})\psi \rangle  \notag \\
  & \quad + \sum_{p=1}^{p_0}\sum_{j=2}^{1+p}C^nN^{n-p}(N-n)(N-n-1)\cdots(N-n-p+j-2)(n+1)n\cdots(n+3-j)   \label{intergroup interactions}  \\
  & \qquad\times \langle \psi, S_1^2 \cdots S_{n+1}^2V^{(p)}_N(x_1-x_2,\cdots,x_1-x_{j-1},x_1-x_{n+2},\cdots,x_1-x_{n+3+p-j})\psi \rangle     \notag    \\
  & \quad + \sum_{p=1}^{p_0} C^nN^{n-p}(N-n)(n+1)n\cdots(n+1-p)   \label{group h2 interactions} \\
  & \qquad \times \langle \psi, S_1^2 \cdots S_{n+1}^2V^{(p)}_N(x_1-x_2,x_1-x_3,\cdots,x_1-x_{n+1},\cdots,x_1-x_{1+p})\psi \rangle     \notag 
\end{align}
\normalsize
We split terms as follows: \eqref{group h1 interactions}-\eqref{group h2 interactions}: we put the ``first'' $n$ particles in group $h_2$ and the ``rest'' in group $h_1$. Then the term \eqref{group h1 interactions} comes exclusively from group $h_1$ interactions; and term \eqref{group h2 interactions} is contributed purely by group $h_2$ interactions; \eqref{intergroup interactions} are mixture of inter-group and inner-group ($h_2$) interactions. We will handle each of these terms individually.

Our goal is to show that \eqref{group h1 interactions}-\eqref{group h2 interactions} are dominated by \eqref{leading terms}. Since $p_0$ is a finite number and $N$ can be arbitrarily large, thus it suffices to show the goal for a single $p$ with $1\leq p\leq p_0$.

First of all, term \eqref{group h1 interactions} is non-negative and thus can be dropped for purpose of a lower bound. To see this, note $V^{(p)}_N \geq 0$ and commutes with all derivatives $S_1, S_2, \cdots,S_{n+1}$, we have
\begin{align*}
 & \langle \psi, S_1^2 \cdots S_{n+1}^2V^{(p)}_N(x_{n+2}-x_{n+3},\cdots,x_{n+2}-x_{n+2+p})\psi \rangle    \\
 & \quad =\int d\textbf{x}_N V^{(p)}_N(x_{n+2}-x_{n+3},\cdots,x_{n+2}-x_{n+2+p})\big|(S_1\cdots S_{n+1}\psi)(\textbf{x}_N)\big|^2 \geq 0
\end{align*}
For \eqref{intergroup interactions}, the sum over $j$ consists of $p$ terms (if $1+p>n+1$, \eqref{intergroup interactions} is a sum of $n$ terms, and \eqref{group h2 interactions} vanishes). Consider the first term which corresponding to $j=2$:
\small
\begin{align}
 & \langle \psi, S_1^2 \cdots S_{n+1}^2V^{(p)}_N(x_1-x_{n+2},x_1-x_{n+3},\cdots,x_1-x_{n+1+p})\psi \rangle  \notag  \\
 & \geq \langle \psi, S_{n+1}\cdots S_2V^{(p)}_N(x_1-x_{n+2},x_1-x_{n+3},\cdots,x_1-x_{n+1+p})S_2\cdots S_{n+1}\psi \rangle    \notag  \\
 & \quad - \big|\langle \psi, S_{n+1}\cdots S_2\dot{S}_1\big(\nabla_{x_1}V^{(p)}_N(x_1-x_{n+2},x_1-x_{n+3},\cdots,x_1-x_{n+1+p})\big)S_2\cdots S_{n+1}\psi \rangle \big|     \notag  \\
 & \geq - \big|\langle \psi, S_{n+1}\cdots S_2\dot{S}_1\big(\nabla_{x_1}V^{(p)}_N(x_1-x_{n+2},x_1-x_{n+3},\cdots,x_1-x_{n+1+p})\big)S_2\cdots S_{n+1}\psi \rangle \big|   \label{second ineq j=2} \\
 & \geq -\rho\big|\langle \psi, S_{n+1}^2 \cdots S_1^2\psi \rangle\big|   \label{third ineq j=2}  \\
 & \qquad -\frac{1}{\rho}\big|\langle \psi, S_{n+1}\cdots S_2 \big|\nabla_{x_1}V^{(p)}_N(x_1-x_{n+2},x_1-x_{n+3},\cdots,x_1-x_{n+1+p})\big|^2S_2\cdots S_{n+1}\psi \rangle \big|     \notag  \\
 & \geq -\rho\big|\langle \psi, S_{n+1}^2 \cdots S_1^2\psi \rangle\big|-\frac{1}{\rho}\big\|\nabla_{x_1}V^{(p)}_N\big\|_{L^{\infty}(\mathbb{R}^{dp})}^2 \langle \psi, S_1^2\cdots S_{n+2}^2\psi \rangle      \label{last ineq j=2} \\
 & =-\rho\big|\langle \psi, S_{n+1}^2 \cdots S_1^2\psi\rangle\big| -\frac{CN^{(2pd+2)\beta}}{\rho}\langle \psi, S_1^2\cdots S_{n+2}^2\psi \rangle     \notag
\end{align}
\normalsize
which are dominated by the leading terms in \eqref{leading terms} when $\beta<\frac{1}{2pd+2}$ (which is fine since $p$ is at most $p_0$). The constant $C$ depends on $\|\nabla_{x_1}V^{(p)}\big\|_{L^{\infty}(\mathbb{R}^{dp})}^2$. Here we use the positivity of $V^{(p)}_N$ to obtain \eqref{second ineq j=2}. Note $\dot{S}_j^2=S_j^2-1<S_j^2$, $\rho>0$ in \eqref{third ineq j=2} can be chosen arbitrarily, and in \eqref{last ineq j=2} we have applied \eqref{new lemma} with $l=2$. 

For the term corresponding to $j=3$ in \eqref{intergroup interactions}:
\begin{align}
  &\langle \psi, S_1^2 \cdots S_{n+1}^2V^{(p)}_N(x_1-x_2,x_1-x_{n+2},\cdots,x_1-x_{n+p})\psi \rangle   \notag \\
  &\geq \langle\psi, S_{n+1}\cdots S_3V^{(p)}_N(x_1-x_2,x_1-x_{n+2},\cdots,x_1-x_{n+p})S_3\cdots S_{n+1}\psi\rangle    \label{j=3 positive drop} \\
  &\quad +\langle\psi, S_{n+1}\cdots S_3(\dot{S}_1^2+\dot{S}_2^2)V^{(p)}_N(x_1-x_2,x_1-x_{n+2},\cdots,x_1-x_{n+p})S_3\cdots S_{n+1}\psi\rangle  \label{j=3 can treat as previous} \\
  &\quad +\boxed{\langle\psi, S_{n+1}\cdots S_3\dot{S}_2\dot{S}_1[\dot{S}_1\dot{S}_2, V^{(p)}_N(x_1-x_2,x_1-x_{n+2},\cdots,x_1-x_{n+p})]S_3\cdots S_{n+1}\psi\rangle.}   \label{j=3 to bound new error}
\end{align}
We know \eqref{j=3 positive drop} is positive and thus can be discarded for a lower bound. \eqref{j=3 can treat as previous} can be treated as in the case $j=2$. Note that 
\begin{equation*}
 [\dot{S_1}\dot{S_2}, V^{(p)}_N]=[\dot{S_1}, V^{(p)}_N]\dot{S_2}+\dot{S_1}[\dot{S_2}, V^{(p)}_N]
\end{equation*}
Hence
\small
\begin{align*}
 \eqref{j=3 to bound new error}&\geq -\big|\langle\psi, S_{n+1}\cdots S_3\dot{S}_2\dot{S}_1[\dot{S}_1, V^{(p)}_N(x_1-x_2,x_1-x_{n+2},\cdots,x_1-x_{n+p})]\dot{S}_2S_3\cdots S_{n+1}\psi\rangle\big|  \\
 &\quad - \big|\langle\psi, S_{n+1}\cdots S_3\dot{S}_2\dot{S}_1^2[\dot{S}_2,V^{(p)}_N(x_1-x_2,x_1-x_{n+2},\cdots,x_1-x_{n+p})]S_3\cdots S_{n+1}\psi\rangle\big|  \\
 &\geq -\rho_1\big|\langle\psi, S_{n+1}^2\cdots S_2^2S_1^2\psi\big|-\frac{1}{\rho_1}\|\nabla V^{(p)}_N\|^2_{L^{\infty}(\mathbb{R}^{dp})}\langle\psi, S_1^2\cdots S_{n+2}^2\psi\rangle      \\
 &\quad -\rho_2\big|\langle\psi, S_{n+1}^2\cdots S_2^2S_1^4\psi\big|-\frac{1}{\rho_2}\|V^{(p)}_N\|^2_{W^{1,\infty}(\mathbb{R}^{dp})}\langle\psi, S_1^2\cdots S_{n+1}^2\psi\rangle
\end{align*}
\normalsize
We shall prove the estimate for general terms in \eqref{intergroup interactions} by running a one-step induction in $j$. Note that the $j$-th term $T_j$ in \eqref{intergroup interactions}, with $2\leq j \leq 1+p$, has the coefficient of order $O(N^{n-j+3})$. Assume we have the desired bound for $j$ from $2$ through $j_0$, that is
\begin{align*}
  &T_2\geq -(CN)^{n+1+(2pd+2)\beta}\langle\psi, S_1^2\cdots S_{n+2}^2\psi\rangle,     \\
  &T_3\geq -(CN)^{n+(2pd+2)\beta}\langle\psi, S_1^2\cdots S_{n+2}^2\psi\rangle,     \\
  &\cdots   \\
  &T_{j_0}\geq -(CN)^{n-j_0+3+\delta_{j_0}(\beta)}\langle\psi, S_1^2\cdots S_{n+2}^2\psi\rangle.
\end{align*}
Function $\delta_j(\beta)$ ($2\leq j\leq j_0$) take values in interval $(0,1)$, this small piece of power on $N$ is contributed by appropriate norm of $V^{(p)}_N$. By the cases we have already checked, we know that $j_0\geq 3$. Rewrite the main part of $T_{j_0+1}$ as the following 
\begin{align}
  &\langle \psi, S_1^2 \cdots S_{n+1}^2V^{(p)}_N(x_1-x_2,\cdots,x_1-x_{j_0},x_1-x_{n+2},\cdots,x_1-x_{n+3+p-j_0-1})\psi \rangle     \label{general j} \\
  &=\langle\psi, (1+\dot{S}_1^2)\cdots (1+\dot{S}_{j_0}^2) S_{j_0+1}^2\cdots S_{n+1}^2 V^{(p)}_N \psi\rangle  \notag \\
  &=\langle \psi, S_{n+1}\cdots S_{j_0+1}V^{(p)}_N S_{j_0+1}\cdots S_{n+1}\psi\rangle  \notag \\
  &\quad +\sum_{1\leq r\leq j_0}\langle \psi, \dot{S}_r^2S_{j_0+1}^2\cdots S_{n+1}^2V^{(p)}_N\psi\rangle   \notag \\
  &\quad +\sum_{1\leq r_1<r_2\leq j_0}\langle \psi, \dot{S}_{r_1}^2\dot{S}_{r_2}^2S_{j_0+1}^2\cdots S_{n+1}^2V^{(p)}_N \psi\rangle  \notag  \\
  &\quad +\cdots   \notag \\
  &\quad +\sum_{1\leq r\leq j_0}\langle \psi, \dot{S}_1^2\cdots \hat{\dot{S}}_r^2\cdots \dot{S}_{j_0}^2S_{j_0+1}^2\cdots S_{n+1}^2V^{(p)}_N \psi\rangle   \notag  \\
  &\quad +\boxed{\langle \psi, \dot{S}_1^2\dot{S}_2^2\cdots \dot{S}_{j_0}^2S_{j_0+1}^2\cdots S_{n+1}^2V^{(p)}_N \psi\rangle}   \notag
\end{align}
where a hat denotes a missing term. 
Thanks to the induction assumption we may conclude that the lower bounds of all the terms in the RHS of \eqref{general j} are controlled by the leading terms in \eqref{leading terms} except the last term. By the definition of $\dot{S}_j$, we can prove the following decomposition:
\small
\begin{equation*}
 [\dot{S}_1\cdots \dot{S}_{j-1}, V^{(p)}_N]=[\dot{S}_1, V^{(p)}_N]\dot{S}_2\cdots \dot{S}_{j-1}+\dot{S}_1[\dot{S}_2, V^{(p)}_N]\dot{S}_3\cdots \dot{S}_{j-1}+\cdots+\dot{S}_1\cdots \dot{S}_{j-2}[\dot{S}_{j-1}, V^{(p)}_N]
\end{equation*}
\normalsize
Therefore
\begin{align*}
 &\langle \psi, \dot{S}_1^2\dot{S}_2^2\cdots \dot{S}_{j_0}^2S_{j_0+1}^2\cdots S_{n+1}^2V^{(p)}_N \psi\rangle     \\
 &=\langle \psi, S_{n+1}\cdots S_{j_0+1}\dot{S}_{j_0}\cdots\dot{S}_1[\dot{S}_1\cdots\dot{S}_{j_0}, V^{(p)}_N]S_{j_0+1}\cdots S_{n+1}\psi\rangle    \\
 &=\langle \psi, S_{n+1}\cdots S_{j_0+1}\dot{S}_{j_0}\cdots\dot{S}_1([\dot{S}_1, V^{(p)}_N]\dot{S}_2\cdots \dot{S}_{j_0})S_{j_0+1}\cdots S_{n+1}\psi\rangle      \\
 &\quad +\langle \psi, S_{n+1}\cdots S_{j_0+1}\dot{S}_{j_0}\cdots\dot{S}_2\dot{S}_1(\dot{S}_1[\dot{S}_2, V^{(p)}_N]\dot{S}_3\cdots \dot{S}_{j_0})S_{j_0+1}\cdots S_{n+1}\psi\rangle   \\
 &\quad + \cdots     \\
 &\quad +\langle \psi, S_{n+1}\cdots S_{j_0+1}\dot{S}_{j_0}\cdots\dot{S}_1(\dot{S}_1 \cdots\dot{S}_{j_0-2}[\dot{S}_{j_0-1}, V^{(p)}_N]\dot{S}_{j_0})S_{j_0+1}\cdots S_{n+1}\psi\rangle    \\
 &\quad +\boxed{\langle \psi, S_{n+1}\cdots S_{j_0+1}\dot{S}_{j_0}\cdots\dot{S}_1(\dot{S}_1 \cdots\dot{S}_{j_0-1}[\dot{S}_{j_0}, V^{(p)}_N])S_{j_0+1}\cdots S_{n+1}\psi\rangle}
\end{align*}
Again, by induction assumption all terms in the RHS of the above are bounded as we need except the one in the last line. However, we can reduce it into previous case since (for $j\geq 4$):
\begin{equation}    \label{reduce number of dot S}
 \dot{S}_1\cdots \dot{S}_{j-2}[\dot{S}_{j-1}, V^{(p)}_N]=\dot{S}_1\cdots \dot{S}_{j-3}[\dot{S}_{j-1}, (\dot{S}_{j-2}V^{(p)}_N)]+\dot{S}_1\cdots \dot{S}_{j-3}[\dot{S}_{j-1}, V^{(p)}_N]\dot{S}_{j-2}                                                                                                                             
\end{equation}
Then
\begin{align*}
 &\langle \psi, S_{n+1}\cdots S_{j_0+1}\dot{S}_{j_0}\cdots\dot{S}_1(\dot{S}_1 \cdots\dot{S}_{j_0-1}[\dot{S}_{j_0}, V^{(p)}_N])S_{j_0+1}\cdots S_{n+1}\psi\rangle    \\
 &=\langle \psi, S_{n+1}\cdots S_{j_0+1}\dot{S}_{j_0}\cdots\dot{S}_1\big(\dot{S}_1\cdots \dot{S}_{j_0-2}(\nabla_{j_0}\nabla_{j_0-1}V^{(p)}_N)\big)S_{j_0+1}\cdots S_{n+1}\psi\rangle    \\
 &\quad +\langle \psi, S_{n+1}\cdots S_{j_0+1}\dot{S}_{j_0}\cdots\dot{S}_1\big(\dot{S}_1\cdots \dot{S}_{j_0-2}(i\nabla_{j_0}V^{(p)}_N)\dot{S}_{j_0-1}\big)S_{j_0+1}\cdots S_{n+1}\psi\rangle
\end{align*}
Both terms appear in the previous induction, but with one order higher derivative on $V^{(p)}_N$. Since $\|V^{(p)}_N\|_{W^{j_0-1,\infty}(\mathbb{R}^{dp})}^2 \sim N^{2(pd\beta+(j_0-1)\beta)}\|V^{(p)}\|_{W^{j_0-1,\infty}(\mathbb{R}^{dp})}^2$, we may set $\delta_{j_0}(\beta)=2	pd\beta+2(j_0-1)\beta<1$ (with $j_0 \geq 3$ since \eqref{reduce number of dot S} requires $j\geq 4$). In general, the $j$-th term $T_j$ in \eqref{intergroup interactions} has the following bound:
\begin{equation}
 T_j \geq -N^{n-j+3}N^{2(pd\beta+(j-2)\beta)}\langle\psi, S_1^2\cdots S_{n+2}^2\psi\rangle, \quad \text{for $V^{(p)}_N \in W^{j-2, \infty}$, $3\leq j \leq 1+p$}
\end{equation}
And $T_2 \geq -N^{n+1+(2dp+2)\beta}\langle\psi, S_1^2\cdots S_{n+2}^2\psi\rangle$. Admissible value for $\beta$ will not send the total power of $N$ to be greater than or equal to $n+2$. Thus for each $p\leq p_0$, $\beta$ can take values in $(0,\frac{1}{2dp+2})$, which is actually determined by the base case $j=2$.  

Finally, the term \eqref{group h2 interactions} is actually a special case in \eqref{intergroup interactions} corresponding to $j=2+p$, thus can be handled as above (the highest regularity of the potential is used here). This completes the proof.  
\end{proof}
\vspace{1mm}

\begin{lemma}    \label{lem: sobolev inequality}
 For $d\geq 1$, $m\geq 1$ and $\psi \in L_s^2(\mathbb{R}^{md})$, we have
\begin{equation}     \label{old lemma}
 \langle \psi, V(x_1,\cdots,x_m)\psi \rangle \leq \big\| V \big\|_{L_{x_1,\cdots,x_m}^r} \langle \psi, (1-\Delta_{x_1})\cdots(1-\Delta_{x_m})\psi \rangle    
\end{equation}
for any $r>2$ if $d\leq \frac{2}{m}$, and for $r \geq md$ if $d>\frac{2}{m}$.  Moreover for any $1 \leq l \leq m$, we have
\begin{equation}     \label{new lemma}
 \langle \psi, V(x_1,\cdots,x_m)\psi \rangle \leq \big\| V \big\|_{L_{x_1,\cdots,x_m}^{\infty}} \langle \psi, \prod_{j=1}^l (1-\Delta_{x_j}) \psi \rangle_{L_{x_1,\cdots,x_m}^2}   
\end{equation}
\end{lemma}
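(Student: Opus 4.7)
The plan is to reduce both bounds to classical Sobolev embedding via an initial H\"older/duality step.

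For the first inequality, I would apply H\"older in the $(x_{1},\ldots,x_{m})$ variables to split $V$ from $|\psi|^{2}$:
$$
\langle\psi,V\psi\rangle=\int V\,|\psi|^{2}\,dx_{1}\cdots dx_{m}\leq \|V\|_{L^{r}_{x_1,\ldots,x_m}}\,\|\psi\|_{L^{2r'}_{x_1,\ldots,x_m}}^{2},
$$
with $r'=r/(r-1)$. The remaining task is to prove
$$
\|\psi\|_{L^{2r'}(\mathbb{R}^{md})}^{2}\lesssim \langle\psi,(1-\Delta_{x_1})\cdots(1-\Delta_{x_m})\psi\rangle
$$
for the stated range of $r$. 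I would do this by iteratively invoking the one-variable Sobolev embedding $H^{1}(\mathbb{R}^{d})\hookrightarrow L^{q}(\mathbb{R}^{d})$ in each coordinate $x_{j}$, treating the remaining coordinates as frozen parameters; this is available for any $q\in[2,\infty]$ when $d=1$ and for any $q\in[2,\infty)$ when $d=2$. At each step I would use Minkowski's integral inequality to swap a mixed $L^{q_{1}}_{x_{1}}\cdots L^{q_{m}}_{x_{m}}$ norm back into a single $L^{p}_{x_{1},\ldots,x_{m}}$ norm. The two regimes in the hypothesis on $r$ reflect exactly which Sobolev exponents are available: when $md\leq 2$ any large exponent is admissible, so any $r>2$ (equivalently $2r'<\infty$) suffices; when $md>2$ the sharp scaling forces $2r'\leq 2md/(md-1)$, i.e.\ $r\geq md$.

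For the second inequality the situation is easier: I would pull $V$ out in $L^{\infty}$,
$$
\langle\psi,V\psi\rangle\leq \|V\|_{L^{\infty}}\langle\psi,\psi\rangle,
$$
and then observe that each $1-\Delta_{x_{j}}$ is self-adjoint with spectrum contained in $[1,\infty)$ and that these operators commute. Consequently $\prod_{j=1}^{l}(1-\Delta_{x_{j}})\geq 1$ as a self-adjoint operator (its binomial expansion is a sum of non-negative commuting operators plus the identity), so
$$
\langle\psi,\psi\rangle \leq \langle\psi,\textstyle\prod_{j=1}^{l}(1-\Delta_{x_{j}})\psi\rangle,
$$
giving the claim.

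I expect the main obstacle to be the Sobolev step in the first part: the norm $\langle\psi,\prod_{j}(1-\Delta_{x_{j}})\psi\rangle^{1/2}$ is an anisotropic mixed-derivative norm, not the isotropic $H^{m}(\mathbb{R}^{md})$-norm, so one cannot simply apply the global Sobolev embedding on $\mathbb{R}^{md}$. The order in which the one-dimensional embeddings are carried out, and the corresponding Minkowski reorderings, must be chosen carefully (integrating the smaller exponent variable first) to land in the target $L^{2r'}$ norm without losing regularity.
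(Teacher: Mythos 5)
Your overall reduction via H\"older is sound, but you have missed the one-line observation that makes the paper's proof trivial, and as a result you flag a non-existent obstacle. The paper splits $\langle\psi,V\psi\rangle$ by a three-factor H\"older into $\|V\|_{L^r}\|\psi\|_{L^2}\|\psi\|_{L^q}$ with $\tfrac1q+\tfrac1r+\tfrac12=1$, applies the \emph{global} Sobolev embedding $H^1(\mathbb{R}^{md})\hookrightarrow L^q(\mathbb{R}^{md})$, and then passes from the isotropic $H^1(\mathbb{R}^{md})$ norm to the anisotropic mixed-derivative quantity via the pointwise Fourier-side inequality
\begin{equation*}
 1+|\xi_1|^2+\cdots+|\xi_m|^2\ \leq\ \prod_{j=1}^m\bigl(1+|\xi_j|^2\bigr),
\end{equation*}
which gives $\|\psi\|_{H^1(\mathbb{R}^{md})}^2\le\langle\psi,(1-\Delta_{x_1})\cdots(1-\Delta_{x_m})\psi\rangle$ immediately. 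So, contrary to your claim, one \emph{can} ``simply apply the global Sobolev embedding on $\mathbb{R}^{md}$''; the anisotropic norm dominates the isotropic $H^1$ norm, and the iterative-per-coordinate-embedding-plus-Minkowski machinery you propose is unnecessary. You leave that iterative argument as a sketch, and while a careful version of it can be made to work, it is considerably more delicate than you indicate (the direction of each Minkowski swap depends on the ordering of exponents, and this needs to be tracked at every step).

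There is also an arithmetic slip in your constraint analysis. The sharp Sobolev exponent for $H^1(\mathbb{R}^{md})$ when $md>2$ is $p^*=\tfrac{2md}{md-2}$, not $\tfrac{2md}{md-1}$. With your two-factor H\"older you would need $2r'\le\tfrac{2md}{md-2}$, which translates to $r\ge md/2$, a weaker condition than the lemma's $r\ge md$. The lemma's hypothesis is of course still sufficient (it sits inside the admissible range), but the threshold you quote is not the one Sobolev scaling produces; it happens to reproduce $r\ge md$ only because the number $\tfrac{2md}{md-1}$ was written down incorrectly.

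Your proof of \eqref{new lemma} via the operator inequality $\prod_{j=1}^l(1-\Delta_{x_j})\ge 1$ is correct and is essentially the same content as the paper's Fourier-side step $1\le\prod_{j=1}^l(1+|\xi_j|^2)$, just phrased at the operator level rather than on the symbol.
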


\begin{proof}
 By H\"older inequality with $\frac{1}{q}+\frac{1}{r}+\frac{1}{2}=1$ and Sobolev embedding we have
\begin{align*}
    &\quad \langle \psi, V(x_1,\cdots,x_m)\psi \rangle    \\
    &\leq \| V \|_{L_{x_1,\cdots,x_m}^r} \|\psi\|_{L_{x_1,\cdots,x_m}^2} \|\psi\|_{L_{x_1,\cdots,x_m}^q}      \\
    &\leq \| V \|_{L_{x_1,\cdots,x_m}^r} \|\psi\|_{L_{x_1,\cdots,x_m}^2} \|\psi\|_{H_{x_1,\cdots,x_m}^1}      \\
    &\leq \| V \|_{L_{x_1,\cdots,x_m}^r}  \|\psi\|_{H_{x_1,\cdots,x_m}^1}^2      \\
    &=\| V \|_{L_{x_1,\cdots,x_m}^r} \|(1+|\xi_1|^2+|\xi_2|^2+\cdots+|\xi_m|^2)^{\frac{1}{2}}\hat{\psi}\|^2_{L^2}   \\
    &\leq \| V \|_{L_{x_1,\cdots,x_m}^r} \|(1+|\xi_1|^2)^{\frac{1}{2}}(1+|\xi_2|^2)^{\frac{1}{2}}\cdots(1+|\xi_m|^2)^{\frac{1}{2}}\hat{\psi}\|^2_{L^2}   \\
    &= \| V \|_{L_{x_1,\cdots,x_m}^r} \langle \psi, (1-\Delta_{x_1})\cdots(1-\Delta_{x_m})\psi \rangle .
\end{align*}
The Sobolev embedding requires that $q$ is finite and satisfying $2\leq q \leq \frac{2md}{md-2}$, which is equivalent to $2\leq q \leq \frac{2md}{md-2}$ when $d>\frac{2}{m}$ and $2\leq q<\infty$ when $d \leq \frac{2}{m}$. From the H\"older conjugate relations $\frac{1}{r}=\frac{1}{2}-\frac{1}{q}$, we know the constrains on $r$ must be $r>2$ if $d\leq \frac{2}{m}$ and $r \geq md$ if $d>\frac{2}{m}$.

To prove \eqref{new lemma}, choose $q=2, r=\infty$ in the above proof, then replace $L^2$ norm by $H^1$ norm in the first $l$ variables to obtain:
\begin{align*}
    &\quad \langle \psi, V(x_1,\cdots,x_m)\psi \rangle    \\
    &\leq \| V \|_{L_{x_1,\cdots,x_m}^{\infty}} \|\psi\|^2_{L_{x_1,\cdots,x_m}^2}     \\
    &\leq \| V \|_{L_{x_1,\cdots,x_m}^{\infty}} \|\psi\|^2_{H_{x_1,\cdots,x_l}^1 L_{x_{l+1},\cdots,x_m}^2}      \\
    &= \| V \|_{L_{x_1,\cdots,x_m}^{\infty}} \|(1+|\xi_1|^2+|\xi_2|^2+\cdots+|\xi_l|^2)^{\frac{1}{2}}\hat{\psi}\|^2_{L^2_{\xi_1,\cdots,\xi_l}L^2_{x_{l+1},\cdots,x_m}}   \\
    &\leq \| V \|_{L_{x_1,\cdots,x_m}^{\infty}} \|(1+|\xi_1|^2)^{\frac{1}{2}}(1+|\xi_2|^2)^{\frac{1}{2}}\cdots(1+|\xi_l|^2)^{\frac{1}{2}}\hat{\psi}\|^2_{L^2_{\xi_1,\cdots,\xi_l}L^2_{x_{l+1},\cdots,x_m}}   \\
    &=\big\| V \big\|_{L_{x_1,\cdots,x_m}^{\infty}} \langle \psi, (1-\Delta_{x_1})\cdots(1-\Delta_{x_l}) \psi \rangle_{L_{x_1,\cdots,x_m}^2}
\end{align*}
Here the Fourier transform and its inverse transform of $\psi$ are taken only on the first $l$ variables with $1 \leq l \leq m$. 
\end{proof}
\vspace{2mm}

After regularization of the initial data, we have
\begin{cor}[A priori bound]     \label{cor:finite a priori energy bound}
Let $\chi$ be a bump function with support on $[0,1]$ and $\kappa>0$. Define
\begin{equation}        \label{definition of psitilde}
 \tilde\psi_N:=\frac{\chi(\frac{\kappa}{N}H_N)\psi_N}{\big\|\chi(\frac{\kappa}{N}H_N)\psi_N \big\|}
\end{equation}
Let $\tilde\psi_{N,t}=e^{-itH_N}\tilde\psi_N$ and $\tilde\gamma_{N,t}^{(k)}$ be the corresponding $k$-marginal density. Then there exists a constant $\tilde C>0$ depending on $\kappa, p_0, V^{(p)}$ for all $1\leq p\leq p_0$ but independent of $k$, $t$, and there exists an integer $N_0(k)$ for every $k\geq 1$, such that for all $N>N_0(k)$, we have
\begin{equation}     \label{finite a priori energy bound}
 Tr(1-\Delta_{x_1})\cdots(1-\Delta_{x_k})\tilde\gamma_{N,t}^{(k)} \leq \tilde C^k
\end{equation}
\end{cor}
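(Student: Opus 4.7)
The plan is to combine three ingredients: the spectral cutoff built into $\tilde\psi_N$, conservation of energy under $e^{-itH_N}$, and the a priori bound from Proposition~\ref{prop:finite priori energy bound}. The proof is essentially a one-line functional-calculus computation once these pieces are in place; the only subtle point is the lower bound on the normalization constant $\|\chi(\tfrac{\kappa}{N}H_N)\psi_N\|$.

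First I would observe that, since $\chi$ is supported in $[0,1]$, the operator inequality
\begin{equation*}
(H_N+N)^k\,\chi\!\left(\tfrac{\kappa}{N}H_N\right)^{2}\ \le\ \left(\tfrac{N}{\kappa}+N\right)^{k}\chi\!\left(\tfrac{\kappa}{N}H_N\right)^{2}
\end{equation*}
holds by spectral calculus. Applied to $\psi_N$ and then divided by $\|\chi(\tfrac{\kappa}{N}H_N)\psi_N\|^{2}$, this gives $\langle \tilde\psi_N,(H_N+N)^k\tilde\psi_N\rangle \le (1+\kappa^{-1})^k N^k$. Because $\chi(\tfrac{\kappa}{N}H_N)$ commutes with $H_N$, and hence with $e^{-itH_N}$, the vector $\tilde\psi_{N,t}=e^{-itH_N}\tilde\psi_N$ still lies in the same spectral range, and $\langle\tilde\psi_{N,t},(H_N+N)^k\tilde\psi_{N,t}\rangle = \langle\tilde\psi_{N},(H_N+N)^k\tilde\psi_{N}\rangle$, so the bound is preserved for all $t\in\mathbb{R}$.

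Next I would invoke Proposition~\ref{prop:finite priori energy bound} (valid for $N\ge N_0(k)$) to pass from the Hamiltonian to the Sobolev expression: there exists $C>0$ such that
\begin{equation*}
\langle \tilde\psi_{N,t},(1-\Delta_{x_1})\cdots(1-\Delta_{x_k})\tilde\psi_{N,t}\rangle \ \le\ \frac{1}{C^{k}N^{k}}\,\langle \tilde\psi_{N,t},(H_N+N)^{k}\tilde\psi_{N,t}\rangle \ \le\ \left(\tfrac{1+\kappa^{-1}}{C}\right)^{k}.
\end{equation*}
By the bosonic symmetry of $\tilde\psi_{N,t}$ and the definition of the marginal, the left-hand side equals $\mathrm{Tr}\,(1-\Delta_{x_1})\cdots(1-\Delta_{x_k})\tilde\gamma_{N,t}^{(k)}$, yielding the claimed bound with $\tilde C = C^{-1}(1+\kappa^{-1})$.

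The one piece of mildly nontrivial bookkeeping, which I expect to be the only real obstacle, is to ensure $\|\chi(\tfrac{\kappa}{N}H_N)\psi_N\|$ is bounded below independently of $N$, so that the normalization in \eqref{definition of psitilde} is harmless. Assuming $\chi\equiv 1$ on a neighborhood of $0$, this follows from Chebyshev's inequality applied to the nonnegative operator $H_N$: the hypothesis $\sup_N N^{-1}\langle\psi_N,H_N\psi_N\rangle\le M$ gives $\|\mathbf{1}_{\{H_N>N/\kappa'\}}\psi_N\|^{2}\le \kappa' M$ for any $\kappa'>0$, and choosing $\kappa'$ sufficiently small with respect to $\kappa$ makes $\|\chi(\tfrac{\kappa}{N}H_N)\psi_N\|^{2}\ge 1/2$. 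Absorbing this factor of $2$ into $\tilde C$ completes the proof.
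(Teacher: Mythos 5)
Your proof is correct and, in substance, follows the same two-step strategy as the paper: first control $\langle\tilde\psi_{N,t},(H_N+N)^k\tilde\psi_{N,t}\rangle$ by $O(N^k)$ using the spectral cutoff, then convert to the Sobolev trace via Proposition~\ref{prop:finite priori energy bound}. The one place where you depart is how you obtain the $O(N^k)$ energy bound for the regularized data: the paper expands $(H_N+N)^k\le 2^k(H_N^k+N^k)$ and then invokes Proposition~5.1 of \cite{ESY} as a black box for $\langle\tilde\psi_N,H_N^k\tilde\psi_N\rangle\le C(\kappa)^kN^k$, whereas you obtain the same thing directly from functional calculus, since $\chi(\tfrac{\kappa}{N}H_N)$ projects onto the spectral range $H_N\le N/\kappa$ (or $2N/\kappa$ under the normalization used in Lemma~\ref{lem: regularization of the initial data}, which makes $\chi\equiv 1$ on $[0,1]$ and supported in $[0,2]$ rather than supported in $[0,1]$). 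Your route is more self-contained and also avoids the unnecessary factor $2^k$. You also explicitly justify that $\|\chi(\tfrac{\kappa}{N}H_N)\psi_N\|$ is bounded below via Chebyshev and the hypothesis \eqref{condition for Nk order growth of Hamiltonian}; the paper silently absorbs this normalization (it is only addressed in the Appendix, in a different context). Both versions establish \eqref{finite a priori energy bound}; yours makes the $\kappa$-dependence of $\tilde C$ fully explicit.
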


\begin{proof}
 The proof is simple when we have Proposition \ref{prop:finite priori energy bound}, since we have
\begin{equation}
 \begin{split}
  Tr(1-\Delta_{x_1})\cdots(1-\Delta_{x_k})\tilde\gamma_{N,t}^{(k)}&=\langle \tilde\psi_{N,t}^{(k)}, S_1^2\cdots S_k^2 \tilde\psi_{N,t}^{(k)} \rangle     \\
  & \leq \frac{1}{C^k N^k} \langle \tilde\psi_{N,t}^{(k)}, (H_N+N)^k \tilde\psi_{N,t}^{(k)} \rangle    \\
  & \leq \frac{1}{C^k N^k} \langle \tilde\psi_{N,t}^{(k)}, 2^k(H_N^k+N^k) \tilde\psi_{N,t}^{(k)} \rangle    \\
  & =\frac{2^k}{C^k N^k} \langle \tilde\psi_{N,t}^{(k)}, H_N^k \tilde\psi_{N}^{(k)} \rangle+\frac{2^k}{C^k}\|\tilde\psi_{N,t}^{(k)}\|^2  \\
  & =\frac{2^k}{C^k N^k} \langle \tilde\psi_{N}^{(k)}, H_N^k \tilde\psi_{N}^{(k)} \rangle+\frac{2^k}{C^k}  \\
  & \leq  \tilde C^k
 \end{split}
\end{equation}
In the first inequality we use Proposition \ref{prop:finite priori energy bound}, and in the last inequality we use the fact that $\langle \tilde\psi_{N}^{(k)}, H_N^k \tilde\psi_{N}^{(k)} \rangle \leq C^k N^k$ with the constant $C$ depending on $\kappa$ (see Proposition 5.1 in \cite{ESY}). 
\end{proof}

\subsection{Compactness and Convergence}   \label{section: compactness and convergence}
The compactness of the $k$-particle marginal density sequence and the convergence to the infinite hierarchy are established in \cite{ESY},\cite{KSS},\cite{CPquintic}, since the arguments are essentially the same, we outline the main steps here for completeness. 

We introduce the following Banach spaces of density matrices. Denote by $\mathcal{K}_k=\mathcal{K}(L^2(\mathbb{R}^{dk}))$ the space of compact operators on $L^2(\mathbb{R}^{dk})$, equipped with the operator norm topology. And let $\mathcal{L}_k^1=\mathcal{L}^1(L^2(\mathbb{R}^{dk}))$ denote the space of trace operators on $L^2(\mathbb{R}^{dk})$ equipped with the trace class norm. Then we know (see Theorem VI.26 in \cite{RS} for details)
\begin{equation}
 \mathcal{L}_k^1=\mathcal{K}_k^*
\end{equation}
The closed unit ball in $\mathcal{L}_k^1$ is weak$^*$ compact by Banach-Alaoglu theorem, and thus is metrizable in the weak$^*$ topology. Since $\mathcal{K}_k$ is separable, there exists a sequence $\{J_i^{(k)}\}_{i \geq 1} \in \mathcal{K}_k$, with $\|J_i^{(k)}\| \leq 1$, dense in the unit ball of $\mathcal{K}_k$. Then
\begin{equation}
 \eta_k(\gamma^{(k)},\tilde\gamma^{(k)}):=\sum_{i=1}^{\infty} 2^{-i}\big|Tr J_i^{(k)}(\gamma^{(k)}-\tilde\gamma^{(k)})\big|
\end{equation}
is a metric on $\mathcal{L}_k^1$, and the induced topology by $\eta_k$ is equivalent to the weak$^*$ topology on any weak$^*$ compact subset of $\mathcal{L}_k^1$ (Theorem 3.16 in \cite{Rudin}). Therefore a uniformly bounded sequence $\gamma_N^{(k)} \in \mathcal{L}_k^1$ converges to $\gamma^{(k)} \in \mathcal{L}_k^1$ with respect to the weak$^*$ topology if and only if $\eta_k(\gamma_N^{(k)},\gamma^{(k)}) \to 0$ as $N \to \infty$. 
Now fix $T>0$, let $C([0,T], \mathcal{L}_k^1)$ be the space of $\mathcal{L}_k^1$-valued functions of $t \in [0,T]$ which are continuous with respect to the metric $\eta_k$. We definite the following metric $\hat\eta_k$ on $C([0,T], \mathcal{L}_k^1)$ for $k\in \mathbb{N}$:
\begin{equation}
 \hat\eta_k(\gamma^{(k)}(\cdot),\tilde\gamma^{(k)}(\cdot)):=\sup_{t\in [0,T]}\eta_k(\gamma^{(k)}(t),\tilde\gamma^{(k)}(t))
\end{equation}
this induces the product topology $\tau_{prod}$ on $\bigoplus_{k\in \mathbb{N}}C([0,T], \mathcal{L}_k^1)$. 

\begin{prop}  \label{prop: infinite trace bound}
 Let $\tilde\psi_N$ be defined as in \eqref{definition of psitilde}. Then the sequence of marginal densities $\tilde\Gamma_{N,t}=\{\tilde\gamma_{N,t}^{(k)}\}_{k=1}^N \in \bigoplus_{k\in \mathbb{N}}C([0,T], \mathcal{L}_k^1)$ is compact with respect to the product topology $\tau_{prod}$ generated by the metric $\hat\eta_k$. If $\Gamma_{\infty,t}=\{\gamma_{\infty,t}^{(k)}\}_{k\geq 1}$ is an arbitrary subsequential limit point, then its component $\gamma^{(k)}_{\infty,t}$ is non-negative and symmetric under permutations, and 
\begin{equation*}
 Tr \gamma^{(k)}_{\infty,t} \leq 1
\end{equation*}
for every $k \geq 1$.
\end{prop}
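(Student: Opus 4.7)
The plan is to apply an Arzelà--Ascoli-type argument in each fixed $k$ and then extract a diagonal subsequence over $k$. The uniform bound needed by Arzelà--Ascoli is provided by the normalization $Tr\,\tilde\gamma_{N,t}^{(k)}=1$, which places the sequence in the weak-$*$ compact unit ball of $\mathcal{L}_k^1$ by Banach--Alaoglu. The substantive task is to establish equicontinuity of $t\mapsto \tilde\gamma_{N,t}^{(k)}$ with respect to the metric $\eta_k$, uniformly in $N$, and this is where Corollary \ref{cor:finite a priori energy bound} is used.

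For the equicontinuity, I would first arrange the dense sequence $\{J_i^{(k)}\}\subset\mathcal{K}_k$ appearing in the definition of $\eta_k$ to consist of finite-rank operators with smooth kernels; this is permissible since such operators are dense in $\mathcal{K}_k$. Fixing such a $J$ and integrating the BBGKY hierarchy \eqref{BBGKY} tested against $J$ gives
\begin{equation*}
Tr\, J\bigl(\tilde\gamma_{N,t_2}^{(k)}-\tilde\gamma_{N,t_1}^{(k)}\bigr)= -i\int_{t_1}^{t_2}\Bigl(Tr\, J\sum_{i=1}^k[-\Delta_{x_i},\tilde\gamma_{N,s}^{(k)}]+\mathcal{I}_N^{(k)}(s)\Bigr)ds,
\end{equation*}
where $\mathcal{I}_N^{(k)}(s)$ collects all interaction contributions. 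The kinetic commutator is rewritten via cyclicity of trace as $Tr\,[\Delta_{x_i},J]\tilde\gamma_{N,s}^{(k)}$, and bounded by $\|[\Delta_{x_i},J]S_i^{-2}\|_{\text{op}}\cdot Tr\,S_i^2\tilde\gamma_{N,s}^{(k)}$, both factors being controlled by the smoothness of $J$ and Corollary \ref{cor:finite a priori energy bound}. For the potentially surviving interaction term
\begin{equation*}
\frac{(N-k)\cdots(N-k-p+1)}{N^{p}}\,Tr_{k+1}\cdots Tr_{k+p}\bigl[V^{(p)}_N(\cdots),\tilde\gamma_{N,s}^{(k+p)}\bigr],
\end{equation*}
the prefactor is bounded by $1$; pairing against the smooth $J$ and applying Lemma \ref{lem: sobolev inequality} yields a bound of the form $C\|J\|_{\text{op}}\|V^{(p)}\|_{L^1}\, Tr\,S_1^2\cdots S_{k+p}^2\,\tilde\gamma_{N,s}^{(k+p)}$, which is $\leq C'$ by Corollary \ref{cor:finite a priori energy bound} applied at level $k+p$. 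The remaining interaction terms carry prefactors bounded by $(N-k)^{j}/N^{p}$ with $j\leq p-1$, hence vanish at rate $N^{-1}$ or faster and are a fortiori controlled. The net conclusion is $|Tr\,J(\tilde\gamma_{N,t_2}^{(k)}-\tilde\gamma_{N,t_1}^{(k)})|\leq C_{J,k}|t_2-t_1|$ uniformly in $N$.

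Combining the pointwise Lipschitz estimates against each $J_i^{(k)}$ with the weight $2^{-i}$, and handling the tail of $\sum 2^{-i}$ using the uniform $\mathcal{L}^1$-bound $\|\tilde\gamma_{N,t}^{(k)}\|_{\mathcal{L}^1}=1$, yields equicontinuity of the family in $\eta_k$. Arzelà--Ascoli then produces a subsequence convergent in $C([0,T],\mathcal{L}_k^1)$ with respect to $\hat\eta_k$. A diagonal extraction over $k=1,2,\dots$ produces a single subsequence convergent in $\tau_{prod}$ to some limit $\Gamma_{\infty,t}=\{\gamma_{\infty,t}^{(k)}\}_{k\geq 1}$. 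Non-negativity of each $\gamma_{\infty,t}^{(k)}$ passes to the weak-$*$ limit by testing against non-negative $J\in\mathcal{K}_k$, and permutation symmetry is an equality of operators inherited pointwise from the bosonic symmetry of $\tilde\gamma_{N,t}^{(k)}$. For the trace bound, test against an increasing sequence of finite-rank projections $P_m\nearrow I$: $Tr\,P_m\gamma_{\infty,t}^{(k)}=\lim_N Tr\,P_m\tilde\gamma_{N,t}^{(k)}\leq 1$, and monotone convergence gives $Tr\,\gamma_{\infty,t}^{(k)}\leq 1$ (with possibly strict inequality, since weak-$*$ convergence in $\mathcal{L}_k^1$ need not preserve trace norm: mass may escape to infinity).

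The main obstacle is the equicontinuity estimate, and specifically verifying uniformly in $N$ that \emph{each} of the many interaction types in \eqref{BBGKY} has controlled contribution. This requires applying Corollary \ref{cor:finite a priori energy bound} at the higher level $k+p$ of the marginal that appears, together with careful bookkeeping of the $p_0$ distinct interaction orders, each imposing its own regularity and scaling constraints on $V^{(p)}$ and $\beta$. Everything else is standard abstract Arzelà--Ascoli and weak-$*$ limit theory.
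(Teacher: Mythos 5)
Your overall architecture is the same as the paper's: fix $k$, prove equicontinuity in $\eta_k$ uniformly in $N$ using the a priori bound, invoke Arzel\`a--Ascoli, then diagonalize over $k$; pass non-negativity and symmetry to the weak-$*$ limit, and use finite-rank projections for the (possibly strict) trace inequality. That all matches, and the paper only gives the same scheme with a reference to \cite{ESY}, \cite{KSS}, \cite{CPquintic}.

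There is, however, a gap in your treatment of the surviving interaction term. You claim that pairing against a smooth $J$ and applying Lemma~\ref{lem: sobolev inequality} yields a bound of the form $C\|J\|_{\mathrm{op}}\,\|V^{(p)}\|_{L^1}\,Tr\,S_1^2\cdots S_{k+p}^2\,\tilde\gamma_{N,s}^{(k+p)}$. But Lemma~\ref{lem: sobolev inequality} only controls $\langle\psi, V\psi\rangle$ by $\|V\|_{L^r}$ for $r>2$ (or $r\geq md$), or by $\|V\|_{L^\infty}$ in the second version. For the rescaled potential $V^{(p)}_N(x)=N^{pd\beta}V^{(p)}(N^\beta x)$ one has $\|V^{(p)}_N\|_{L^r}=N^{pd\beta(1-1/r)}\|V^{(p)}\|_{L^r}$, which diverges as $N\to\infty$ for every $r>1$; only the $L^1$ norm is scale invariant, and Lemma~\ref{lem: sobolev inequality} does not produce an $L^1$ bound. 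So this step, as written, does not yield a uniform-in-$N$ Lipschitz constant. The correct mechanism, which the paper signals by introducing the norm $\vvvert J^{(k)}\vvvert$ on the Fourier side, is to move to momentum space: one exploits $\|\widehat{V^{(p)}_N}\|_{L^\infty}=\|\widehat{V^{(p)}}\|_{L^\infty}\leq\|V^{(p)}\|_{L^1}$ uniformly in $N$, absorbs the derivatives into the $\langle p_j\rangle\langle p'_j\rangle$ weights of $\vvvert J^{(k)}\vvvert$, and then closes with Corollary~\ref{cor:finite a priori energy bound} at level $k+p$. Your restriction to smooth finite-rank $J$ is convenient but not the structural point; the $\vvvert\cdot\vvvert$-finite observables form the dense class for which the Fourier-side estimate actually closes. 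The rest of your argument --- the kinetic commutator, the vanishing lower-order interaction terms, the tail of $\sum 2^{-i}$, and the limit-passage steps --- is fine.
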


\begin{proof}  [Scheme of the proof] 
 The proof is completely analogous to those in \cite{KSS}, \cite{CPquintic}, \cite{ESY}. First of all, by a Cantor diagonal argument, it is sufficient to prove the compactness of $\tilde\gamma_{N,t}^{(k)}$ for some fixed $k$. Thanks to Arzel\`a-Ascoli theorem, this can be done by showing the equicontinuity of $\tilde\gamma_{N,t}^{(k)}$ with respect to the metric $\hat\eta_k$. Then it should be enough to show that for every observable $J^{(k)}$ from a dense subset of $\mathcal{K}_k$ and for every $\epsilon>0$, there exists $\delta=\delta(J^{(k)},\epsilon)$ such that
\begin{equation}   \label{equicontinuity}
  \sup_{N\geq 1} \big|Tr J^{(k)}(\tilde\gamma_{N,t}^{(k)}-\tilde\gamma_{N,s}^{(k)})\big| <\epsilon
\end{equation}
 for all $t,s\in [0,T]$ with $|t-s|\leq \delta$. \\
In order to prove \eqref{equicontinuity}, use \eqref{BBGKY} to rewrite $\tilde\gamma_{N,t}^{(k)}-\tilde\gamma_{N,s}^{(k)}$ in integral form and bound $\big|Tr J^{(k)}(\gamma_{N,t}^{(k)}-\tilde\gamma_{N,s}^{(k)})\big|$, which consists of $p+2$ terms, by the following:
\begin{equation}
 \sup_{N\geq 1}\big|Tr J^{(k)}(\tilde\gamma_{N,t}^{(k)}-\tilde\gamma_{N,s}^{(k)})\big| \leq C \big\vvvert J^{(k)}\big\vvvert |t-s|.
\end{equation}
For this purpose, \cite{CPquintic}, \cite{KSS} introduced an operator norm:
\begin{equation}
 \big\vvvert J^{(k)}\big\vvvert:=\sup_{\textbf{p}'_k} \int d\textbf{p}_k \prod_{j=1}^{k} \langle p_j\rangle \langle p'_j\rangle \big(|\hat J^{(k)}(\textbf{p}_k;\textbf{p}'_k)|+|\hat J^{(k)}(\textbf{p}'_k;\textbf{p}_k)|\big)
\end{equation}
where $\hat J^{(k)}(\textbf{p}_k;\textbf{p}'_k)$ denotes the kernel of the compact operator $J^{(k)}$ in momentum space. Then use the fact that the set of all $J^{(k)} \in \mathcal{K}_k$ with finite norm is dense in $\mathcal{K}_k$ to reach the conclusion. 
\end{proof}

From the above proposition, we know that the sequence $\tilde\Gamma_{N,t}=\{\tilde\gamma_{N,t}^{(k)}\}_{k\geq 1}$ admits at least one limit point in $\bigoplus_{k\in \mathbb{N}}C([0,T], \mathcal{L}_k^1)$ with respect to the product topology $\tau_{prod}$. 

\begin{thm}      \label{thm:converge to the infite hierarchy}
 Let $\tilde\psi_N$ be defined as in \eqref{definition of psitilde}, $\tilde\psi_{N,t}=e^{-itH_N}\tilde\psi_N$ and $\tilde\gamma_{N,t}^{(k)}$ be the corresponding $k$-marginal density. Suppose that $\Gamma_{\infty,t}=\{\tilde\gamma_{\infty,t}^{(k)}\}_{k\geq 1}$ is a limit point of $\tilde\Gamma_{N,t}=\{\tilde\gamma_{N,t}^{(k)}\}_{k=1}^N$ in $\bigoplus_{k\in \mathbb{N}}C([0,T], \mathcal{L}_k^1)$ with respect to the product topology $\tau_{prod}$. Then $\Gamma_{\infty, t}$ is a solution to the infinite hierarchy
\begin{equation}       \label{converge to the infite hierarchy}
 \gamma^{(k)}_{\infty,t}=\mathcal{U}^{(k)}(t)\gamma^{(k)}_{\infty,0}-i\sum_{p=1}^{p_0} b_p\sum_{j=1}^{k}\int_0^t ds  \mathcal{U}^{(k)}(t-s)B_{j;k+1,\dots,k+p}\gamma^{(k+p)}_{\infty,s}
\end{equation}
with initial data $\gamma_{\infty,0}^{(k)}=\Ket{\phi} \Bra{\phi}^{\otimes k}$. $\mathcal{U}^{(k)}(t)$ is the \emph{free evolution operator} defined by $\mathcal{U}^{(k)}(t)\gamma^{(k)}:=e^{it(\Delta_{\mathbf{x}_k}-\Delta_{\mathbf{x}'_k})}\gamma^{(k)}$. 
\end{thm}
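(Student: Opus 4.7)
The plan is to pass to the limit $N\to\infty$ in the integral (Duhamel) form of the BBGKY hierarchy \eqref{BBGKY} for $\tilde\gamma_{N,t}^{(k)}$, tested against observables $J^{(k)}$ in a dense subset of $\mathcal{K}_k$ (for instance those with finite triple-bar norm used in the proof of Proposition \ref{prop: infinite trace bound}). First I would apply $\mathcal{U}^{(k)}(-t)$ to write
\begin{equation*}
\tilde\gamma_{N,t}^{(k)} = \mathcal{U}^{(k)}(t)\tilde\gamma_{N,0}^{(k)} - i\int_0^t \mathcal{U}^{(k)}(t-s)\,\mathcal{I}_N^{(k)}(s)\,ds,
\end{equation*}
where $\mathcal{I}_N^{(k)}(s)$ is the collection of interaction terms appearing on the right side of \eqref{BBGKY}. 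These split into three groups for each fixed $1\le p\le p_0$: (a) the self-interaction piece with prefactor $N^{-p}$; (b) the intermediate trace pieces with prefactors $\prod_{i=0}^{j-1}(N-k-i)/N^p$ for $0\le j\le p-2$; and (c) the top trace piece with prefactor $\prod_{i=0}^{p-1}(N-k-i)/N^p\to 1$.

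Second, I would show that groups (a) and (b) vanish in the limit. Their prefactors are $O(N^{-p})$ and $O(N^{j-p})$ respectively, while the traces $\mathrm{Tr}\,J^{(k)}[\,V^{(p)}_N(\cdots),\tilde\gamma_{N,s}^{(k+j+1)}\,]$ are uniformly bounded in $N$ and $s$. The boundedness follows by combining the a priori energy bound Corollary \ref{cor:finite a priori energy bound} with Lemma \ref{lem: sobolev inequality}: one controls $V^{(p)}_N$ against an appropriate product of the weights $(1-\Delta_{x_i})$ and uses $\mathrm{Tr}\,(1-\Delta_{x_1})\cdots(1-\Delta_{x_{k+j+1}})\tilde\gamma_{N,s}^{(k+j+1)}\le \tilde C^{k+j+1}$, independent of $N$ and $s$.

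Third, and this is the main obstacle, I would show that each term in group (c) converges to the corresponding contraction term. Fix $p$ and $j\in\{1,\dots,k\}$. I would prove that
\begin{equation*}
\int_0^t \mathrm{Tr}\,J^{(k)}\mathcal{U}^{(k)}(t-s)\mathrm{Tr}_{k+1,\dots,k+p}\bigl[V^{(p)}_N(x_j-x_{k+1},\dots,x_j-x_{k+p}),\tilde\gamma_{N,s}^{(k+p)}\bigr]\,ds
\end{equation*}
converges to $b_0^{(p)}\int_0^t \mathrm{Tr}\,J^{(k)}\mathcal{U}^{(k)}(t-s)B_{j;k+1,\dots,k+p}\gamma_{\infty,s}^{(k+p)}\,ds$. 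Since $V^{(p)}_N(y_1,\dots,y_p)=N^{pd\beta}V^{(p)}(N^\beta y_1,\dots,N^\beta y_p)$ is an approximate identity of mass $b_0^{(p)}$ on $\mathbb{R}^{dp}$, the natural approach is to insert and subtract $b_0^{(p)}\,\delta(x_j-x_{k+1})\cdots\delta(x_j-x_{k+p})$ acting on $\tilde\gamma_{N,s}^{(k+p)}$. The difference is controlled by a Taylor expansion in the arguments $N^\beta(x_j-x_{k+\ell})$, together with the $W^{p,\infty}$ regularity of $V^{(p)}$, the Sobolev bounds from Lemma \ref{lem: sobolev inequality}, and Corollary \ref{cor:finite a priori energy bound}; the smallness comes from the scaling factor $N^{-\beta}$ and the admissible range $\beta<1/(2dp_0+2)$. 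The remaining step is to pass from $\tilde\gamma_{N,s}^{(k+p)}$ to $\gamma_{\infty,s}^{(k+p)}$, which uses the convergence in $\tau_{prod}$ from the compactness result, after checking that the operator $J^{(k)}\mathcal{U}^{(k)}(t-s)B_{j;k+1,\dots,k+p}$ is a valid (compact) test functional on $\mathcal{L}^1_{k+p}$; one proves this by integrating by parts and bounding the kernel in the norm used in Proposition \ref{prop: infinite trace bound}.

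Finally, to identify the initial data $\gamma_{\infty,0}^{(k)}=|\phi\rangle\langle\phi|^{\otimes k}$, I would argue as in \cite{ESY,KSS,CPquintic}: the regularization \eqref{definition of psitilde} is chosen so that $\tilde\gamma_{N,0}^{(1)}\to|\phi\rangle\langle\phi|$ in trace norm whenever the original sequence asymptotically factorizes at $k=1$, and the uniform bound in Corollary \ref{cor:finite a priori energy bound} then propagates factorization to all $k\ge 1$ (a purely linear-algebraic argument about non-negative, symmetric, trace-one operators with a rank-one one-particle marginal). Collecting the four steps yields the integral equation \eqref{converge to the infite hierarchy}.
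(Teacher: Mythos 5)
Your overall structure matches the paper's: pass to the integral (Duhamel) form of the BBGKY hierarchy, test against observables with finite $\vvvert\cdot\vvvert$-norm, show the lower-order interaction terms vanish, show the top trace term converges, and identify the initial data. Steps one, two, and four of your outline are essentially the paper's argument.

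The gap is in your step three, the convergence of the top trace term. You propose to insert and subtract $b_0^{(p)}\,\delta(x_j-x_{k+1})\cdots\delta(x_j-x_{k+p})$ directly, bound the difference $V^{(p)}_N - b_0^{(p)}\delta\cdots\delta$ by a Taylor expansion and the a priori bound, and then "pass from $\tilde\gamma_{N,s}^{(k+p)}$ to $\gamma_{\infty,s}^{(k+p)}$" using weak$^*$ convergence, "after checking that $J^{(k)}\mathcal{U}^{(k)}(t-s)B_{j;k+1,\dots,k+p}$ is a valid (compact) test functional on $\mathcal{L}^1_{k+p}$." That final check is exactly where the argument breaks: the contraction $B_{j;k+1,\dots,k+p}$ is a composition of Dirac deltas, so $J^{(k)}\mathcal{U}^{(k)}(t-s)B_{j;k+1,\dots,k+p}$ is not a compact (or even bounded) operator on $L^2(\mathbb{R}^{d(k+p)})$, and the weak$^*$ convergence $\tilde\gamma_{N,s}^{(k+p)}\to\gamma_{\infty,s}^{(k+p)}$ in $\mathcal{L}^1_{k+p}$ gives you convergence only against genuine compact observables. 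The triple-bar norm and "integration by parts" do not rescue this; the delta contraction simply falls outside $\mathcal{K}_{k+p}$.

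The paper resolves this with a mollifier $h_\epsilon$. It writes a four-term telescoping decomposition (see \eqref{tr bbgky last part 1}--\eqref{tr bbgky last part 4}): $V^{(p)}_N$ vs.\ $b_0^{(p)}\delta$ on $\tilde\gamma_N^{(k+p)}$; $\delta$ vs.\ $h_\epsilon$ on $\tilde\gamma_N^{(k+p)}$; $h_\epsilon$ applied to $\tilde\gamma_N^{(k+p)}-\gamma_\infty^{(k+p)}$; and $h_\epsilon$ vs.\ $\delta$ on $\gamma_\infty^{(k+p)}$. The $\delta$-vs.-$h_\epsilon$ comparisons use the Poincar\'e-type inequality (Lemma \ref{lem: lemma B}) together with the a priori energy bound, uniformly in $N$; and the middle term is the only one where weak$^*$ convergence is invoked, against the \emph{bounded, smooth} observable built from $h_\epsilon$ (valid for each fixed $\epsilon>0$). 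One then lets $N\to\infty$ first and $\epsilon\to 0$ second. Without this intermediate regularization your step three does not close. (Your initial-data step is fine modulo attribution: the "linear-algebraic propagation" of rank-one factorization to all $k$ is the Lieb--Seiringer argument invoked via Lemma \ref{lem: regularization of the initial data} in the appendix.)
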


\begin{proof}
 We adapt the proof in \cite{CPquintic}, \cite{KSS}. Let $k\geq 1$ be fixed. Up to a subsequence, we can assume that for every $J^{(k)} \in \mathcal{K}_k$
\begin{equation}   \label{sup trace convergence}
 \sup_{t\in [0,T]}Tr J^{(k)}(\gamma_{\infty,t}^{(k)}-\tilde\gamma_{N,t}^{(k)}) \to 0, \quad as \ \ N \to \infty
\end{equation}
It is enough to test \eqref{converge to the infite hierarchy} for observables in a dense subset of $\mathcal{K}_k$. So we choose an arbitrary  $J^{(k)} \in \mathcal{K}_k$ with $\big\vvvert J^{(k)}\big\vvvert < \infty$. We need to prove
\begin{equation}   \label{converge of initial data}
 Tr J^{(k)}\gamma_{\infty,0}^{(k)}=Tr J^{(k)}\Ket{\phi} \Bra{\phi}^{\otimes k}
\end{equation}
and
\begin{equation}   \label{converge of the solution}
 TrJ^{(k)}\gamma_{\infty,t}^{(k)}=TrJ^{(k)}\mathcal{U}^{(k)}(t)\gamma^{(k)}_{\infty,0}-i\sum_{p=1}^{p_0} b_p\sum_{j=1}^{k}\int_0^t ds TrJ^{(k)} \mathcal{U}^{(k)}(t-s)B_{j;k+1,\dots,k+p}\gamma^{(k+p)}_{\infty,s}.
\end{equation}
By the choice of $J^{(k)}$, \eqref{converge of initial data} follows from \eqref{sup trace convergence} and \eqref{trace convergence}:
\begin{equation}     \label{trace convergence}
  Tr J^{(k)}(\tilde\gamma_{N}^{(k)}-\Ket{\phi} \Bra{\phi}^{\otimes k})\to 0, \quad as \ \ N\to \infty
\end{equation}
Here $\tilde\gamma_{N,0}^{(k)}=\tilde\gamma_{N}^{(k)}$. We provide the proof of \eqref{trace convergence} in Appendix A. \\
For \eqref{converge of the solution}, we use the notation $J_t^{(k)}:=J^{(k)}\mathcal{U}^{(k)}(t)$, and go back to the BBGKY hierarchy \eqref{BBGKY} in the integral form as
\small
\begin{align}
  &TrJ^{(k)} \tilde\gamma_{N,t}^{(k)}=     \label{tr bbgky lhs} \\
  &\quad TrJ_t^{(k)}\tilde\gamma_{N,0}^{(k)}   \label{tr bbgky rhs part 1}  \\
  &\quad -\sum_{p=1}^{p_0}\frac{i}{N^{p}}\sum_{1\leq i_1<\cdots<i_{p+1}\leq k}\int_0^t ds TrJ_{t-s}^{(k)}[V^{(p)}_N(x_{i_1}-x_{i_2},\cdots,x_{i_1}-x_{i_{p+1}}),\tilde\gamma_{N,s}^{(k)}]  \label{tr bbgky rhs part 2}  \\
  &\quad -\sum_{p=1}^{p_0}\frac{i(N-k)}{N^{p}}\sum_{1\leq i_1<\cdots<i_{p} \leq k} \int_0^t ds Tr J_{t-s}^{(k)}   \label{tr bbgky rhs part 3}\\ 
  & \qquad \qquad \qquad[V^{(p)}_N(x_{i_1}-x_{i_2},\cdots,x_{i_1}-x_{i_{p}},x_{i_1}-x_{k+1}),\tilde\gamma_{N,s}^{(k+1)}]   \notag \\
  &\quad -\sum_{p=1}^{p_0}\frac{i(N-k)(N-k-1)}{N^{p}} \sum_{1\leq i_1<\cdots<i_{p-1}\leq k}\int_0^t ds TrJ_{t-s}^{(k)}   \label{tr bbgky rhs part 4}\\
  & \qquad \qquad \qquad[V^{(p)}_N(x_{i_1}-x_{i_2},\cdots,x_{i_1}-x_{i_{p-1}},x_{i_1}-x_{k+1},x_{i_2}-x_{k+2}),\tilde\gamma_{N,s}^{(k+2)}]    \notag \\
  &\quad -\cdots    \notag \\
  &\quad -\sum_{p=1}^{p_0}\frac{i(N-k)(N-k-1)\cdots(N-k-p+1)}{N^{p}}\sum_{1 \leq i_1 \leq k}\int_0^t ds TrJ_{t-s}^{(k)}    \label{tr bbgky rhs part p/2+2}\\
  & \qquad \qquad \qquad[V^{(p)}_N(x_{i_1}-x_{k+1},x_{i_1}-x_{k+2},\cdots,x_{i_1}-x_{k+p}),\tilde\gamma_{N,s}^{(k+p)}]    \notag
\end{align}
\normalsize
Let us look at the behavior of the above terms when $N \to \infty$. It is obvious that by \eqref{sup trace convergence}, \eqref{tr bbgky lhs} converges to the LHS of \eqref{converge of the solution}; and \eqref{tr bbgky rhs part 1} converges to the first term on the RHS of \eqref{converge of the solution}.  We also observed that all the terms between \eqref{tr bbgky rhs part 1} and \eqref{tr bbgky rhs part p/2+2} vanish as $N \to \infty$. Therefore, our goal is to show \eqref{tr bbgky rhs part p/2+2} converges to the last term on the RHS of \eqref{converge of the solution}. It suffices to prove that for fixed $T, k$, $J^{(k)}$ and $p$,
\begin{equation}        \label{tr bbgky last part equality}
 \begin{split}
 \sup_{t\in [s,T]}\big|TrJ_{t-s}^{(k)}&\big(V^{(p)}_N(x_{j}-x_{k+1},\cdots,x_{j}-x_{k+p})\tilde\gamma_{N,s}^{(k+p)}    \\
 & -b^{(p)}_0\delta(x_j-x_{k+1})\cdots\delta(x_j-x_{k+p}) \gamma^{(k+p)}_{\infty,s}\big)  \big|\to 0, \quad as \ \ N\to 0.
 \end{split}
\end{equation}
To bound \eqref{tr bbgky last part equality}, we  choose a non-negative probability measure $h$, i.e $h\geq 0$ and  $\int h=1$. Define $h_{\epsilon}(x)=\frac{1}{\epsilon^d}h(\frac{x}{\epsilon}), \epsilon >0$. Then
\small
\begin{align}
 &|TrJ_{t-s}^{(k)}\big(V^{(p)}_N(x_{j}-x_{k+1},\cdots,x_{j}-x_{k+p})\tilde\gamma_{N,s}^{(k+p)}-b^{(p)}_0\delta(x_j-x_{k+1})\cdots\delta(x_j-x_{k+p})\gamma^{(k+p)}_{\infty,s}\big)\big| \notag  \\
 & \leq |TrJ_{t-s}^{(k)}\big(V^{(p)}_N(x_{j}-x_{k+1},\cdots,x_{j}-x_{k+p})-b^{(p)}_0\delta(x_j-x_{k+1})\cdots\delta(x_j-x_{k+p})\big)\tilde\gamma_{N,s}^{(k+p)}\big|   \label{tr bbgky last part 1} \\
 &\quad +b^{(p)}_0\big|TrJ_{t-s}^{(k)}\big(\delta(x_j-x_{k+1})\cdots\delta(x_j-x_{k+p})-h_{\epsilon}(x_j-x_{k+1})\cdots h_{\epsilon}(x_j-x_{k+p})\big)\tilde\gamma_{N,s}^{(k+p)} \big|  \label{tr bbgky last part 2} \\
 &\quad +b^{(p)}_0\big|TrJ_{t-s}^{(k)} h_{\epsilon}(x_j-x_{k+1})\cdots h_{\epsilon}(x_j-x_{k+p})\big(\tilde\gamma_{N,s}^{(k+p)}-\gamma_{\infty,s}^{(k+p)}\big) \big|  \label{tr bbgky last part 3}  \\
 &\quad +b^{(p)}_0\big|TrJ_{t-s}^{(k)} \big(h_{\epsilon}(x_j-x_{k+1})\cdots h_{\epsilon}(x_j-x_{k+p})-\delta(x_j-x_{k+1})\cdots\delta(x_j-x_{k+p}) \big)\gamma_{\infty,s}^{(k+p)} \big|   \label{tr bbgky last part 4}
\end{align}
\normalsize
We conclude that (note $\int \frac{V^{(p)}}{b^{(p)}_0}=1$):
\begin{itemize}
 \item term \eqref{tr bbgky last part 1} converges to 0 as $N\to \infty$ by Lemma \ref{lem: lemma B} and Corollary \ref{cor:finite a priori energy bound}.
 \item term \eqref{tr bbgky last part 2} converges to 0 uniformly in $N$ as $\epsilon\to 0$ by Lemma \ref{lem: lemma B} and Corollary \ref{cor:finite a priori energy bound}.
 \item term \eqref{tr bbgky last part 3} converges to 0 as $N\to \infty$, for every fixed $\epsilon$. (see (6.8) of \cite{KSS}).
 \item term \eqref{tr bbgky last part 4} converges to 0 as $\epsilon\to 0$ by Lemma \ref{lem: lemma B} and \eqref{a priori energy bound}
\end{itemize}
Thus by taking first the limit $N\to \infty$, and then $\epsilon \to 0$, we obtain \eqref{tr bbgky last part equality}. 
\end{proof}

So far, with the uniqueness theorems in Section \ref{section: uniqueness of solutions}, we know that for each fixed $\kappa>0$ and $k\geq 1$, $\hat\eta_k(\tilde\gamma_{N,t}^{(k)}, \Ket{\phi_t} \Bra{\phi_t}^{\otimes k})\to 0$ as $N\to \infty$, Or in other words, 
\begin{equation}  \label{weak star converge}
 \tilde\gamma_{N,t}^{(k)} \to \Ket{\phi_t}\Bra{\phi_t}^{\otimes k}
\end{equation}
in the weak$^*$ topology of $\mathcal{L}_k^1$. It remains to prove that $\gamma_{N,t}^{(k)}$, the $k$-particle marginal density associated with the original wave functions $\psi_N$, converges to $\Ket{\phi_t} \Bra{\phi_t}^{\otimes k}$ as $N\to \infty$. For any given $\epsilon>0$, and compact operator $J^{(k)}\in \mathcal{K}_k$, we can find a small enough $\kappa$ such that (see \eqref{distance between psiN tildepsiN})
\begin{equation}     \label{bound by psiN tildepsiN}
 \big|TrJ^{(k)}(\gamma_{N,t}^{(k)}-\tilde\gamma_{N,t}^{(k)})\big| \leq \big\vvvert J^{(k)}\big\vvvert \|\psi_N-\tilde\psi_N\|<C\kappa^{\frac{1}{2}} \leq \frac{\epsilon}{2}
\end{equation}
uniformly in $N$. With this fixed $\kappa$, by \eqref{weak star converge}, we can pick large enough $N$ to have
\begin{equation}
 \big|TrJ^{(k)}\big(\tilde\gamma_{N,t}^{(k)}-\Ket{\phi_t}\Bra{\phi_t}^{\otimes k}\big)\big| \leq \frac{\epsilon}{2}
\end{equation}
This shows that for any given $\epsilon>0$ and $J^{(k)}\in \mathcal{K}_k$, $\exists N_0>0$ such that
\begin{equation}
  \big|TrJ^{(k)}\big(\gamma_{N,t}^{(k)}-\Ket{\phi_t}\Bra{\phi_t}^{\otimes k}\big)\big| \leq \epsilon
\end{equation}
whenever $N>N_0$. So for each $t\in [0,T]$ and every $k$, $\gamma_{N,t}^{(k)} \to \Ket{\phi_t}\Bra{\phi_t}^{\otimes k}$ in the weak$^*$ topology of $\mathcal{L}_k^1$. Since the limiting hierarchy is an orthogonal projection, the convergence in weak$^*$ topology is equivalent to the trace norm convergence. This concludes Theorem \ref{thm:main result}.

\vspace{2 mm}
\section{A Priori Energy Bounds on the Limiting Hierarchy}

This section is a preparation for proving uniqueness theorems in section \ref{section: uniqueness of solutions} using the approach introduced in \cite{KM}. In order to apply \cite{KM} we have to establish some energy bounds on the limiting hierarchy. The results are stated in theorems. From now on, we denote $S^{(k,\alpha)}$ as:
\begin{equation*}
 S^{(k,\alpha)}=\prod_{j=1}^{k}(1-\Delta_{x_j})^{\frac{\alpha}{2}}(1-\Delta_{x'_j})^{\frac{\alpha}{2}}
\end{equation*}

\begin{thm}[A priori energy bound] \label{thm:priori spacial bound on the limiting hierarchy}
 Suppose that $d \in \{1,2\}$, $0<\beta<\frac{1}{2dp_0+2}$, $p$ satisfies $1\leq p\leq p_0$. If $\Gamma_{\infty,t}=\{\gamma_{\infty,t}^{(k)}\}_{k\geq 1}$ is a limit point of the sequence $\tilde{\Gamma}_{N,t}=\{\tilde{\gamma}_{N,t}^{(k)}\}_{k=1}^N$ with respect to the product topology $\tau_{prod}$, then for every $\alpha <1$ if $d=2$, and every $\alpha \leq 1$ if $d=1$, there exists $C_{\alpha}>0$ (also has $\kappa, p_0, V^{(p)}, d$ dependence) such that 
\begin{equation}  \label{priori spacial bound}
  \Big\|S^{(k,\alpha)}B_{j;k+1,\cdots,k+p}\gamma_{\infty,t}^{(k+p)}\Big\|_{L^2(\mathbb{R}^{dk}\times \mathbb{R}^{dk})} \leq C_{\alpha}^{k+p}
  \end{equation}
for all $k \geq 1$ and all $t \in [0,T]$.
\end{thm}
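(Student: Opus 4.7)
The plan is to establish the estimate first for the BBGKY hierarchy with a regularized version of the contraction, and then pass to the limit $N\to\infty$. Define the smoothed contraction
\[
\tilde{B}^{(p)}_{N,j}\gamma^{(k+p)}(\mathbf{x}_k;\mathbf{x}_k') = \frac{1}{b_0^{(p)}}\int V_N^{(p)}(x_j-x_{k+1},\ldots,x_j-x_{k+p})\,\gamma^{(k+p)}(\mathbf{x}_k,y_p;\mathbf{x}_k',y_p)\,dy_p \;-\; (\text{primed version}),
\]
where $y_p=(x_{k+1},\ldots,x_{k+p})$. This is precisely the regularization that collapses to $B_{j;k+1,\ldots,k+p}$ as $N\to\infty$, already used after \eqref{tr bbgky last part equality}. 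The target of the first step is then the uniform-in-$N$ estimate
\[
\big\|S^{(k,\alpha)}\,\tilde{B}^{(p)}_{N,j}\,\tilde{\gamma}_{N,t}^{(k+p)}\big\|_{L^2(\mathbb{R}^{dk}\times\mathbb{R}^{dk})}\;\leq\;C_\alpha^{k+p}
\]
for all $N\geq N_0(k+p)$, from which the theorem follows by a weak-limit / lower-semicontinuity argument.

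The core bound is obtained via the Hilbert--Schmidt identity $\|A\Gamma\|_{L^2}^2=\mathrm{Tr}(\Gamma A^*A\Gamma)$ combined with the pure-state inequality $\|\tilde{\gamma}_{N,t}^{(k+p)}\|_{\mathrm{op}}\leq 1$. This reduces the problem to an operator inequality of the form
\[
(\tilde{B}^{(p)}_{N,j})^*\,S^{(k,\alpha)}_x S^{(k,\alpha)}_{x'}\,\tilde{B}^{(p)}_{N,j}\;\leq\;C^{p}\,S^{(k+p,1)},
\]
so that taking the trace against $\tilde{\gamma}_{N,t}^{(k+p)}$ and invoking Corollary~\ref{cor:finite a priori energy bound} yields $C_\alpha^{k+p}$. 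The operator inequality itself is a Plancherel/Sobolev computation: after Fourier transform the regularized contraction is a convolution against $\widehat{V_N^{(p)}}$ in the momenta conjugate to the collapsed variables, and $\|V_N^{(p)}\|_{L^1}=b_0^{(p)}$ allows one to redistribute the fractional weight $(1-\Delta_{x_j})^{\alpha/2}$ on the diagonal variable into full Sobolev weights $(1-\Delta_{x_\ell})$ spread across the $k+p$ particles, via the dimension-dependent embedding $H^1\hookrightarrow L^q$. The permutation symmetry of $\tilde{\gamma}_{N,t}^{(k+p)}$ ensures that all constants remain universal in $k$, yielding the exponential form $C_\alpha^{k+p}$.

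The passage $N\to\infty$ is then handled exactly as in Theorem~\ref{thm:converge to the infite hierarchy}: $\tilde{B}^{(p)}_{N,j}\tilde{\gamma}_{N,t}^{(k+p)}$ converges distributionally to $B_{j;k+1,\ldots,k+p}\gamma_{\infty,t}^{(k+p)}$, and lower semicontinuity of the weighted $L^2$ norm on $\mathbb{R}^{dk}\times\mathbb{R}^{dk}$ under weak convergence transfers the bound to $\gamma_{\infty,t}^{(k+p)}$. The main obstacle is the derivative redistribution in dimension two. In $d=1$, the embedding $H^1\hookrightarrow L^\infty$ is generous enough to assign a full derivative to the diagonal collapsed variable and take $\alpha=1$, whereas in $d=2$ only $H^1\hookrightarrow L^q$ with $q<\infty$ is available, so the weight $(1-\Delta_{x_j})^{\alpha/2}$ cannot be fully absorbed and one must settle for $\alpha<1$. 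This is the very same scaling threshold that governs the Klainerman--Machedon space-time estimates, which will be used in Section~\ref{section: uniqueness of solutions}, so the dimensional restriction stated in the theorem is in fact sharp for the present method.
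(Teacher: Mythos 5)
There is a genuine gap in the central step, and the route is also more circuitous than the paper's.

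The contraction (regularized or not) is a \emph{two-sided} partial trace: in your notation $\tilde{B}^{(p)}_{N,j}\gamma^{(k+p)}$ inserts $V_N^{(p)}$ and then traces the variables $x_{k+1},\dots,x_{k+p}$ against the \emph{diagonal} of $\gamma^{(k+p)}$. This is not of the form $A\gamma^{(k+p)}$ for any fixed operator $A$ on $L^2(\mathbb{R}^{d(k+p)})$, so the Hilbert--Schmidt identity $\|A\Gamma\|_{L^2}^2=\operatorname{Tr}(\Gamma A^*A\Gamma)$ does not describe $\|S^{(k,\alpha)}\tilde{B}^{(p)}_{N,j}\tilde\gamma^{(k+p)}_{N,t}\|_{L^2}^2$, and the ``operator inequality'' $(\tilde{B}^{(p)}_{N,j})^* S^{(k,\alpha)}_x S^{(k,\alpha)}_{x'} \tilde{B}^{(p)}_{N,j}\leq C^p S^{(k+p,1)}$ is not a well-posed inequality between operators on one Hilbert space. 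Consequently the step ``take the trace against $\tilde{\gamma}_{N,t}^{(k+p)}$ and invoke Corollary~\ref{cor:finite a priori energy bound}'' is unsupported; this is the crux of the estimate, not a routine reduction.

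The paper's proof resolves exactly this difficulty. It passes to the limit in Corollary~\ref{cor:finite a priori energy bound} \emph{first} to obtain $\operatorname{Tr}(1-\Delta_{x_1})\cdots(1-\Delta_{x_{k+p}})\gamma_{\infty,t}^{(k+p)}\leq C^{k+p}$ directly on the limit point, and then estimates $\|S^{(k,\alpha)}B^{+}_{j}\gamma^{(k+p)}_{\infty,t}\|_{L^2}$ on the Fourier side by inserting the spectral decomposition $\hat\gamma^{(k+p)}_{\infty,t}=\sum_i\lambda_i\psi_i\bar\psi_i$ (valid since the limit point is nonnegative, $\sum_i\lambda_i\leq 1$). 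The contraction produces a genuine double sum $\sum_{i,j}\lambda_i\lambda_j(\cdots)$ with cross terms, which the paper handles by distributing $\langle q_1\rangle^{2\alpha}$ over the collapsed momenta, applying Cauchy--Schwarz, and performing the integrations via \eqref{analysis inequality} and \eqref{alpha upper bound}; the $(i,j)$ sum then factorizes into $\bigl(\operatorname{Tr}(1-\Delta_{x_1})\cdots(1-\Delta_{x_{k+p}})\gamma_{\infty,t}^{(k+p)}\bigr)^2$. None of this is captured by a one-step trace bound against $\gamma$. Working at the regularized BBGKY level and appealing to lower semicontinuity adds the separate burden of controlling the convolution against $\widehat{V_N^{(p)}}$ uniformly in $N$ and of reconciling the weak$^*$ (trace-class) convergence with Hilbert--Schmidt lower semicontinuity --- neither of which is needed once the energy bound has already been transferred to $\gamma_\infty$. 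Your remarks on the $d=1$ versus $d=2$ threshold have the right intuition, but the paper's mechanism is the integral bound \eqref{analysis inequality}, which fails precisely at $\alpha=1$ in $d=2$, rather than the $H^1\hookrightarrow L^q$ embedding directly.
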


\begin{proof}
Since the inequality in Corollary \ref{cor:finite a priori energy bound} is uniformly true for all large $N$, we can extract an estimate on limit points $\{\gamma_{\infty,t}^{(k)}\}_{k\geq 1}$ by taking $N \to \infty$:
\begin{equation}    \label{a priori energy bound}
  Tr(1-\Delta_{x_1})\cdots(1-\Delta_{x_k}) \gamma_{\infty,t}^{(k)} \leq C^k
\end{equation}
It is enough to prove that
\begin{equation}     \label{equivalent spacial bound}
  \Big\|S^{(k,\alpha)}B_{j;k+1,\cdots,k+p}\gamma_{\infty,t}^{(k+p)}\Big\|_{L^2(\mathbb{R}^{dk}\times \mathbb{R}^{dk})} \leq Tr(1-\Delta_{x_1})\cdots(1-\Delta_{x_{k+p}}) \gamma_{\infty,t}^{(k+p)} 
\end{equation}
Further, it should be enough to show the case that $k=1$ and $j=1$, since the proof of the argument for other values of $k$, $j$ is extremely similar.  Also, by the definition of the contraction operator $B_{j;k+1,\cdots,k+p}$, we only need to deal with $B^{+}_{j;k+1,\cdots,k+p}$ (same way works for $B^{-}_{j;k+1,\cdots,k+p}$). Switching to the Fourier space we have ($q_i$ and $q'_i$ are Fourier conjugate variables of $x_i$ and $x'_i$ respectively):
\begin{align*}
 &\big(B^{+}_{1;2,\cdots,1+p}\gamma_{\infty,t}^{(1+p)}\big)^\wedge (q_1;q'_1)   \\
 &=\int dx_1 dx'_1 e^{-ix_1\cdot q_1} e^{ix'_1\cdot q'_1} \int dx_2dx'_2 \cdots dx_{1+p}dx'_{1+p}  \\
 & \quad \times  \delta(x_1-x_2) \delta(x_1-x'_2) \delta(x_1-x_3) \delta(x_1-x'_3) \cdots \delta(x_1-x_{1+p}) \delta(x_1-x'_{1+p})     \\
 & \qquad \times \gamma_{\infty,t}^{(1+p)}(x_1,\cdots,x_{1+p};x'_1,\cdots,x'_{1+p})     \\
 &= \int dq_2 dq'_2 \cdots dq_{1+p}dq'_{1+p} \int dx_1dx'_1 \cdots dx_{1+p}dx'_{1+p}     \\
 & \quad \times e^{-ix_1\cdot q_1} e^{ix'_1\cdot q'_1} e^{iq_2(x_1-x_2)} e^{-iq'_2(x_1-x'_2)} \cdots e^{iq_{1+p}(x_1-x_{1+p})} e^{-iq'_{1+p}(x_1-x'_{1+p})}    \\
 & \qquad \times \gamma_{\infty,t}^{(1+p)}(x_1,\cdots,x_{1+p};x'_1,\cdots,x'_{1+p})     \\
 &= \int dq_2 dq'_2 \cdots dq_{1+p}dq'_{1+p} \int dx_1dx'_1 \cdots dx_{1+p}dx'_{1+p}   \\
 &  \quad \times e^{-ix_1\cdot (q_1-q_2+q'_2-\cdots -q_{1+p}+q'_{1+p})} e^{-ix_2\cdot q_2}\cdots e^{-ix_{1+p}\cdot q_{1+p}}  e^{ix'_1\cdot q'_1} e^{ix'_2\cdot q'_2}\cdots e^{ix'_{1+p} q'_{1+p}}    \\
 & \qquad \times \gamma_{\infty,t}^{(1+p)}(x_1,\cdots,x_{1+p};x'_1,\cdots,x'_{1+p}) \\
 &= \int dq_2 dq'_2 \cdots dq_{1+p}dq'_{1+p}    \\
 & \qquad \times \hat{\gamma}_{\infty,t}^{(1+p)} (q_1-q_2+q'_2-\cdots -q_{1+p}+q'_{1+p},q_2,\cdots,q_{1+p};q'_1,q'_2,\cdots,q'_{1+p})
\end{align*}
Thus
\begin{equation}
 \begin{split}
 & \big(S^{(1,\alpha)}B^{+}_{1;2,\cdots,1+p}\gamma_{\infty,t}^{(1+p)}\big)^\wedge (q_1;q'_1)   \\
 &= \langle q_1\rangle^{\alpha} \langle q'_1\rangle^{\alpha} \int dq_2 dq'_2 \cdots dq_{1+p}dq'_{1+p}    \\
 & \qquad \times \hat{\gamma}_{\infty,t}^{(1+p)} (q_1-q_2+q'_2-\cdots -q_{1+p}+q'_{1+p},q_2,\cdots,q_{1+p};q'_1,q'_2,\cdots,q'_{1+p})
\end{split}
\end{equation}
which implies
\begin{equation}    \label{L2 norm in fourier side}
 \begin{split}
 & \big\|S^{(1,\alpha)}B^{+}_{1;2,\cdots,1+p}\gamma_{\infty,t}^{(1+p)}\big\|_{{L^2(\mathbb{R}^{d}\times \mathbb{R}^{d})}}^2  \\
 &=\int dq_1dq'_1 d\tilde{q}_2d\tilde{\tilde{q}}_2 d\tilde{q}'_2d\tilde{\tilde{q}}'_2 \cdots d\tilde{q}_{1+p}d\tilde{\tilde{q}}_{1+p} d\tilde{q}'_{1+p}d\tilde{\tilde{q}}'_{1+p} \langle q_1\rangle^{2\alpha} \langle q'_1\rangle^{2\alpha}    \\
 & \quad \times \hat{\gamma}_{\infty,t}^{(1+p)} (q_1-\tilde{q}_2+\tilde{q}'_2-\cdots -\tilde{q}_{1+p}+\tilde{q}'_{1+p},\tilde{q}_2,\cdots,\tilde{q}_{1+p};\tilde{q}'_1,\tilde{q}'_2,\cdots,\tilde{q}'_{1+p})   \\
 & \quad \times \hat{\gamma}_{\infty,t}^{(1+p)} (q_1-\tilde{\tilde{q}}_2+\tilde{\tilde{q}}'_2-\cdots -\tilde{\tilde{q}}_{1+p}+\tilde{\tilde{q}}'_{1+p},\tilde{\tilde{q}}_2,\cdots,\tilde{\tilde{q}}_{1+p};\tilde{\tilde{q}}'_1,\tilde{\tilde{q}}'_2,\cdots,\tilde{\tilde{q}}'_{1+p}).   
\end{split}
\end{equation}
Note that $\gamma^{(k+p)}$ is non-negative as an operator with trace less than or equal to $1$ (see Proposition \ref{prop: infinite trace bound}). We have the following decomposition
\small
\begin{equation}    \label{gamma decomposition}
 \hat{\gamma}_{\infty,t}^{(1+p)} (q_1,q_2,\cdots,q_{1+p};q'_1,q'_2,\cdots,q'_{1+p})=\sum_j \lambda_j \psi_j(q_1,q_2,\cdots,q_{1+p}) \bar{\psi}_j(q'_1,q'_2,\cdots,q'_{1+p})
\end{equation}
\normalsize
with $\{\psi_j\}$ an orthonormal system, $\lambda_j \geq 0, \forall j$ and $\sum_j \lambda_j \leq 1$. Applying this decomposition in \eqref{L2 norm in fourier side} yields:
\begin{equation}   \label{L2 norm in fourier side after decomposition} 
 \begin{split}
  & \big\|S^{(1,\alpha)}B^{+}_{1;2,\cdots,1+p}\gamma_{\infty,t}^{(1+p)}\big\|_{{L^2(\mathbb{R}^{d}\times \mathbb{R}^{d})}}^2  \\
  &=\sum_{i,j}\lambda_i \lambda_j\int dq_1dq'_1 d\tilde{q}_2d\tilde{\tilde{q}}_2 d\tilde{q}'_2d\tilde{\tilde{q}}'_2 \cdots d\tilde{q}_{1+p}d\tilde{\tilde{q}}_{1+p} d\tilde{q}'_{1+p}d\tilde{\tilde{q}}'_{1+p} \langle q_1\rangle^{2\alpha} \langle q'_1\rangle^{2\alpha}    \\
  & \quad \times \psi_i(q_1-\tilde{q}_2+\tilde{q}'_2-\cdots -\tilde{q}_{1+p}+\tilde{q}'_{1+p},\tilde{q}_2,\cdots,\tilde{q}_{1+p}) \bar{\psi}_i(\tilde{q}'_1,\tilde{q}'_2,\cdots,\tilde{q}'_{1+p})   \\
  & \quad \times \psi_j(q_1-\tilde{\tilde{q}}_2+\tilde{\tilde{q}}'_2-\cdots -\tilde{\tilde{q}}_{1+p}+\tilde{\tilde{q}}'_{1+p},\tilde{\tilde{q}}_2,\cdots,\tilde{\tilde{q}}_{1+p}) \bar{\psi}_j(\tilde{\tilde{q}}'_1,\tilde{\tilde{q}}'_2,\cdots,\tilde{\tilde{q}}'_{1+p})   
 \end{split}
\end{equation}
It is obviously true that
\begin{equation*}
 \langle q_1\rangle^{\alpha} \leq C\Big(\langle q_1-\tilde{q}_2+\tilde{q}'_2-\cdots -\tilde{q}_{1+p}+\tilde{q}'_{1+p}\rangle^{\alpha}+\langle \tilde{q}_2 \rangle^{\alpha}+\langle \tilde{q}'_2 \rangle^{\alpha} +\cdots+\langle \tilde{q}_{1+p}\rangle^{\alpha}+ \langle \tilde{q}'_{1+p}\rangle^{\alpha}\Big)
\end{equation*}
and
\begin{equation*}
  \langle q_1\rangle^{\alpha} \leq C\Big(\langle q_1-\tilde{\tilde{q}}_2+\tilde{\tilde{q}}'_2-\cdots -\tilde{\tilde{q}}_{1+p}+\tilde{\tilde{q}}'_{1+p}\rangle^{\alpha}+\langle \tilde{\tilde{q}}_2 \rangle^{\alpha}+\langle \tilde{\tilde{q}}'_2 \rangle^{\alpha} +\cdots+\langle \tilde{\tilde{q}}_{1+p}\rangle^{\alpha}+ \langle \tilde{\tilde{q}}'_{1+p}\rangle^{\alpha}\Big)
\end{equation*}
multiplying them together we have the following estimate:
\begin{equation}
 \langle q_1\rangle^{2\alpha} \leq C\Big(\langle q_1-\tilde{q}_2+\tilde{q}'_2-\cdots -\tilde{q}_{1+p}+\tilde{q}'_{1+p}\rangle^{\alpha}+\langle \tilde{q}_2 \rangle^{\alpha}+\langle \tilde{q}'_2 \rangle^{\alpha} +\cdots+\langle \tilde{q}_{1+p}\rangle^{\alpha}+ \langle \tilde{q}'_{1+p}\rangle^{\alpha}\Big)
\end{equation}
\begin{equation*}
  \qquad \qquad \times \Big(\langle q_1-\tilde{\tilde{q}}_2+\tilde{\tilde{q}}'_2-\cdots -\tilde{\tilde{q}}_{1+p}+\tilde{\tilde{q}}'_{1+p}\rangle^{\alpha}+\langle \tilde{\tilde{q}}_2 \rangle^{\alpha}+\langle \tilde{\tilde{q}}'_2 \rangle^{\alpha} +\cdots+\langle \tilde{\tilde{q}}_{1+p}\rangle^{\alpha}+ \langle \tilde{\tilde{q}}'_{1+p}\rangle^{\alpha}\Big).
\end{equation*}
After substituting the above bound in \eqref{L2 norm in fourier side after decomposition}, we will obtain $(2p+1)^2$ contributed terms. However, it is enough to illustrate how to control just one of them, since the remaining cases are essentially the same. For instance, the first contribution comes from the replacement of the factor $\langle q_1\rangle^{2\alpha}$ on the RHS of \eqref{L2 norm in fourier side after decomposition} by $\langle q_1-\tilde{q}_2+\tilde{q}'_2-\cdots -\tilde{q}_{1+p}+\tilde{q}'_{1+p}\rangle^{\alpha} \langle q_1-\tilde{\tilde{q}}_2+\tilde{\tilde{q}}'_2-\cdots -\tilde{\tilde{q}}_{1+p}+\tilde{\tilde{q}}'_{1+p}\rangle^{\alpha}$. Using Schwartz inequality we find
\begin{equation}
 \begin{split}
  & \int dq_1dq'_1 d\tilde{q}_2d\tilde{\tilde{q}}_2 d\tilde{q}'_2d\tilde{\tilde{q}}'_2 \cdots d\tilde{q}_{1+p}d\tilde{\tilde{q}}_{1+p} d\tilde{q}'_{1+p}d\tilde{\tilde{q}}'_{1+p}   \\
  & \quad \times \langle q_1-\tilde{q}_2+\tilde{q}'_2-\cdots -\tilde{q}_{1+p}+\tilde{q}'_{1+p}\rangle^{\alpha} \langle q_1-\tilde{\tilde{q}}_2+\tilde{\tilde{q}}'_2-\cdots -\tilde{\tilde{q}}_{1+p}+\tilde{\tilde{q}}'_{1+p}\rangle^{\alpha} \langle q'_1\rangle^{2\alpha}    \\
  & \quad \times \psi_i(q_1-\tilde{q}_2+\tilde{q}'_2-\cdots -\tilde{q}_{1+p}+\tilde{q}'_{1+p},\tilde{q}_2,\cdots,\tilde{q}_{1+p}) \bar{\psi}_i(\tilde{q}'_1,\tilde{q}'_2,\cdots,\tilde{q}'_{1+p})  \\
  & \quad \times \psi_j(q_1-\tilde{\tilde{q}}_2+\tilde{\tilde{q}}'_2-\cdots -\tilde{\tilde{q}}_{1+p}+\tilde{\tilde{q}}'_{1+p},\tilde{\tilde{q}}_2,\cdots,\tilde{\tilde{q}}_{1+p}) \bar{\psi}_j(\tilde{\tilde{q}}'_1,\tilde{\tilde{q}}'_2,\cdots,\tilde{\tilde{q}}'_{1+p})   \\
  &  \leq A+B
 \end{split}
\end{equation}
where
\small
\begin{equation*}
 \begin{split}
  A=&\int dq_1dq'_1 d\tilde{q}_2d\tilde{\tilde{q}}_2 d\tilde{q}'_2d\tilde{\tilde{q}}'_2 \cdots d\tilde{q}_{1+\frac{p}  {2}}d\tilde{\tilde{q}}_{1+p} d\tilde{q}'_{1+p}d\tilde{\tilde{q}}'_{1+p} \langle q'_1\rangle^{2\alpha}\\
  & \times \frac{\langle q_1-\tilde{q}_2+\tilde{q}'_2-\cdots -\tilde{q}_{1+p}+\tilde{q}'_{1+p}\rangle^{2} \langle\tilde{q}_2\rangle^2 \langle \tilde{q}_3\rangle^2 \cdots \langle \tilde{q}_{1+p}\rangle^2 \langle \tilde{\tilde{q}}'_2\rangle^2 \langle \tilde{\tilde{q}}'_3\rangle^2 \cdots \langle \tilde{\tilde{q}}'_{1+p}\rangle^2}{\langle q_1-\tilde{\tilde{q}}_2+\tilde{\tilde{q}}'_2-\cdots -\tilde{\tilde{q}}_{1+p}+\tilde{\tilde{q}}'_{1+p}\rangle^{2-2\alpha} \langle\tilde{\tilde{q}}_2\rangle^2 \langle \tilde{\tilde{q}}_3\rangle^2 \cdots \langle \tilde{\tilde{q}}_{1+p}\rangle^2 \langle \tilde{q}'_2\rangle^2 \langle \tilde{q}'_3\rangle^2 \cdots \langle \tilde{q}'_{1+p}\rangle^2}   \\
  & \times \big|\psi_i(q_1-\tilde{q}_2+\tilde{q}'_2-\cdots -\tilde{q}_{1+p}+\tilde{q}'_{1+p},\tilde{q}_2,\cdots,\tilde{q}_{1+p})\big|^2 \big|\psi_j(\tilde{\tilde{q}}'_1,\tilde{\tilde{q}}'_2,\cdots,\tilde{\tilde{q}}'_{1+p})\big|^2 , \\
\end{split}
\end{equation*}
and
\begin{equation*}
 \begin{split}
  B=&\int dq_1dq'_1 d\tilde{q}_2d\tilde{\tilde{q}}_2 d\tilde{q}'_2d\tilde{\tilde{q}}'_2 \cdots d\tilde{q}_{1+\frac{p}  {2}}d\tilde{\tilde{q}}_{1+p} d\tilde{q}'_{1+p}d\tilde{\tilde{q}}'_{1+p} \langle q'_1\rangle^{2\alpha}\\
  & \times \frac{\langle q_1-\tilde{\tilde{q}}_2+\tilde{\tilde{q}}'_2-\cdots -\tilde{\tilde{q}}_{1+p}+\tilde{\tilde{q}}'_{1+p}\rangle^{2} \langle\tilde{\tilde{q}}_2\rangle^2 \langle \tilde{\tilde{q}}_3\rangle^2 \cdots \langle \tilde{\tilde{q}}_{1+p}\rangle^2 \langle \tilde{q}'_2\rangle^2 \langle \tilde{q}'_3\rangle^2 \cdots \langle \tilde{q}'_{1+p}\rangle^2}{\langle q_1-\tilde{q}_2+\tilde{q}'_2-\cdots -\tilde{q}_{1+p}+\tilde{q}'_{1+p}\rangle^{2-2\alpha} \langle\tilde{q}_2\rangle^2 \langle \tilde{q}_3\rangle^2 \cdots \langle \tilde{q}_{1+p}\rangle^2 \langle \tilde{\tilde{q}}'_2\rangle^2 \langle \tilde{\tilde{q}}'_3\rangle^2 \cdots \langle \tilde{\tilde{q}}'_{1+p}\rangle^2}   \\
  & \times \big|\psi_j(q_1-\tilde{\tilde{q}}_2+\tilde{\tilde{q}}'_2-\cdots -\tilde{\tilde{q}}_{1+p}+\tilde{\tilde{q}}'_{1+p},\tilde{\tilde{q}}_2,\cdots,\tilde{\tilde{q}}_{1+p})\big|^2 \big|\psi_i(\tilde{q}'_1,\tilde{q}'_2,\cdots,\tilde{q}'_{1+p})\big|^2  \\
\end{split}
\end{equation*}
\normalsize
Now let us focus on $A$ below, because $B$ can be handled similarly. Performing integration on $\tilde{\tilde{q}}_2, \tilde{\tilde{q}}_3,\cdots, \tilde{\tilde{q}}_{1+p}$ we obtain:
\small
\begin{equation}      \label{estimation of A}
 \begin{split}
  A\leq &C\int dq_1dq'_1 d\tilde{q}_2 d\tilde{q}'_2d\tilde{\tilde{q}}'_2 \cdots d\tilde{q}_{1+p} d\tilde{q}'_{1+p}d\tilde{\tilde{q}}'_{1+p} \langle q'_1\rangle^{2\alpha}\\
  & \times \frac{\langle q_1-\tilde{q}_2+\tilde{q}'_2-\cdots -\tilde{q}_{1+p}+\tilde{q}'_{1+p}\rangle^{2} \langle\tilde{q}_2\rangle^2 \langle \tilde{q}_3\rangle^2 \cdots \langle \tilde{q}_{1+p}\rangle^2 \langle \tilde{\tilde{q}}'_2\rangle^2 \langle \tilde{\tilde{q}}'_3\rangle^2 \cdots \langle \tilde{\tilde{q}}'_{1+p}\rangle^2}{\langle q_1+\tilde{\tilde{q}}'_2+\cdots+\tilde{\tilde{q}}'_{1+p}\rangle^{2-2\alpha} \langle \tilde{q}'_2\rangle^2 \langle \tilde{q}'_3\rangle^2 \cdots \langle \tilde{q}'_{1+p}\rangle^2}   \\
  & \times \big|\psi_i(q_1-\tilde{q}_2+\tilde{q}'_2-\cdots -\tilde{q}_{1+p}+\tilde{q}'_{1+p},\tilde{q}_2,\cdots,\tilde{q}_{1+p})\big|^2 \big|\psi_j(\tilde{\tilde{q}}'_1,\tilde{\tilde{q}}'_2,\cdots,\tilde{\tilde{q}}'_{1+p})\big|^2   
 \end{split}
\end{equation}
\normalsize
where we used the bound
\begin{equation}     \label{analysis inequality}
 \int_{\mathbb{R}^d} \frac{dy}{\langle W-y \rangle^{2-2\alpha} \langle y \rangle^2} \leq \frac{C}{\langle W \rangle^{2-2\alpha}}
\end{equation}
for all $\alpha <1$ when $d=2$ and for $\alpha \leq 1$ when $d=1$.   \\
Let $\breve{q}_1=q_1-\tilde{q}_2+\tilde{q}'_2-\cdots -\tilde{q}_{1+p}+\tilde{q}'_{1+p}$ in \eqref{estimation of A}. Since $\alpha \leq 1$, we can replace $\langle q'_1\rangle^{2\alpha}$ with $\langle q'_1\rangle^2$ for an upper bound:
\small
\begin{equation}
 \begin{split}
  A \leq &C\int d\breve{q}_1dq'_1 d\tilde{q}_2 d\tilde{q}'_2d\tilde{\tilde{q}}'_2 \cdots d\tilde{q}_{1+p} d\tilde{q}'_{1+p}d\tilde{\tilde{q}}'_{1+p}  \\
  & \times \frac{\langle \breve{q}_1\rangle^2 \langle\tilde{q}_2\rangle^2 \langle \tilde{q}_3\rangle^2 \cdots \langle \tilde{q}_{1+p}\rangle^2  \langle q'_1\rangle^2 \langle \tilde{\tilde{q}}'_2\rangle^2 \langle \tilde{\tilde{q}}'_3\rangle^2 \cdots \langle \tilde{\tilde{q}}'_{1+p}\rangle^2}{\langle \breve{q}_1+\tilde{q}_2-\tilde{q}'_2+\cdots+\tilde{q}_{1+p}-\tilde{q}'_{1+p}+\tilde{\tilde{q}}'_2+\cdots+\tilde{\tilde{q}}'_{1+p}\rangle^{2-2\alpha} \langle \tilde{q}'_2\rangle^2 \langle \tilde{q}'_3\rangle^2 \cdots \langle \tilde{q}'_{1+p}\rangle^2}   \\
  & \times \big|\psi_i(\breve{q}_1,\tilde{q}_2,\cdots,\tilde{q}_{1+p})\big|^2 \big|\psi_j(\tilde{\tilde{q}}'_1,\tilde{\tilde{q}}'_2,\cdots,\tilde{\tilde{q}}'_{1+p})\big|^2    \\
  \leq &C C'_{\alpha} \int d\breve{q}_1 d\tilde{q}_2 d\tilde{q}_3 \cdots d\tilde{q}_{1+p} \langle \breve{q}_1\rangle^2 \langle\tilde{q}_2\rangle^2 \langle \tilde{q}_3\rangle^2 \cdots \langle \tilde{q}_{1+p}\rangle^2 \big|\psi_i(\breve{q}_1,\tilde{q}_2,\cdots,\tilde{q}_{1+p})\big|^2  \\
  & \times \int dq'_1 d\tilde{\tilde{q}}'_2 d\tilde{\tilde{q}}'_3 \cdots d\tilde{\tilde{q}}'_{1+p} \langle q'_1\rangle^2 \langle \tilde{\tilde{q}}'_2\rangle^2 \langle \tilde{\tilde{q}}'_3\rangle^2 \cdots \langle \tilde{\tilde{q}}'_{1+p}\rangle^2 \big|\psi_j(\tilde{\tilde{q}}'_1,\tilde{\tilde{q}}'_2,\cdots,\tilde{\tilde{q}}'_{1+p})\big|^2    \\
 \end{split}
\end{equation}
\normalsize
where $C'_{\alpha}$ is defined as
\begin{equation}     \label{alpha upper bound}
  C'_{\alpha}=\sup_{W\in \mathbb{R}^d} \int \frac{dx_1dx_2\cdots dx_{p}}{\langle W-x_1-x_2-\cdots-x_{p} \rangle^{2-2\alpha} \langle x_1\rangle^2 \langle x_2\rangle^2 \cdots \langle x_{p}\rangle^2}
\end{equation}
For all $\alpha \leq 1$ if $d=1$ and $\alpha <1$ if $d=2$, $C'_{\alpha}<\infty$. 
Now we have the control on one of the $(2p+1)^2$ pieces, and the remaining pieces can be bounded the same way. Recalling \eqref{gamma decomposition} and \eqref{L2 norm in fourier side after decomposition}, we can conclude that
\begin{equation*}
 \begin{split}
  & \big\|S^{(1,\alpha)}B^{+}_{1;2,\cdots,1+p}\gamma_{\infty,t}^{(1+p)}\big\|_{{L^2(\mathbb{R}^{d}\times \mathbb{R}^{d})}}^2  \\
  &\leq C_{\alpha}\sum_{i,j}\lambda_i\lambda_j \int d\breve{q}_1 d\tilde{q}_2 d\tilde{q}_3 \cdots d\tilde{q}_{1+p} \langle \breve{q}_1\rangle^2 \langle\tilde{q}_2\rangle^2 \langle \tilde{q}_3\rangle^2 \cdots \langle \tilde{q}_{1+p}\rangle^2 \big|\psi_i(\breve{q}_1,\tilde{q}_2,\cdots,\tilde{q}_{1+p})\big|^2  \\
  & \qquad \times \int dq'_1 d\tilde{\tilde{q}}'_2 d\tilde{\tilde{q}}'_3 \cdots d\tilde{\tilde{q}}'_{1+p} \langle q'_1\rangle^2 \langle \tilde{\tilde{q}}'_2\rangle^2 \langle \tilde{\tilde{q}}'_3\rangle^2 \cdots \langle \tilde{\tilde{q}}'_{1+p}\rangle^2 \big|\psi_j(\tilde{\tilde{q}}'_1,\tilde{\tilde{q}}'_2,\cdots,\tilde{\tilde{q}}'_{1+p})\big|^2    \\
  &\leq C_{\alpha}\Big(\int d\breve{q}_1 d\tilde{q}_2 d\tilde{q}_3 \cdots d\tilde{q}_{1+p} \langle \breve{q}_1\rangle^2 \langle\tilde{q}_2\rangle^2 \langle \tilde{q}_3\rangle^2 \cdots \langle \tilde{q}_{1+p}\rangle^2    \\
  & \qquad \qquad \times \big|\hat{\gamma}_{\infty,t}^{(1+p)}(\breve{q}_1,\tilde{q}_2,\cdots,\tilde{q}_{1+p};\breve{q}_1,\tilde{q}_2,\cdots,\tilde{q}_{1+p})\big|^2 \Big)^2     \\
  &=C_{\alpha} \Big(Tr(1-\Delta_{x_1})\cdots(1-\Delta_{x_{k+p}}) \gamma_{\infty,t}^{(k+p)}\Big)^2
 \end{split}
\end{equation*}
Therefore \eqref{equivalent spacial bound} follows.
\end{proof}

\begin{thm}  \label{thm:priori induction estimate on the limiting hierarchy}
 Suppose that $d \geq 1$. If $\Gamma_{\infty,t}=\{\gamma_{\infty,t}^{(k)}\}_{k\geq 1}$ is a limit point of the sequence $\tilde{\Gamma}_{N,t}=\{\tilde{\gamma}_{N,t}^{(k)}\}_{k=1}^N$ with respect to the product topology $\tau_{prod}$, then, for every $\alpha >\frac{d}{2}$ there exists a constant $C_{\alpha}$ (also depends on $p_0,d$) such that the estimate
\small
\begin{equation}  \label{priori induction estimate}
    \Big\|S^{(k,\alpha)}B_{j;k+1,\cdots,k+p}\gamma_{\infty,t}^{(k+p)}\Big\|_{L^2(\mathbb{R}^{dk}\times \mathbb{R}^{dk})} \leq  C_{\alpha} \Big\|S^{(k+p,\alpha)}\gamma_{\infty,t}^{(k+p)}\Big\|_{L^2\big(\mathbb{R}^{d(k+p)}\times \mathbb{R}^{d(k+p)}\big)}
  \end{equation}
\normalsize 
holds.
\end{thm}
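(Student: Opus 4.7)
The plan is to adapt the Fourier-side analysis from the proof of Theorem \ref{thm:priori spacial bound on the limiting hierarchy}, but with a different weighted Cauchy--Schwarz that exploits the stronger regularity $\alpha > d/2$. By permutation symmetry in the external variables and since $B^-_{j;k+1,\ldots,k+p}$ is handled identically to $B^+_{j;k+1,\ldots,k+p}$, it suffices to establish the case $k=j=1$, $B=B^+$, i.e.
\[
\bigl\|S^{(1,\alpha)}B^+_{1;2,\ldots,1+p}\gamma^{(1+p)}_{\infty,t}\bigr\|_{L^2(\mathbb{R}^d\times\mathbb{R}^d)}\leq C_\alpha \bigl\|S^{(1+p,\alpha)}\gamma^{(1+p)}_{\infty,t}\bigr\|_{L^2(\mathbb{R}^{d(1+p)}\times\mathbb{R}^{d(1+p)})}.
\]
The Fourier-side expression worked out earlier gives
\[
\bigl(B^+_{1;2,\ldots,1+p}\gamma^{(1+p)}_{\infty,t}\bigr)^\wedge(q_1;q_1')=\int dq_2\,dq_2'\cdots dq_{1+p}\,dq_{1+p}'\,\widehat\gamma^{(1+p)}_{\infty,t}(\tilde q_1,q_2,\ldots,q_{1+p};q_1',q_2',\ldots,q_{1+p}'),
\]
with $\tilde q_1:=q_1-q_2+q_2'-\cdots-q_{1+p}+q_{1+p}'$.

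The core step is a weighted Cauchy--Schwarz in the $2p$ internal momenta $(q_2,q_2',\ldots,q_{1+p},q_{1+p}')$ with weight
\[
w^2:=\langle\tilde q_1\rangle^{2\alpha}\prod_{i=2}^{1+p}\langle q_i\rangle^{2\alpha}\langle q_i'\rangle^{2\alpha},
\]
which bounds $|(B^+\gamma^{(1+p)})^\wedge(q_1;q_1')|^2$ by the product of $\int dq_2\cdots dq'_{1+p}/w^2$ and $\int dq_2\cdots dq'_{1+p}\,w^2|\widehat\gamma^{(1+p)}|^2$. For the first factor I integrate the $2p$ variables out one at a time via the convolution bound
\[
\int_{\mathbb{R}^d}\frac{dy}{\langle W-y\rangle^{2\alpha}\langle y\rangle^{2\alpha}}\leq \frac{C}{\langle W\rangle^{2\alpha}},
\]
valid precisely when $2\alpha>d$. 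Each application pairs the current $\langle\cdot\rangle^{2\alpha}$ factor carrying the $\tilde q_1$-type linear combination with one of the $\langle q_i\rangle^{2\alpha}$ or $\langle q_i'\rangle^{2\alpha}$ factors and produces a new $\langle\cdot\rangle^{2\alpha}$ whose argument has one fewer summand. After $2p$ such iterations the linear combination collapses to $q_1$ alone, so the first factor is bounded by $C^{2p}/\langle q_1\rangle^{2\alpha}$.

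Multiplying by the outer weight $\langle q_1\rangle^{2\alpha}\langle q_1'\rangle^{2\alpha}$ cancels the $\langle q_1\rangle^{-2\alpha}$ gain, and performing the change of variables $q_1\mapsto\tilde q_1$ at fixed $q_2,q_2',\ldots,q_{1+p},q_{1+p}'$ (so $dq_1=d\tilde q_1$) converts the remaining integral into exactly $C^{2p}\|S^{(1+p,\alpha)}\gamma^{(1+p)}_{\infty,t}\|_{L^2}^2$, which is the desired bound. The main obstacle is justifying the iterative reduction of the first factor cleanly, in particular choosing the order in which the $2p$ internal momenta are absorbed so that at each step the convolution estimate applies; once this is verified, the whole argument rests on the single threshold $2\alpha>d$, and no other regularity-dependent input is needed.
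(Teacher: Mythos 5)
Your proof is correct, and it shares the same high-level architecture as the paper's: pass to the Fourier side via Plancherel, specialize to $k=j=1$ and $B^+$, and apply a weighted Cauchy--Schwarz in the $2p$ internal momenta with exactly the weight $w^2=\langle\tilde q_1\rangle^{2\alpha}\prod_{i\ge 2}\langle q_i\rangle^{2\alpha}\langle q'_i\rangle^{2\alpha}$ the paper also uses, so that the problem reduces to bounding the weight integral (the paper's $F_\alpha$) uniformly. The one substantive difference is in that last step. You iterate the convolution estimate $\int_{\mathbb{R}^d}\frac{dy}{\langle W-y\rangle^{2\alpha}\langle y\rangle^{2\alpha}}\leq C\langle W\rangle^{-2\alpha}$ (valid for $2\alpha>d$) to absorb the internal momenta one at a time, collapsing $\langle\tilde q_1\rangle^{-2\alpha}$ down to $\langle q_1\rangle^{-2\alpha}$; the paper instead uses the discrete pointwise bound $\langle u_1\rangle^{2\alpha}\leq C\bigl(\langle u_1+\sum q_i-\sum q'_i\rangle^{2\alpha}+\sum_i\langle q_i\rangle^{2\alpha}+\sum_i\langle q'_i\rangle^{2\alpha}\bigr)$ to cancel one denominator factor per piece, then shifts $q_1$ so that each of the $2p+1$ pieces factors into elementary $\int\langle\cdot\rangle^{-2\alpha}$ integrals. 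Both routes give precisely the threshold $\alpha>d/2$ and the same constant structure $C_\alpha=C^{2p}\leq C^{2p_0}$. The ``obstacle'' you flag about the order of absorption is not really one: the $\langle q_i\rangle^{-2\alpha}$ and $\langle q'_i\rangle^{-2\alpha}$ factors are independent, so any order works, and at each step the estimate is a convolution of $\langle\cdot\rangle^{-2\alpha}$ against another $\langle\cdot\rangle^{-2\alpha}$ evaluated at the surviving linear combination, which is always admissible. Your iterated-convolution computation is essentially the device used in the proof of Theorem~\ref{thm:priori spacial bound on the limiting hierarchy}, transplanted here; the paper's own proof of this theorem opts for the discrete split and variable shift. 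They are interchangeable, and yours arguably avoids the $2p+1$-way case split.
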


\begin{proof}
 We will work on the Fourier side of spacial coordinates. Let $(\textbf{u}_k,\textbf{u}'_k)$, $\textbf{q}:=(q_1,q_2,\cdots,q_{p})$ and $\textbf{q}':=(q'_1,q'_2,\cdots,q'_{p})$ be the Fourier conjugate variables corresponding to $(\textbf{x}_k,\textbf{x}'_k)$, $(x_{k+1},x_{k+2},\cdots,x_{k+p})$ and $(x'_{k+1}, x'_{k+2}, \cdots,x'_{k+p})$ respectively.  \\
Assume $j=1$ in $B_{j;k+1,\cdots,k+p}$ without loss of generality, and we replace the contraction operator by its positive part $B^{+}_{j;k+1,\cdots,k+p}$ here, since the negative part is similar. By Plancherel's theorem
\begin{equation}      \label{square of priori induction estimate}
 \begin{split}
  &\Big\|S^{(k,\alpha)}B_{1;k+1,\cdots,k+p}\gamma_{\infty,t}^{(k+p)}\Big\|_{L^2(\mathbb{R}^{dk}\times \mathbb{R}^{dk})}^2    \\
  & = \int d\textbf{u}_kd\textbf{u}'_k \prod_{j=1}^k \langle u_j \rangle^{2\alpha} \langle u'_j \rangle^{2\alpha}   \\
  & \quad \times \big(\int d\textbf{q}d\textbf{q}'\hat{\gamma}^{(k+p)}(u_1+q_1+\cdots+q_{p}-q'_1-\cdots-q'_{p},u_2,\cdots,u_k,\textbf{q};\textbf{u}'_k,\textbf{q}')\big)^2   
 \end{split}
\end{equation}
Cauchy-Schwartz inequality gives us an upper bound
\begin{equation}
 \begin{split}
  \eqref{square of priori induction estimate} \leq & \int d\textbf{u}_k d\textbf{u}'_k F_{\alpha}(\textbf{u}_k,\textbf{u}'_k) \prod_{j=2}^k \langle u_j \rangle^{2\alpha} \prod_{j=1}^k \langle u'_j \rangle^{2\alpha} \int d\textbf{q}d\textbf{q}'    \\
  &  \times \langle u_1+q_1+\cdots+q_{p}-q'_1-\cdots-q'_{p} \rangle^{2\alpha} \langle q_1\rangle^{2\alpha} \cdots \langle q_{p}\rangle^{2\alpha} \langle q'_1\rangle^{2\alpha} \cdots \langle q'_{p}\rangle^{2\alpha}  \\
  &  \times |\hat{\gamma}^{(k+p)}(u_1+q_1+\cdots+q_{p}-q'_1-\cdots-q'_{p},u_2,\cdots,u_k,\textbf{q};\textbf{u}'_k,\textbf{q}')|^2    \\
  \leq & \sup_{\textbf{u}_k,\textbf{u}'_k} F_{\alpha}(\textbf{u}_k,\textbf{u}'_k) \times \Big\|S^{(k+p,\alpha)}\gamma_{\infty,t}^{(k+p)}\Big\|_{L^2\big(\mathbb{R}^{d(k+p)}\times \mathbb{R}^{d(k+p)}\big)}
 \end{split}
\end{equation}
Where 
\small
\begin{equation}  \label{definition of Falpha} 
  F_{\alpha}(\textbf{u}_k,\textbf{u}'_k):=\int\frac{\langle u_1\rangle^{2\alpha} d\textbf{q}d\textbf{q}' }{\langle u_1+q_1+\cdots+q_{p}-q'_1-\cdots-q'_{p} \rangle^{2\alpha} \langle q_1\rangle^{2\alpha} \cdots \langle q_{p}\rangle^{2\alpha} \langle q'_1\rangle^{2\alpha} \cdots \langle q'_{p}\rangle^{2\alpha}}.
\end{equation}
\normalsize
Because of our simplifications at the beginning (specification of $j$ and neglect of the negative part of $B_{j;k+1,\cdots,k+p}$), function $F_{\alpha}(\textbf{u}_k,\textbf{u}'_k)$ only depends on $u_1$ here.  
From
\small
\begin{equation*} 
 \langle u_1 \rangle^{2\alpha} \leq C (\langle u_1+q_1+\cdots+q_{p}-q'_1-\cdots-q'_{p} \rangle^{2\alpha}+\langle q_1\rangle^{2\alpha}+\cdots+\langle q_{p}\rangle^{2\alpha}+\langle q'_1\rangle^{2\alpha}+\cdots+\langle q'_{p}\rangle^{2\alpha})
\end{equation*}
\normalsize
we shift some of the momentum variables to obtain
\begin{equation}  \label{upper bound of Falpha}
  \sup_{\textbf{u}_k,\textbf{u}'_k} F_{\alpha}(\textbf{u}_k,\textbf{u}'_k) \leq C\int \frac{d\textbf{q}d\textbf{q}'}{\langle q_1\rangle^{2\alpha} \cdots \langle q_{p}\rangle^{2\alpha} \langle q'_1\rangle^{2\alpha} \cdots \langle q'_{p}\rangle^{2\alpha}}
\end{equation}
The RHS of \eqref{upper bound of Falpha} is always finite when $\alpha > \frac{d}{2}$. This proves the theorem.
\end{proof}

The above estimate \eqref{priori induction estimate} requires that $\alpha > \frac{d}{2}$. Recall the conditions on $\alpha$ ($\alpha<1$ if $d>2$, $\alpha \leq 1$ if $d=1$) in Theorem \ref{thm:priori spacial bound on the limiting hierarchy}. If we want to use both theorems, only $d=1$ gives us a nonempty intersection of the two conditions, so we cannot afford this when $d>1$. However we need a bound like \eqref{priori induction estimate} for iterative computations in the proof of uniqueness of the limiting hierarchy. We build such a bound in next section.

\vspace{2 mm}
\section{Bounds on the Free Evolution of Infinite Hierarchy}

In this section, we consider the case when the interactions among particles are neglected ($b_0^{(p)}=0$). We will prove a Strichartz estimate that can be used when dealing with recursive Duhamel expansion terms. The approach we followed in this part is exhibited in \cite{KM},\cite{KSS},\cite{CPquintic}. From now on, we will use $\gamma^{(k)}(t,\textbf{x}_{k+p},\textbf{x}'_{k+p})$ to replace $\gamma_{\infty,t}^{(k)}(t,\textbf{x}_{k+p},\textbf{x}'_{k+p})$ for convenience. 

\begin{thm}[Free evolving bound] \label{thm:free evolving bound theorem}
 Assume that $d=2$ and $1-\frac{1}{2(2p_0-1)}<\alpha<1$, $p$ satisfies $1\leq p\leq p_0$.  Let $\gamma^{(k+p)}$ denote the solution of
  \begin{equation} \label{homogeneous equation}
   i\partial_t\gamma^{(k+p)}(t,{\bf x}_{k+p},{\bf x}'_{k+p})+(\Delta_{{\bf x}_{k+p}}-\Delta_{{\bf x}'_{k+p}})\gamma^{(k+p)}(t,{\bf x}_{k+p},{\bf x}'_{k+p})=0
  \end{equation}
 with initial condition
  \begin{equation}
   \gamma^{(k+p)}(0,\cdot)=\gamma_0^{(k+p)} \in \mathcal{H}^{\alpha}
  \end{equation}
where $\mathcal{H}^{\alpha}$ denotes the space of density matrices with finite Hilbert-Schmidt type Sobolev norms:
\begin{equation}
  \mathcal{H}^{\alpha}=\{\gamma^{(k)}: \|S^{(k,\alpha)}\gamma^{(k)}\|_{L^2(\mathbb{R}^{dk} \times \mathbb{R}^{dk})} <\infty \}
\end{equation}
Then, there exists a constant $C=C_{\alpha}$ (also depends on $p_0$) but independent of $j,k$ such that
\begin{equation}  \label{free evolving bound}
  \begin{split}
   & \Big\|S^{(k,\alpha)}B_{j;k+1,\cdots,k+p}\gamma^{(k+p)}\Big\|_{L_{t,{\bf x}_k,{\bf x}'_k}^2(\mathbb{R}\times\mathbb{R}^{2k}\times \mathbb{R}^{2k})}    \\
   & \leq  C_{\alpha} \Big\|S^{(k+p,\alpha)}\gamma_0^{(k+p)}\Big\|_{L_{{\bf x}_{k+p},{\bf x}'_{k+p}}^2(\mathbb{R}^{2(k+p)}\times \mathbb{R}^{2(k+p)})}
  \end{split}
\end{equation}
holds.
\end{thm}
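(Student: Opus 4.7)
The plan is to follow the Klainerman-Machedon strategy \cite{KM} adapted to $(p+1)$-body contractions, generalizing the quintic case treated in \cite{CPquintic}. By symmetry I would reduce to $j=1$ and the positive part $B^{+}_{1;k+1,\ldots,k+p}$ (the case $B^{-}$ and the other values of $j$ differ only by a relabelling of momenta). Under the free evolution $\widehat\gamma^{(k+p)}(t)=e^{-it(|\mathbf{u}_{k+p}|^2-|\mathbf{u}'_{k+p}|^2)}\widehat\gamma_0^{(k+p)}$, Plancherel in the spatial variables combined with the Fourier-side formula for $B^{+}$ computed in the proof of Theorem \ref{thm:priori spacial bound on the limiting hierarchy} yields
\begin{equation*}
\bigl(S^{(k,\alpha)}B^{+}_{1;k+1,\ldots,k+p}\gamma^{(k+p)}\bigr)^{\wedge}(t,\mathbf{u}_k,\mathbf{u}'_k) = \prod_{j=1}^k \langle u_j\rangle^{\alpha}\langle u'_j\rangle^{\alpha}\int d\mathbf{q}\, d\mathbf{q}'\, e^{-it\Phi}\, \widehat\gamma_0^{(k+p)}\bigl(u_1^{\ast},u_2,\ldots,u_k,\mathbf{q};\mathbf{u}'_k,\mathbf{q}'\bigr),
\end{equation*}
where $\mathbf{q}=(q_{k+1},\ldots,q_{k+p})$, $u_1^{\ast}=u_1-\sum_{i=1}^p q_{k+i}+\sum_{i=1}^p q'_{k+i}$, and $\Phi=|u_1^{\ast}|^2+\sum_{j=2}^{k}|u_j|^2+\sum_{i=1}^{p}|q_{k+i}|^2-\sum_{j=1}^{k}|u'_j|^2-\sum_{i=1}^{p}|q'_{k+i}|^2$.

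Next I would take the $L^2_t$ norm by Plancherel, which converts $e^{-it\Phi}$ into $\delta(\tau-\Phi)$. Thus the square of the left-hand side of \eqref{free evolving bound} equals
\begin{equation*}
\int d\tau\, d\mathbf{u}_k\, d\mathbf{u}'_k\, \prod_{j=1}^{k}\langle u_j\rangle^{2\alpha}\langle u'_j\rangle^{2\alpha}\,\Bigl|\int d\mathbf{q}\, d\mathbf{q}'\,\delta(\tau-\Phi)\,\widehat\gamma_0^{(k+p)}(\cdots)\Bigr|^{2}.
\end{equation*}
Since the argument of $\widehat\gamma_0^{(k+p)}$ contains $u_1^{\ast}$ rather than $u_1$, I would first distribute the mismatched weight via $\langle u_1\rangle^{\alpha}\lesssim \langle u_1^{\ast}\rangle^{\alpha}+\sum_{i=1}^p\bigl(\langle q_{k+i}\rangle^{\alpha}+\langle q'_{k+i}\rangle^{\alpha}\bigr)$ under the integral. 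Squaring and applying Cauchy-Schwarz on the finite sum produces $2p+1$ sub-terms. For each sub-term I apply a Cauchy-Schwarz in $(\mathbf{q},\mathbf{q}')$ against a weight designed so that the ``data factor,'' after $\tau$-integration (which removes the delta) and the change of variable $\tilde u_1:=u_1^{\ast}$ (Jacobian $1$), reproduces exactly $\|S^{(k+p,\alpha)}\gamma_0^{(k+p)}\|_{L^2}^2$.

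The estimate then reduces to showing that
\begin{equation*}
\sup_{u_1,\tau}\int d\mathbf{q}\, d\mathbf{q}'\,\frac{\delta(\tau-\Phi)}{\mathcal{W}_{\ell}(\mathbf{q},\mathbf{q}';u_1)}<\infty,\qquad \ell=0,1,\ldots,2p,
\end{equation*}
where the principal weight is $\mathcal{W}_0=\prod_{i=1}^p\langle q_{k+i}\rangle^{2\alpha}\langle q'_{k+i}\rangle^{2\alpha}$ and the $2p$ companion weights are obtained by replacing one factor $\langle q_{k+i}\rangle^{2\alpha}$ or $\langle q'_{k+i}\rangle^{2\alpha}$ by $\langle u_1^{\ast}\rangle^{2\alpha}$. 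Each is a multilinear Strichartz-type integral on $\mathbb{R}^{2}$ with a single quadratic constraint. I would use the delta (linear in any single momentum after freezing the others) to eliminate one variable, and then iterate the elementary bound $\int_{\mathbb{R}^{2}}\langle\xi\rangle^{-2\alpha}\langle a-\xi\rangle^{-2\alpha}\,d\xi\lesssim\langle a\rangle^{-(4\alpha-2)}$, valid for $\tfrac12<\alpha<1$, a total of $2p-1$ times. Each step drops the effective decay exponent by $4\alpha-2$ starting from $2\alpha$, so finiteness at the last step demands $\alpha>1-\tfrac{1}{2(2p-1)}$. Since $p\leq p_0$, the hypothesis $\alpha>1-\tfrac{1}{2(2p_0-1)}$ is precisely what is required, with the worst constraint coming from the companion weight in which the most singular factor $\langle q\rangle^{2\alpha}$ has been removed.

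The main obstacle is this final step: carefully tracking the accumulated loss of integrability through $2p-1$ convolutions, maintaining the non-degeneracy needed for the momentum elimination, and verifying that the Jacobian produced when the delta is solved for the chosen variable does not introduce singularities that would tighten the admissible range of $\alpha$. This is exactly where the sharp threshold $\alpha>1-\tfrac{1}{2(2p_0-1)}$ arises.
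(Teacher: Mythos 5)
Your proposal tracks the paper's argument closely: reduction to $j=1$ and $B^+$, passage to the Fourier side, Plancherel converting the free evolution phase $e^{-it\Phi}$ into $\delta(\tau-\Phi)$, Cauchy--Schwarz against a weight that reproduces $\|S^{(k+p,\alpha)}\gamma_0^{(k+p)}\|_{L^2}$, decomposition of the remaining supremum into $2p+1$ terms via $\langle u_1\rangle^{2\alpha}\lesssim\langle u_1^*\rangle^{2\alpha}+\sum\langle q\rangle^{2\alpha}+\sum\langle q'\rangle^{2\alpha}$, and finally bounding a singular multilinear integral. The one place you genuinely diverge is the last step. The paper integrates the $\delta$ along the component of $q'_p$ parallel to $Q:=u_1+q_1+\cdots+q_p-q'_1-\cdots-q'_{p-1}$, paying a Jacobian $1/|Q|$ and a transverse factor $C_\alpha=\int d\zeta/\langle\zeta\rangle^{2\alpha}$, and then evaluates the resulting $(2p-1)$-fold convolution $\bigl(\tfrac{1}{|\cdot|}\ast\langle\cdot\rangle^{-2\alpha}\ast\cdots\ast\langle\cdot\rangle^{-2\alpha}\bigr)(-u_1)$ by passing to physical space via a mollifier $h$ with $\langle q\rangle^{-2\alpha}<(h\ast|\cdot|^{-2\alpha})(q)$; the singularity at $x=0$ is integrable exactly when $1+(2p-1)(2-2\alpha)<2$, i.e.\ $\alpha>1-\tfrac{1}{2(2p-1)}$. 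You instead iterate a Schur-type two-dimensional convolution bound directly in frequency space. This is an equivalent and somewhat more elementary route, avoiding the mollifier and Fourier duality entirely, and it yields the same threshold.

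Two imprecisions in your sketch would need repair in a careful writeup. First, the $\delta$-elimination produces a homogeneous $|Q|^{-1}$ Jacobian, not a $\langle Q\rangle^{-2\alpha}$ weight, so the very first convolution does not match your stated template $\int\langle\xi\rangle^{-2\alpha}\langle a-\xi\rangle^{-2\alpha}d\xi$; you need the slightly different bound $\int_{\mathbb{R}^2}|a-\xi|^{-1}\langle\xi\rangle^{-2\alpha}d\xi\lesssim\langle a\rangle^{-(2\alpha-1)}$ valid for $\tfrac12<\alpha<1$, and you must also recover (as the paper does) the transverse $C_\alpha$-integral from the component of $q'_p$ orthogonal to $Q$. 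Second, the claim that "each step drops the effective decay exponent by $4\alpha-2$ starting from $2\alpha$" does not match the arithmetic: convolving an exponent-$s$ weight with $\langle\cdot\rangle^{-2\alpha}$ in $\mathbb{R}^2$ outputs exponent $s+2\alpha-2$, so the loss per step is $2-2\alpha$, and the running exponents after $m$ steps are $1-m(2-2\alpha)$. Convergence of the $m$-th convolution then requires $\alpha>1-\tfrac{1}{2m}$, which for $m=2p-1$ is exactly $\alpha>1-\tfrac{1}{2(2p-1)}$, sharpest at $p=p_0$. With that bookkeeping corrected, your iterated-convolution route is a sound alternative to the paper's mollifier argument.
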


\begin{proof}
Following \cite{CPquintic}, since the two norms are both $L^2$ norms, by Plancherel's theorem, it suffices to prove the estimate \eqref{free evolving bound} for the Fourier transform of functions in both sides. As before, by definition of the contraction operator in \eqref{defintion of Bj+} and \eqref{defintion of Bj-}, we only need to estimate the term in $B^{+}_{j;k+1,\cdots,k+p}$; the term in $B^{-}_{j;k+1,\cdots,k+p}$ can be treated in the same manner. Let $(\tau,\textbf{u}_k,\textbf{u}'_k)$, $\textbf{q}:=(q_1,q_2,\cdots,q_{p})$ and $\textbf{q}':=(q'_1,q'_2,\cdots,q'_{p})$ be the Fourier conjugate variables corresponding to $(t,\textbf{x}_k,\textbf{x}'_k)$, $(x_{k+1},x_{k+2},\cdots,x_{k+p})$ and $(x'_{k+1},x'_{k+2},\cdots,x'_{k+p})$ respectively. For convenience, let
\small
\begin{equation}
 \delta(\cdots):=\delta(\tau+(u_1+q_1+q_2+\cdots+q_{p}-q'_1-q'_2-\cdots-q'_{p})^2+\sum_{j=2}^{k} u^2_j +|\textbf{q}|^2-|\textbf{u}'_k|^2-|\textbf{q}'|^2)
\end{equation}
\normalsize
We may also assume that $j=1$ in $B_{j;k+1,\cdots,k+p}$ without loss of generality. Then
\small
\begin{equation} \label{free evolving bound lhs}
 \begin{split}
  &  \Big\|S^{(k,\alpha)}B_{1;k+1,\cdots,k+p}\gamma^{(k+p)}\Big\|^2_{L_{t,{\bf x}_k,{\bf x}'_k}^2(\mathbb{R}\times\mathbb{R}^{2k}\times \mathbb{R}^{2k})}    \\
  &  =\int_{\mathbb{R}} d\tau \int d\textbf{u}_kd\textbf{u}'_k \prod_{j=1}^k \langle u_j \rangle^{2\alpha} \langle u'_j \rangle^{2\alpha}   \\
  &  \times \big(\int d\textbf{q}d\textbf{q}' \delta(\cdots) \hat{\gamma}^{(k+p)}(\tau,u_1+q_1+\cdots+q_{p}-q'_1-\cdots-q'_{p},u_2,\cdots,u_k,\textbf{q};\textbf{u}'_k,\textbf{q}')\big)^2   
 \end{split}
\end{equation}
\normalsize
Applying the Cauchy-Schwarz inequality, the above integral is further bounded by:
\begin{equation*}
 \begin{split}
  \eqref{free evolving bound lhs} \leq & \int_{\mathbb{R}} d\tau \int d\textbf{u}_k d\textbf{u}'_k I_{\alpha,p}(\tau,\textbf{u}_k,\textbf{u}'_k)\prod_{j=2}^k \langle u_j \rangle^{2\alpha} \prod_{j=1}^k \langle u'_j \rangle^{2\alpha} \int d\textbf{q}d\textbf{q}' \delta(\cdots) \\
  &  \times \langle u_1+q_1+\cdots+q_{p}-q'_1-\cdots-q'_{p} \rangle^{2\alpha} \langle q_1\rangle^{2\alpha} \cdots \langle q_{p}\rangle^{2\alpha} \langle q'_1\rangle^{2\alpha} \cdots \langle q'_{p}\rangle^{2\alpha}  \\
  &  \times |\hat{\gamma}^{(k+p)}(\tau,u_1+q_1+\cdots+q_{p}-q'_1-\cdots-q'_{p},u_2,\cdots,u_k,\textbf{q};\textbf{u}'_k,\textbf{q}')|^2    \\
  \leq & \Big\|S^{(k+p,\alpha)}\gamma_0^{(k+p)}\Big\|_{L_{{\bf x}_{k+p},{\bf x}'_{k+p}}^2(\mathbb{R}^{2(k+p)}\times \mathbb{R}^{2(k+p)})} \times \sup_{\tau,\textbf{u}_k,\textbf{u}'_k} I_{\alpha,p}(\tau,\textbf{u}_k,\textbf{u}'_k)
 \end{split}
\end{equation*}
Where 
\small
\begin{equation}    \label{Ialpha}
 \begin{split}
  &  I_{\alpha,p}(\tau,\textbf{u}_k,\textbf{u}'_k):=   \\
  &  \qquad \int d\textbf{q}d\textbf{q}' \frac{\delta(\cdots)\langle u_1\rangle^{2\alpha}}{\langle u_1+q_1+\cdots+q_{p}-q'_1-\cdots-q'_{p} \rangle^{2\alpha} \langle q_1\rangle^{2\alpha} \cdots \langle q_{p}\rangle^{2\alpha} \langle q'_1\rangle^{2\alpha} \cdots \langle q'_{p}\rangle^{2\alpha}}.
 \end{split}
\end{equation}

\normalsize
If we can show that the supremum  of $I_{\alpha,p}$ over $\tau,\textbf{u}_k,\textbf{u}'_k$ is bounded by a constant (which only depends on $\alpha$) then we are done. Now, observe that 
\small
\begin{equation}    \label{decomp of I}
 \langle u_1 \rangle^{2\alpha} \leq C \big(\langle u_1+q_1+\cdots+q_{p}-q'_1-\cdots-q'_{p} \rangle^{2\alpha}+\langle q_1\rangle^{2\alpha}+\cdots+\langle q_{p}\rangle^{2\alpha}+\langle q'_1\rangle^{2\alpha}+\cdots+\langle q'_{p}\rangle^{2\alpha}\big)
\end{equation}
\normalsize
So we have the following:
\begin{equation}
 I_{\alpha,p}(\tau,\textbf{u}_k,\textbf{u}'_k) \leq \sum_{l=1}^{2p+1} J_l,
\end{equation}
where $J_l$ is obtained by using \eqref{decomp of I} and canceling the corresponding term in the denominator of \eqref{Ialpha}. For example,
\begin{equation}
 J_1 \leq C \int d\textbf{q}d\textbf{q}'  \frac{\delta(\cdots)}{\langle q_1\rangle^{2\alpha} \cdots \langle q_{p}\rangle^{2\alpha} \langle q'_1\rangle^{2\alpha} \cdots \langle q'_{p}\rangle^{2\alpha}} 
\end{equation}
and each $J_l$ for $l=2,3,\cdots,2	p+1$ can be brought into a similar form by appropriately translating one of the momenta $q_j,q'_j$.
Following \cite{KM},\cite{KSS},\cite{CPquintic}, we observe the argument of the $\delta$ distribution equals to
\begin{equation*}
 \begin{split}
  Arg&[\delta]=\tau+(u_1+q_1+\cdots+q_{p}-q'_1-\cdots-q'_{p-1})^2+\sum_{j=2}^{k} u^2_j +|\textbf{q}|^2    \\
 & \qquad -|\textbf{u}'_k|^2-|\textbf{q}'|^2+(q'_{p})^2-2(u_1+q_1+\cdots+q_{p}-q'_1-\cdots-q'_{p-1})\cdot q'_{p}
 \end{split}
\end{equation*}
Then we integrate out the $\delta$ distribution using the component of $q'_{p}$ parallel to $u_1+q_1+\cdots+q_{p}-q'_1-\cdots-q'_{p-1}$, which yields
\small
\begin{equation}
J_1 \leq C_{\alpha} C \int \frac{d\textbf{q}dq'_1\cdots dq'_{p-1} }{|u_1+q_1+\cdots+q_{p}-q'_1-\cdots-q'_{p-1}| \langle q_1\rangle^{2\alpha} \cdots \langle q_{p}\rangle^{2\alpha} \langle q'_1\rangle^{2\alpha} \cdots \langle q'_{p-1}\rangle^{2\alpha}}
\end{equation}
\normalsize
Where
\begin{equation}
 C_{\alpha}:=\int_{\mathbb{R}} \frac{d\zeta}{\langle \zeta \rangle^{2\alpha}} 
\end{equation}
Obviously, $C_{\alpha}$ is finite when $\alpha >\frac{1}{2}$ (Note $\alpha >1-\frac{1}{2(2p_0-1)}\geq 1-\frac{1}{2(2p-1)}\geq \frac{1}{2}$).
Following \cite{CPquintic}, in order to bound $J_1$, we introduce a non-negative spherically symmetric function $h$ with rapid decay away from the unit ball in $\mathbb{R}^2$, such that $\check{h}(x) \geq 0$ decays fast outside the unit ball in $\mathbb{R}^2$, and 
\begin{equation} 
 \frac{1}{\langle q \rangle^{2\alpha}} < \big(h*\frac{1}{|\cdot|^{2\alpha}}\big)(q)
\end{equation}
Such a function $h$ does exists. For example we can take $h(y)=c_1 e^{-c_2 y^2}$ with appropriate $c_1, c_2$. Here we need $\alpha<1$ for $h*\frac{1}{|\cdot|^{2\alpha}}$ to stay in $L^{\infty}(\mathbb{R}^2)$. Then (take $d=2$ below):
\begin{equation}   \label{upper bound of J1}
 \begin{split}
  J_1 & < C_{\alpha} C \langle \big(\frac{1}{|\cdot|}*(h*\frac{1}{|\cdot|^{2\alpha}})\big)*(h*\frac{1}{|\cdot|^{2\alpha}}) \cdots *(h*\frac{1}{|\cdot|^{2\alpha}}),(h*\frac{1}{|\cdot|^{2\alpha}}) \rangle_{L^2(\mathbb{R})} \\
    &= C_{\alpha} C \int dx \big(\frac{1}{|\cdot|}\big)^\vee(x) \Big((h*\frac{1}{|\cdot|^{2\alpha}})^{\vee}(x) \Big)^{2p-1}   \\
    &=C_{\alpha} C' \int dx \frac{1}{|x|^{d-1}}\big(\check{h}(x)\big)^{2p-1} (\frac{1}{|x|^{d-2\alpha}})^{2p-1}  \\
    &=C''_{\alpha}< \infty.
 \end{split}
\end{equation}
Thanks to the decay property of $\check{h}(x)$ outside of the unit ball, the only singularity of the above integral is the origin. Thus \eqref{upper bound of J1} holds if 
\begin{equation}   \label{lower bound on alpha}
 d-1+(2p-1)(d-2\alpha)<d  \quad \Longleftrightarrow \quad \alpha > \frac{d}{2}-\frac{1}{2(2p-1)}
\end{equation}
for all $1\leq p\leq p_0$. 

When $d=2$, we need $\alpha >1-\frac{1}{2(p_0-1)}$ to yield \eqref{upper bound of J1}. Terms $J_2,\cdots,J_{p+1}$ can be bounded in the same manner, thus it suffice to choose $C_{\alpha}=(p_0+1)C''_{\alpha}$. Theorem \ref{thm:free evolving bound theorem} is actually a substitution of Theorem \ref{thm:priori induction estimate on the limiting hierarchy} for high dimensions.
\end{proof}

\vspace{2 mm}
\section{Uniqueness of Solutions}   \label{section: uniqueness of solutions}

We are getting close to prove the conclusions on uniqueness with the results in previous sections. Before doing that we need to introduce some notation appeared in the theorems below. Recall that we use $\gamma^{(k)}(t,\cdot)$ to replace $\gamma^{(k)}_{\infty,t}(\cdot)$ when there is no confusion. The infinite hierarchy \eqref{p-GP hierarchy} can be rewritten in integral form as
\begin{equation}    \label{p-GP hierarchy integral form}
 \gamma^{(k)}(t,\cdot)=\mathcal{U}^{(k)}(t)\gamma^{(k)}(0,\cdot)-i\sum_{p=1}^{p_0}b_p\sum_{j=1}^{k}\int_0^t ds  \mathcal{U}^{(k)}(t-s)B_{j;k+1,\dots,k+p}\gamma^{(k+p)}(s,\cdot)
\end{equation}
Here $b_0^{(p)}=\int_{\mathbb{R}^{pd}}V^{(p)}(x)dx$. Recall the \emph{free evolution operator} $\mathcal{U}^{(k)}(t)$ given by
\begin{equation*}
 \mathcal{U}^{(k)}(t)\gamma^{(k)}=e^{it\Delta_{\pm}^{(k)}}\gamma^{(k)}
\end{equation*}
with $\Delta_{\pm}^{(k)}=\Delta_{\textbf{x}_k}-\Delta_{\textbf{x}'_k}$. \\
Now assume the initial condition $\gamma^{(k)}(0,\cdot)=0$. For fixed positive integer $k$, thanks to Duhamel formula, we can write $\gamma^{(k)}$ in terms of the future iterates $\gamma^{(k+p_1)}$, $\gamma^{(k+p_1+p_2)}, \dotsc, \gamma^{(k+p_1+\cdots+p_n)}$, where $p_1, p_2, \cdots p_n$ are integers chosen from set 
$$S_{p_0}:=\{1, 2, 3, \cdots, p_0\}.$$
Also let $Q_j$ be half of the running sum over $p_1, p_2, p_3, \cdots$:
$$Q_j:=p_1+p_2+\cdots+p_j\leq p_0 j, \qquad j=1, 2, \cdots$$  
Conventionally let $Q_0=0$. Then we have 
\begin{equation} \label{gamma expansion}
 \begin{split}  
  \gamma^{(k)}(t_k,\cdot) &=\sum_{p_1 \in S_{p_0}} b_{p_1}\int_0^{t_k} e^{i(t_k-t_{k+Q_1})\Delta^{(k)}_{\pm}}B^k_{k+Q_1}\big(\gamma^{(k+Q_1)}(t_{k+Q_1})\big) dt_{k+Q_1}  \\
  &=\sum_{p_1, p_2\in S_{p_0}}b_{p_1}b_{p_2}\int_0^{t_k}  e^{i(t_k-t_{k+Q_1}) \Delta^{(k)}_{\pm}}B^k_{k+Q_1}   \\
  &\qquad \Big( \int_0^{t_{k+Q_1}} e^{i(t_{k+Q_1}-t_{k+Q_2})\Delta_{\pm}^{(k+Q_1)}} B^{k+Q_1}_{k+Q_2}\big(\gamma^{({k+Q_2})}(t_{k+Q_2})\big) dt_{k+Q_1} \Big) dt_{k+Q_2}  \\
  &=\cdots   \\
  &=\sum_{p_1,\cdots,p_n \in S_{p_0}} \big(\prod_{j=1}^n b_{p_j}\big) \int_0^{t_k} ... \int_0^{t_{k+Q_{n-1}}} J^k(\underline{t}_{k+Q_n})dt_{k+Q_1} \dots dt_{k+Q_n}
 \end{split}   
\end{equation}

where
\begin{equation*}
  \underline{t}_{k+Q_n}=(t_k,t_{k+Q_1},\dotsc,t_{k+Q_n})
\end{equation*}

\begin{equation*}
 B^{k+Q_{j-1}}_{k+Q_j}:=\sum_{j=1}^{k+Q_{n-1}} B_{j;k+Q_{n-1}+1,k+Q_{n-1}+2,\cdots,k+Q_n}
\end{equation*}

\small
\begin{equation*}
 \begin{split}
  & J^k(\underline{t}_{k+Q_n}) :=    \\
  & \qquad  e^{i(t_k-t_{k+Q_1}) \Delta^{(k)}_{\pm}} B^k_{k+Q_1} \cdots
  e^{i(t_{k+Q_{n-1}}-t_{k+Q_n})\Delta_{\pm}^{(k+Q_{n-1})}} B^{k+Q_{n-1}}_{k+Q_n}\big(\gamma^{(k+Q_n)}(t_{k+Q_n})\big)
 \end{split}
\end{equation*}

\normalsize
Here are the main uniqueness theorems for $d=1,2$:

\begin{thm} \label{thm:1D uniqueness}
 Assume that $d=1$, $t \in [0,T]$ and $\frac{1}{2} < \alpha \leq 1$. The maximal potential constant $b_0=\max\{b_0^{(1)}, b_0^{(2)}, \cdots, b_0^{(p_0)}\}\in (0, \infty)$. Then we have 
\small
\begin{equation}
 \Big\|S^{(k,\alpha)}\gamma^{(k)}(t,\cdot)\Big\|_{L^2(\mathbb{R}^k \times \mathbb{R}^k)} \leq C^k(C_0T)^n
\end{equation}
\normalsize
for arbitrary $n$ and constants $C, C_0$ that are depending on $b_0$, $p_0$, $\kappa$ and $\alpha$, but are independent of $k$ and $T$.
\end{thm}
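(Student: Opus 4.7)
The plan is to iterate the Duhamel formula $n$ times to expand $\gamma^{(k)}(t_k,\cdot)$ as in \eqref{gamma expansion}, obtaining a sum over multi-indices $(p_1,\ldots,p_n)\in S_{p_0}^n$ of $n$-fold nested time integrals of $J^k(\underline{t}_{k+Q_n})$. Applying $S^{(k,\alpha)}$ and taking the $L^2_{\mathbf{x}_k,\mathbf{x}'_k}$ norm, I would use that each free evolution $\mathcal{U}^{(k+Q_j)}(\cdot)$ is an isometry commuting with $S^{(k+Q_j,\alpha)}$, so every propagator drops out. The task reduces to bounding, for each $(p_1,\ldots,p_n)$,
\begin{equation*}
  \big\|S^{(k,\alpha)}\, B^{k}_{k+Q_1}\, B^{k+Q_1}_{k+Q_2}\cdots B^{k+Q_{n-1}}_{k+Q_n}\, \gamma^{(k+Q_n)}(t_{k+Q_n},\cdot)\big\|_{L^2_{\mathbf{x}_k,\mathbf{x}'_k}}.
\end{equation*}

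I would peel off the outer $n-1$ contractions one at a time via Theorem~\ref{thm:priori induction estimate on the limiting hierarchy}, which at each step gives
\begin{equation*}
  \big\|S^{(k+Q_{j-1},\alpha)}\,B^{k+Q_{j-1}}_{k+Q_j}(\cdot)\big\|_{L^2} \;\le\; C_\alpha\,(k+Q_{j-1})\,\big\|S^{(k+Q_j,\alpha)}(\cdot)\big\|_{L^2},
\end{equation*}
the factor $k+Q_{j-1}$ coming from the sum $\sum_{i=1}^{k+Q_{j-1}}$ defining $B^{k+Q_{j-1}}_{k+Q_j}$. This is legitimate because $\alpha>\tfrac12=\tfrac{d}{2}$. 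After $n-1$ such reductions the innermost object is $\|S^{(k+Q_{n-1},\alpha)}\,B^{k+Q_{n-1}}_{k+Q_n}\gamma^{(k+Q_n)}\|_{L^2}$, and on this final layer I would invoke Theorem~\ref{thm:priori spacial bound on the limiting hierarchy}, available precisely because $\alpha\le 1$ is compatible with $\alpha>\tfrac12$ in dimension $d=1$. It contributes a bound $C^{k+Q_n}$ per operator in $B^{k+Q_{n-1}}_{k+Q_n}$, of which there are $k+Q_{n-1}$.

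The remaining bookkeeping combines three ingredients: the $n$-fold nested time integrations provide $t_k^n/n!\le T^n/n!$; the sum over $(p_1,\ldots,p_n)\in S_{p_0}^n$ has at most $p_0^n$ entries with $\prod_j b_{p_j}\le b_0^n$; and, using $Q_{j-1}\le p_0(j-1)$,
\begin{equation*}
  \frac{k(k+Q_1)\cdots(k+Q_{n-1})}{n!} \;\le\; p_0^n\binom{\lceil k/p_0\rceil+n-1}{n} \;\le\; p_0^n\,2^{\lceil k/p_0\rceil+n}.
\end{equation*}
Collecting these factors gives an estimate of the asserted form $C^k(C_0 T)^n$, in which $C_0$ absorbs $b_0$, $p_0$, $C_\alpha$, and the constant from Theorem~\ref{thm:priori spacial bound on the limiting hierarchy}, while $C$ absorbs the $k$-dependence produced by $C^{k+Q_n}$ together with $2^{\lceil k/p_0\rceil}$.

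The main obstacle is not any single inequality but the joint deployment of Theorems~\ref{thm:priori spacial bound on the limiting hierarchy} and~\ref{thm:priori induction estimate on the limiting hierarchy}: their hypotheses $\alpha\le 1$ and $\alpha>d/2$ intersect only when $d=1$, yielding the window $\tfrac12<\alpha\le 1$ that the theorem takes as its standing assumption. For $d=2$ this intersection is empty, the naive iterate-and-peel argument collapses, and one is forced into the board-game reorganization of the later sections; in one dimension, however, the present simpler scheme already suffices.
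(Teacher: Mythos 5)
Your proposal is correct and follows essentially the same route as the paper's own proof: iterate Duhamel, commute the unitary free propagators through $S^{(\cdot,\alpha)}$, peel off the first $n-1$ contraction operators via Theorem~\ref{thm:priori induction estimate on the limiting hierarchy} (picking up $C_\alpha(k+Q_{j-1})$ at each step), close with Theorem~\ref{thm:priori spacial bound on the limiting hierarchy} on the innermost layer, and absorb the nested time integral $t_k^n/n!$ and the sum over $(p_1,\ldots,p_n)$ into the final $C^k(C_0T)^n$ bound via the binomial estimate $\binom{\lceil k/p_0\rceil+n-1}{n}\le 2^{\lceil k/p_0\rceil+n-1}$. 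Your closing remark about the $d=1$-only overlap of the two theorems' hypotheses on $\alpha$ is also exactly the reason the paper cites for needing the board-game argument in $d=2$.
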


\begin{thm} \label{thm:2D uniqueness}
 Assume that $d=2$ and $t \in [0,T]$, $1-\frac{1}{2(2p_0-1)} < \alpha \leq 1$. The maximal potential constant $b_0=\max\{b_0^{(1)}, b_0^{(2)}, \cdots, b_0^{(p_0)}\}\in (0, \infty)$. Then we have 
\small
\begin{equation}
 \Big\|S^{(k,\alpha)}\gamma^{(k)}(t,\cdot)\Big\|_{L^2(\mathbb{R}^{2k} \times \mathbb{R}^{2k})} \leq C^k(C_0\sqrt{T})^n
\end{equation}
\normalsize
for arbitrary $n$ and constants $C, C_0$ that are depending on $b_0$, $p_0$, $\kappa$ and $\alpha$, but are independent of $k$ and $T$.
\end{thm}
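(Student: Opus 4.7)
The plan is to iterate the Duhamel expansion \eqref{gamma expansion} exactly $n$ times to express $\gamma^{(k)}(t_k,\cdot)$ as a sum, indexed by $(p_1,\dots,p_n)\in S_{p_0}^n$ and weighted by $\prod_{j=1}^n b_{p_j}\le b_0^n$, of the $n$-fold time-integrated objects $J^k(\underline{t}_{k+Q_n})$. For a fixed sequence $(p_1,\dots,p_n)$ the operator $B^{k+Q_{j-1}}_{k+Q_j}$ is itself a sum of $k+Q_{j-1}$ contractions, so the raw expansion generates $\prod_{j=1}^{n}(k+Q_{j-1})$ scalar Duhamel terms, which naively grows like $n!$. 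After applying $S^{(k,\alpha)}$, taking the $L^2_{{\bf x}_k,{\bf x}'_k}$ norm, and using the triangle inequality, the bound I seek would follow if the sum over these terms can be controlled by $C^k(C_0\sqrt{T})^n$ uniformly in $k$.

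The factorial blow-up is handled by the board game of Section~6, adapted from \cite{KM}: each ordered sequence of contractions is encoded as a matrix recording which distinguished variable $x_j$ is paired with which block $\{x_{k+Q_{l-1}+1},\dots,x_{k+Q_l}\}$, and these matrices are grouped into equivalence classes under admissible row/column permutations that now carry the extra data of the block sizes $p_1,\dots,p_n$. The outcome I expect is that the number of equivalence classes is at most $C^n$ for some constant depending only on $p_0$, while within each class the iterated integral, after reindexing the time variables on an appropriate simplex, can be rewritten as a single clean iterated Duhamel expression on which the Strichartz machinery applies.

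Given a representative of such a class, I would then iterate the free evolving bound \eqref{free evolving bound} of Theorem~\ref{thm:free evolving bound theorem}: at each layer, the outer time integral $\int_0^{t_{k+Q_{j-1}}} dt_{k+Q_j}$ is traded by Cauchy--Schwarz against the $L^2_{t_{k+Q_j}}$ norm of the integrand, producing a factor $\sqrt{T}$, and then the combination of free evolution and contraction operator is absorbed via \eqref{free evolving bound} at the cost of a constant $C_\alpha$ and an upgrade from $S^{(k+Q_{j-1},\alpha)}$ to $S^{(k+Q_j,\alpha)}$ acting on the next iterate. After $n$ such layers I pick up $(C_\alpha b_0 \sqrt{T})^n$ times $\|S^{(k+Q_n,\alpha)}\gamma^{(k+Q_n)}(t_{k+Q_n},\cdot)\|_{L^2}$, and the latter is bounded uniformly in time by $C^{k+Q_n}\le C^k C^{p_0 n}$ through the a priori estimate \eqref{priori spacial bound} of Theorem~\ref{thm:priori spacial bound on the limiting hierarchy}. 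Absorbing $C^{p_0}$ and the $C^n$ from the number of equivalence classes into a single constant $C_0$ yields the stated bound $C^k(C_0\sqrt{T})^n$. The condition $1-\frac{1}{2(2p_0-1)}<\alpha\le 1$ enters precisely so that \eqref{lower bound on alpha} is satisfied for \emph{every} $p\in\{1,\dots,p_0\}$ that can occur in the expansion.

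The principal obstacle is the board game step: the combinatorics is no longer homogeneous as in \cite{KM} and \cite{CPquintic}, since successive contractions may involve different numbers $p_j$ of new particles. One must verify that the Klainerman--Machedon equivalence relation extends to one that simultaneously permutes heterogeneous blocks while preserving the structure of the iterated Duhamel integral, and that each resulting equivalence class is large enough (of the expected multinomial size) to tame the $\prod_j(k+Q_{j-1})$ growth of the raw sum into merely $C^n$. This is the technical novelty highlighted in the introduction and the reason a new board game argument is needed.
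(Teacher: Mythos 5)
Your proposal is correct and follows essentially the same route as the paper: iterate the Duhamel expansion, regroup the $\sim n!$ terms via the board game argument of Section~6 (Lemma~\ref{lem:equal integral}, Lemma~\ref{lem:number of echelon matrix}, Corollary~\ref{cor:equal domain integral}) into $O(2^{k+(p_0+1)(n-1)})$ equivalence classes integrated over domains $D\subset[0,t_k]^n$, then peel off one time integral at a time by Cauchy--Schwarz and the free-evolution Strichartz bound \eqref{free evolving bound}, and close with the a priori bound \eqref{priori spacial bound}, with $\sum_{p_1,\dots,p_n}\prod_j b_0^{(p_j)}\le (p_0 b_0)^n$ absorbed into $C_0^n$. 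The only small imprecision is your claim that the number of equivalence classes is $\le C^n$ independent of $k$; the actual count in Lemma~\ref{lem:number of echelon matrix} carries a $2^k$ factor, which is then absorbed into the overall $C^k$, so the conclusion is unaffected.
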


Based on the above theorems, if we are given sufficiently small $T$, then for all $t\in [0,T]$:
\begin{equation}
 \Big\|S^{(k,\alpha)}\gamma^{(k)}(t,\cdot)\Big\|_{L^2(\mathbb{R}^{dk} \times \mathbb{R}^{dk})}  \to 0\text{  as }n \to 0.
\end{equation}
Which implies that $\gamma^{(k)}(t,\cdot)=0$. Since $k$ is arbitrary, therefore solutions to the infinite hierarchy \eqref{p-GP hierarchy} with zero initial conditions are unique in the above norm.

\begin{proof}[\textbf{Proof of Theorem \ref{thm:1D uniqueness}}]
The idea of the proof is an iterative applications of spacial bound \eqref{priori induction estimate} and Cauchy-Schwarz inequality and at last followed by the use of the bound \eqref{priori spacial bound}. Noticed that $\alpha$ is a constant in $(\frac{1}{2},1]$ and $e^{i(t_k-t_{k+Q_1})\Delta^{(k)}_{\pm}}$ is a unitary operator and commutes with the operator $S^{(k,\alpha)}$, thus we have
\small
\begin{align}
  & \Big\|S^{(k,\alpha)}\int_{0}^{t_k}\dotsi \int_{0}^{t_{k+Q_{n-1}}}J^k(\underline{t}_{k+Q_n})dt_{k+Q_1}\dots  dt_{k+Q_n}\Big\|_{L^2(\mathbb{R}^k \times \mathbb{R}^k)}  \notag  \\
  & \leq \int_{0}^{t_k}\dotsi \int_{0}^{t_{k+Q_{n-1}}} \Big\|S^{(k,\alpha)}e^{i(t_k-t_{k+Q_1}) \Delta^{(k)}_{\pm}} B^k_{k+Q_1} e^{i(t_{k+Q_1}-t_{k+Q_2}) \Delta^{(k+Q_1)}_{\pm}} B^{k+Q_1}_{k+Q_2} \cdots    \notag   \\
  &\qquad \times e^{i(t_{k+Q_{n-1}}-t_{k+Q_n})\Delta_{\pm}^{(k+Q_{n-1})}} B^{{k+Q_{n-1}}}_{k+Q_n}\big(\gamma^{(k+Q_n)}(t_{k+Q_n})\big)\Big\|_{L^2(\mathbb{R}^k \times \mathbb{R}^k)}dt_{k+Q_1}\dots dt_{k+Q_n}   \notag  \\
  & = \int_{0}^{t_k}\dotsi \int_{0}^{t_{k+Q_{n-1}}} \Big\|S^{(k,\alpha)} B^k_{k+Q_1} e^{i(t_{k+Q_1}-t_{k+Q_2}) \Delta^{(k+Q_1)}_{\pm}} B^{k+Q_1}_{k+Q_2} e^{i(t_{k+Q_2}-t_{k+Q_3}) \Delta^{(k+Q_2)}_{\pm}} \cdots    \notag   \\
  &\qquad \times e^{i(t_{k+Q_{n-1}}-t_{k+Q_n})\Delta_{\pm}^{(k+Q_{n-1})}} B^{k+Q_{n-1}}_{k+Q_n}\big(\gamma^{(k+Q_n)}(t_{k+Q_n})\big)\Big\|_{L^2(\mathbb{R}^k \times \mathbb{R}^k)}dt_{k+Q_1}\dots dt_{k+Q_n}   \notag  \\
  & \leq k C_{\alpha} \int_{0}^{t_k}\dotsi \int_{0}^{t_{k+Q_{n-1}}} \Big\|S^{(k+Q_1,\alpha)} e^{i(t_{k+Q_1}-t_{k+Q_2}) \Delta^{(k+Q_1)}_{\pm}} B_{k+Q_2} e^{i(t_{k+Q_2}-t_{k+Q_3}) \Delta^{(k+Q_2)}_{\pm}} \cdots  \label{use induction estimate}     \\
  & \times e^{i(t_{k+Q_{n-1}}-t_{k+Q_n})\Delta_{\pm}^{(k+Q_{n-1})}} B^{k+Q_{n-1}}_{k+Q_n}\big(\gamma^{(k+Q_n)}(t_{k+Q_n})\big)\Big\|_{L^2(\mathbb{R}^{k+Q_1} \times \mathbb{R}^{k+Q_1})}dt_{k+Q_1}\dots dt_{k+Q_n}    \notag  \\
  & \leq \cdots   \notag   \\
  & \leq \prod\limits_{j=0}^{n-2} \Big((k+Q_j)C_{\alpha} \Big) \int_{0}^{t_k}\dotsi \int_{0}^{t_{k+Q_{n-1}}} \Big\|S^{(k+Q_{n-1},\alpha)} e^{i(t_{k+Q_{n-1}}-t_{k+Q_n})\Delta_{\pm}^{(k+Q_{n-1})}}    \label{keep using induction estimate}   \\
  & \qquad \times B^{k+Q_{n-1}}_{k+Q_n}\big(\gamma^{(k+Q_n)}(t_{k+Q_n})\big)\Big\|_{L^2\big(\mathbb{R}^{k+Q_{n-1}} \times \mathbb{R}^{k+Q_{n-1}}\big)}dt_{k+Q_1}\dots dt_{k+Q_n}      \notag  \\
  & = \prod\limits_{j=0}^{n-2} \Big((k+Q_j)C_{\alpha} \Big) \int_{0}^{t_k}\dotsi \int_{0}^{t_{k+Q_{n-1}}} \Big\|S^{(k+Q_{n-1},\alpha)}  \notag  \\ 
  & \qquad  \times B^{k+Q_{n-1}}_{k+Q_n}\big(\gamma^{(k+Q_n)}(t_{k+Q_n})\big)\Big\|_{L^2\big(\mathbb{R}^{k+Q_{n-1}} \times \mathbb{R}^{k+Q_{n-1}}\big)}dt_{k+Q_1}\dots dt_{k+Q_n}  \notag   \\
  & \leq C_{\alpha}^{n-1} \prod\limits_{j=0}^{n-1} (k+p_0 j) \int_{0}^{t_k}\dotsi \int_{0}^{t_{k+Q_{n-1}}} C^{k+p_0 n}dt_{k+Q_1}\dots dt_{k+Q_n}   \label{use spacial bound} \\
  & \leq p_0^n C_{\alpha}^{n-1} \lceil{\frac{k}{p_0}}\rceil (\lceil{\frac{k}{p_0}}\rceil+1)\cdots (\lceil{\frac{k}{p_0}}\rceil+n-1) C^{k+p_0 n} \frac{t_k^n}{n!}   \notag    \\
  & = C^{k+p_0 n} p_0^n C_{\alpha}^{n-1} \binom{\lceil{\frac{k}{p_0}}\rceil+n-1}{n} t_k^n     \notag   \\
  & \leq C^k (p_0 C^{p_0}t_k)^n C_{\alpha}^{n-1} 2^{\lceil{\frac{k}{p_0}}\rceil+n-1}    \notag   
\end{align}
Thus
\begin{align*}
&\qquad  \Big\|S^{(k,\alpha)}\gamma^{(k)}(t_k,\cdot)\Big\|_{L^2(\mathbb{R}^{k} \times \mathbb{R}^{k})}  \\
& \leq \sum_{p_1,\cdots,p_n \in S_{p_0}} \big(\prod_{j=1}^n b_0^{(p_j)}\big) \Big\|S^{(k,\alpha)}\int_{0}^{t_k}\dotsi \int_{0}^{t_{k+Q_{n-1}}}J^k(\underline{t}_{k+Q_n})dt_{k+Q_1}\dots  dt_{k+Q_n}\Big\|_{L^2(\mathbb{R}^k \times \mathbb{R}^k)} \\
&\leq p_0 (b_0)^n C^k (p_0 C^{p_0}t_k)^n C_{\alpha}^{n-1} 2^{\lceil{\frac{k}{p_0}}\rceil+n-1}   \\
&\leq C^k (C_0T)^n  
\end{align*}
\normalsize
where \eqref{use induction estimate} is based on \eqref{priori induction estimate} and we keep using \eqref{priori induction estimate} to obtain \eqref{keep using induction estimate}. Since $e^{i(t_{k+Q_{n-1}}-t_{k+Q_n})\Delta_{\pm}^{(k+Q_{n-1})}}$ is unitary and commutes with $S^{(k+Q_{n-1},\alpha)}$, then after applying Theorem \ref{thm:priori spacial bound on the limiting hierarchy}, we have \eqref{use spacial bound}. $\lceil{x}\rceil$ is the ceiling function. In the last line, choose appropriate $C$ and $C_0$ to finish the proof.                                                                                                                                                                                                                                                                                                                                                                                                                                                                                
\end{proof}

For the proof of Theorem \ref{thm:2D uniqueness}, we run the following combinatorial argument which is inspired by \cite{KM}.

\vspace{2 mm}
\section{Combinatorial Arguments}

\subsection{Graphical Representations}
The key point in the proof of Theorem \ref{thm:2D uniqueness} is to handle the iterative terms from Duhamel formula. Throughout this section, we will prove some lemmas to help us group these terms and also derive some bounds on certain equivalence classes. The technique we used here is an analogous to \cite{KM} and \cite{CPquintic}, but in a much more generalized setting. 

For the reader's convenience, recall some notations we have defined before:
$$\forall 1\leq j\leq n, \quad p_j\in S_{p_0}=\{1, 2, 3, \cdots, p_0\}.$$
$$Q_j:=p_1+p_2+\cdots+p_j, \qquad j=1, 2, \cdots$$ 
Also, $B^{k+Q_{n-1}}_{k+Q_n}=\sum_{j=1}^{k+Q_{n-1}} B_{j;k+Q_{n-1}+1,\cdots,k+Q_n}$, we can rewrite $J^k(\underline{t}_{k+Q_n})$ as the following:
\begin{equation}
 J^k(\underline{t}_{k+Q_n})=\sum\limits_{\mu \in M}J^k(\underline{t}_{k+Q_n};\mu)
\end{equation}
where 
\small
\begin{align*}
 & J^k(\underline{t}_{k+Q_n};\mu)=e^{i(t_k-t_{k+Q_1}) \Delta^{(k)}_{\pm}} B_{\mu(k+1);k+1,\cdots,k+Q_1}e^{i(t_{k+Q_1}-t_{k+Q_2}) \Delta^{(k+Q_1)}_{\pm}} \cdots  \\
 & \quad \times e^{i(t_{k+Q_{n-1}}-t_{k+Q_n})\Delta_{\pm}^{(k+Q_{n-1})}} B_{\mu(k+Q_{n-1}+1);k+Q_{n-1}+1,\cdots,k+Q_n}\big(\gamma^{(k+Q_n)}(t_{k+Q_n})\big)
\end{align*}
\normalsize
and $\mu$ is a map from $\{k+1,k+2,\dots,k+Q_{n-1}+1\}$ to $\{1,2,\dots,k+Q_{n-1}\}$ such that $\mu(2)=1$ and $\mu(j)<j$ for all $j$. $M$ is the set of all these mappings. 

By the definition of $\mu$, we can represent it by highlighting exactly one nonzero entry in each column of a $(k+Q_{n-1}) \times n$ matrix like:

\begin{equation}
 \begin{pmatrix}
  \mathbf{B_{1;k+1,\cdots,k+Q_1}}&B_{1;k+Q_1+1,\cdots,k+Q_2} & \cdots &\mathbf{B_{1;k+Q_{n-1}+1,\cdots,k+Q_n}} \\
  B_{2;k+1,\cdots,k+Q_1} & \mathbf{B_{2;k+Q_1+1,\cdots,k+Q_2}} & \cdots &B_{2;k+Q_{n-1}+1,\cdots,k+Q_n}  \\
  \cdots & \cdots &  \cdots & \cdots  \\
  B_{k;k+1,\cdots,k+Q_1}&B_{k;k+Q_1+1,\cdots,k+Q_2}& \cdots &B_{k;k+Q_{n-1}+1,\cdots,k+Q_n}  \\
  0 & B_{k+1;k+Q_1+1,\cdots,k+Q_2} & \cdots &B_{k+1;k+Q_{n-1}+1,\cdots,k+Q_n}  \\
  \cdots & \cdots & \cdots  & \cdots  \\
  0 & 0 & \cdots & B_{k+Q_{n-1};k+Q_{n-1}+1,\cdots,k+Q_n}  \\
 \end{pmatrix}
\end{equation}

\normalsize
Henceforth we can rewrite \eqref{gamma expansion} as
\begin{equation}
  \gamma^{(k)}(t_k,\cdot)=\int_0^{t_k} ... \int_0^{t_{k+Q_{n-1}}}\sum\limits_{\mu \in M} J^k(\underline{t}_{k+Q_n};\mu)dt_{k+Q_1} \dots dt_{k+Q_n}
\end{equation}

So the basic term of the above sum is the following integral
\begin{equation}  \label{basic integral term}
 I(\mu,\sigma)=\int_{t_k \geq t_{\sigma(k+Q_1)} \geq \cdots \geq t_{\sigma(k+Q_n)}} J^k(\underline{t}_{k+Q_n};\mu)dt_{k+Q_1} \dots dt_{k+Q_n}
\end{equation}
where $\sigma$ is a permutation of ${k+Q_1,k+Q_2,\dots,k+Q_n}$. We will associate the integral $I(\mu,\sigma)$ to the following $(k+Q_{n-1}+1) \times n$ matrix. Matrix \eqref{integral matrix} is also helpful to visualize $B_{\mu(k+Q_{j-1}+1);k+Q_{j-1}+1,\cdots,k+Q_j}$, $j=1,2,\dots,n$ and $\sigma$:
\begin{equation}
  \begin{pmatrix}   \label{integral matrix}
  t_{\sigma^{-1}(k+Q_1)} & t_{\sigma^{-1}(k+Q_2)} & \cdots & t_{\sigma^{-1}(k+Q_n)}  \\
  \mathbf{B_{1;k+1,\cdots,k+Q_1}}&B_{1;k+Q_1+1,\cdots,k+Q_2} & \cdots &\mathbf{B_{1;k+Q_{n-1}+1,\cdots,k+Q_n}} \\
  B_{2;k+1,\cdots,k+Q_1} & \mathbf{B_{2;k+Q_1+1,\cdots,k+Q_2}} & \cdots &B_{2;k+Q_{n-1}+1,\cdots,k+Q_n}  \\
  \cdots & \cdots &  \cdots & \cdots  \\
  B_{k;k+1,\cdots,k+Q_1}&B_{k;k+Q_1+1,\cdots,k+Q_2}& \cdots &B_{k;k+Q_{n-1}+1,\cdots,k+Q_n}  \\
  0 & B_{k+1;k+Q_1+1,\cdots,k+Q_2} & \cdots &B_{k+1;k+Q_{n-1}+1,\cdots,k+Q_n}  \\
  \cdots & \cdots & \cdots  & \cdots  \\
  0 & 0 & \cdots & B_{k+Q_{n-1};k+Q_{n-1}+1,\cdots,k+Q_n}  \\
 \end{pmatrix}
\end{equation}
We label the columns of matrix \eqref{integral matrix} by $1$ through $n$ while rows $0$ through $k+Q_{n-1}$. 

\subsection{Acceptable Moves}
It is an important step to introduce the so called ``acceptable move'' on the set of matrices like \eqref{integral matrix}. In particular, if $\mu(k+Q_j+1)<\mu(k+Q_{j-1}+1)$, we are allowed to do the following changes at the same time:
\begin{itemize}
 \item exchange the highlights in columns $j$ and $j+1$
 \item exchange the highlights in rows $k+Q_{j-1}+1$ and $k+Q_j+1$
 \item exchange the highlights in rows $k+Q_{j-1}+2$ and $k+Q_j+2$
 \item $\cdots$ 
 \item exchange the highlights in rows $k+Q_{j-1}+r_0$ and $k+Q_j+r_0$
 \item exchange $t_{\sigma^{-1}(k+Q_j)}$ and $t_{\sigma^{-1}(k+Q_{j+1})}$
\end{itemize}
with $r_0:=\min\{p_j,p_{j+1}\}.$

For instance, if $k=1,n=4,p_1=2, p_2=1,p_3=3,p_4=2$, then we go from 
\begin{equation*}
  \begin{pmatrix}
  t_{\sigma^{-1}(1+Q_1)} & t_{\sigma^{-1}(1+Q_2)} & t_{\sigma^{-1}(1+Q_3)} & t_{\sigma^{-1}(1+Q_4)} \\
  \mathbf{B_{1;2,3}} & B_{1;4} & B_{1;5,6,7} & B_{1;8,9} \\
  0 & B_{2;4} & \mathbf{B_{2;5,6,7}} & B_{2;8,9}  \\
  0 & \mathbf{B_{3;4}} & B_{3;5,6,7} & B_{3;8,9}  \\
  0 & 0 & B_{4;5,6,7} & \mathbf{B_{4;8,9}}  \\
  0 & 0 & 0 & B_{5;8,9}  \\
  0 & 0 & 0 & B_{6;8,9}  \\
  0 & 0 & 0 & B_{7;8,9}  \\
 \end{pmatrix}
\end{equation*}
to
\begin{equation*}
  \begin{pmatrix}
  t_{\sigma^{-1}(1+Q_1)} & t_{\sigma^{-1}(1+Q_3)} & t_{\sigma^{-1}(1+Q_2)} & t_{\sigma^{-1}(1+Q_4)} \\
  \mathbf{B_{1;2,3}} & B_{1;4} & B_{1;5,6,7} & B_{1;8,9} \\
  0 & \mathbf{B_{2;4}} & B_{2;5,6,7} & B_{2;8,9}  \\
  0 & B_{3;4} & \mathbf{B_{3;5,6,7}} & B_{3;8,9}  \\
  0 & 0 & B_{4;5,6,7} & B_{4;8,9}  \\
  0 & 0 & 0 & \mathbf{B_{5;8,9}}  \\
  0 & 0 & 0 & B_{6;8,9}  \\
  0 & 0 & 0 & B_{7;8,9}  \\
 \end{pmatrix}
\end{equation*}

The reason for taking such moves is explained by the following lemma.

\begin{lemma}  \label{lem:equal integral}
 Let $(\mu,\sigma)$ be transformed into $(\mu',\sigma')$ by an acceptable move. Then, for the corresponding integrals \eqref{basic integral term}, $I(\mu,\sigma)=I(\mu',\sigma')$
\end{lemma}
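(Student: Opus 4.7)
The goal is to exhibit $I(\mu,\sigma)$ and $I(\mu',\sigma')$ as the same integral after two simultaneous, measure-preserving relabelings of dummy variables: a swap among the spatial variables induced by the bosonic symmetry of $\gamma^{(k+Q_n)}$, and a swap of the time variables $t_{k+Q_j}\leftrightarrow t_{k+Q_{j+1}}$ implemented by Fubini. I would first isolate the portion of $J^k(\underline{t}_{k+Q_n};\mu)$ affected by the move, namely
\[
\cdots e^{i(t_{k+Q_{j-1}}-t_{k+Q_j})\Delta^{(k+Q_{j-1})}_\pm}\, B_j\, e^{i(t_{k+Q_j}-t_{k+Q_{j+1}})\Delta^{(k+Q_j)}_\pm}\, B_{j+1}\cdots
\]
with $B_j = B_{\mu(k+Q_{j-1}+1);\,k+Q_{j-1}+1,\ldots,k+Q_j}$ and $B_{j+1}=B_{\mu(k+Q_j+1);\,k+Q_j+1,\ldots,k+Q_{j+1}}$. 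The structural input is the hypothesis $\mu(k+Q_j+1)<\mu(k+Q_{j-1}+1)\leq k+Q_{j-1}$: both pivots lie in the ``old'' index set $\{1,\ldots,k+Q_{j-1}\}$, while the two contracted blocks $\{k+Q_{j-1}+1,\ldots,k+Q_j\}$ and $\{k+Q_j+1,\ldots,k+Q_{j+1}\}$ are disjoint. In particular the two $\delta$-supported contraction kernels act on mutually disjoint sets of coordinates and commute as operators.

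Next I would invoke the permutation $\pi$ on spatial labels that swaps $x_{k+Q_{j-1}+r}\leftrightarrow x_{k+Q_j+r}$ (and the corresponding primed variables) for $r=1,\ldots,r_0=\min\{p_j,p_{j+1}\}$. Because $\gamma^{(k+Q_n)}(t_{k+Q_n};\cdot)$ is symmetric under simultaneous permutations of its unprimed and primed indices, the substitution $\pi$ applied inside the integral does not change the value of $\gamma^{(k+Q_n)}$. Tracking how $\pi$ relabels the contraction operators and free evolutions in the chain, the pivot $\mu(k+Q_{j-1}+1)$ moves from column $j$ to column $j+1$ (and $\mu(k+Q_j+1)$ from column $j+1$ to column $j$), and for every subsequent column $\ell>j+1$ the pivot $\mu(k+Q_{\ell-1}+1)$ is updated exactly when it matches one of the swapped indices --- this is the ``exchange highlights in rows $k+Q_{j-1}+r$ and $k+Q_j+r$'' prescription. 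Combined with sliding $B_{j+1}$ past $B_j$ by disjoint-support commutativity, the integrand becomes $J^k(\widetilde{\underline{t}}_{k+Q_n};\mu')$, where $\widetilde{\underline{t}}$ differs from $\underline{t}$ only by swapping $t_{k+Q_j}$ and $t_{k+Q_{j+1}}$.

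The last step is Fubini's theorem: since $\underline{t}_{k+Q_n}$ consists of dummy integration variables, renaming $t_{k+Q_j}\leftrightarrow t_{k+Q_{j+1}}$ leaves the integrand invariant in its new form and sends the ordered domain $\{t_k\geq t_{\sigma(k+Q_1)}\geq\cdots\geq t_{\sigma(k+Q_n)}\}$ to $\{t_k\geq t_{\sigma'(k+Q_1)}\geq\cdots\geq t_{\sigma'(k+Q_n)}\}$, with $\sigma'$ obtained from $\sigma$ by swapping the positions of $k+Q_j$ and $k+Q_{j+1}$. This matches the final bullet ``exchange $t_{\sigma^{-1}(k+Q_j)}$ and $t_{\sigma^{-1}(k+Q_{j+1})}$'' of the acceptable move, and we conclude $I(\mu,\sigma)=I(\mu',\sigma')$.

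The delicate point I expect to be the main obstacle is the propagation of the middle free evolution $e^{i(t_{k+Q_j}-t_{k+Q_{j+1}})\Delta^{(k+Q_j)}_\pm}$ through the $B$ operators in a way consistent with $\pi$. One has to tensor-decompose this Laplacian into an ``old'' piece on indices $1,\ldots,k+Q_{j-1}$, which commutes trivially with both $B$'s, and a ``new'' piece on $k+Q_{j-1}+1,\ldots,k+Q_j$, which is annihilated (or absorbed into the appropriate pivot derivative) upon applying $B_j$'s $\delta$-contractions. Keeping track of this cancellation when $p_j\neq p_{j+1}$, so that only $r_0=\min\{p_j,p_{j+1}\}$ pairs of indices are permuted while the remaining ``excess'' new indices play a spectator role, is the most intricate bookkeeping in the proof.
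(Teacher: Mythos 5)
Your outline correctly identifies the three ingredients that do the work in the paper's proof — the permutation symmetry of $\gamma^{(k+Q_n)}$, the disjoint-support commutativity of the two contraction operators $B_{l;k+Q_{j-1}+1,\dots,k+Q_j}$ and $B_{i;k+Q_j+1,\dots,k+Q_{j+1}}$, and a renaming of the time integration variables together with a Fubini rearrangement of the ordered domain. Those are exactly the moves the paper makes. However, you flag the propagation of the middle free evolution $e^{i(t_{k+Q_j}-t_{k+Q_{j+1}})\Delta^{(k+Q_j)}_\pm}$ as ``the most intricate bookkeeping'' and do not carry it out; unfortunately, the decomposition you sketch for that step is also incorrect, so there is a genuine gap rather than an omitted routine computation.

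Concretely, you propose to split $\Delta^{(k+Q_j)}_\pm$ into an ``old'' piece on indices $1,\dots,k+Q_{j-1}$ (claimed to commute trivially with both $B$'s) and a ``new'' piece on $k+Q_{j-1}+1,\dots,k+Q_j$ (claimed to be annihilated or absorbed by $B_j$'s delta-contractions). Neither claim survives inspection. The old piece contains $\Delta_{\pm,x_l}$, and $l=\mu(k+Q_{j-1}+1)$ is precisely the pivot of $B_{l;k+Q_{j-1}+1,\dots,k+Q_j}$, so this Laplacian does \emph{not} commute with that contraction (after the substitution $x_{k+Q_{j-1}+r}\mapsto x_l$, the chain rule produces cross terms). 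The new piece acts on exactly the coordinates that $B_l$ contracts away, and it is neither annihilated nor simply converted into a pivot derivative: $\Delta_{x_{k+Q_{j-1}+r}}$ followed by the substitution $x_{k+Q_{j-1}+r}\mapsto x_l$ is not the same as the substitution followed by $\Delta_{x_l}$. The paper avoids this entirely with a different split: it singles out $\Delta_{\pm,x_i}$, where $i=\mu(k+Q_j+1)$ is the pivot of the \emph{second} contraction, and writes $\Delta^{(k+Q_j)}_\pm=\Delta_{\pm,x_i}+(\Delta^{(k+Q_j)}_\pm-\Delta_{\pm,x_i})$. Because $i<l$, the first summand involves neither $x_l$ nor any contracted coordinate of $B_l$ and therefore slides to the left past $B_l$; the remainder involves neither $x_i$ nor the contracted coordinates $k+Q_j+1,\dots,k+Q_{j+1}$ of $B_i$ and therefore slides to the right past $B_i$. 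This produces the operator identity (the paper's equation for $P$ and $\tilde P$), and after it the sign on the middle propagator flips to $-$; that sign is restored by precisely the simultaneous spatial relabeling $x_{k+Q_{j-1}+r}\leftrightarrow x_{k+Q_j+r}$, $r\le r_0$, and the time swap $t_{k+Q_j}\leftrightarrow t_{k+Q_{j+1}}$ — a cancellation your writeup does not address. To close the gap you need to replace the old/new split by the paper's pivot-of-the-second-$B$ split and then track the sign of the middle exponential through the joint spatial and temporal change of variables.
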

\begin{proof}
 This is a relatively straightforward proof but somewhat tedious, as in \cite{KM} and \cite{CPquintic}. We modify the proof of Lemma 7.1 in \cite{CPquintic} so that it can be used here. Since there is only one acceptable move between the two integrals, most part of their expressions share the same terms. Let us fix $j\geq 3$, select two integers $i, l$ such that $i<l<j<j+1$ and compare $I(\mu,\sigma)$ and $I(\mu',\sigma')$
\begin{equation}    \label{integral before move}
 \begin{split}
  I(\mu,\sigma) & =\int_{t_k \geq\cdots t_{\sigma(k+Q_j)} \geq t_{\sigma(k+Q_{j+1})} \cdots\geq t_{\sigma(k+Q_n)}\geq 0} J^k(\underline{t}_{k+Q_n};\mu)dt_{k+Q_1} \dots dt_{k+Q_n}  \\
  & = \int_{t_k \geq\cdots t_{\sigma(k+Q_j)} \geq t_{\sigma(k+Q_{j+1})} \cdots\geq t_{\sigma(k+Q_n)}\geq 0} \cdots e^{i(t_{k+Q_{j-1}}-t_{k+Q_j}) \Delta^{(k+Q_{j-1})}_{\pm}}  \\
& \quad \times  B_{l;k+Q_{j-1}+1,\cdots,k+Q_j} e^{i(t_{k+Q_j}-t_{k+Q_{j+1}}) \Delta^{(k+Q_j)}_{\pm}} B_{i;k+Q_j+1,\cdots,k+Q_{j+1}}   \\
  & \quad \times e^{i(t_{k+Q_{j+1}}-t_{k+Q_{j+2}})\Delta_{\pm}^{(k+Q_{j+1})}} (\cdots) dt_{k+Q_1} \dots dt_{k+Q_n}
 \end{split}
\end{equation}
and 
\begin{equation}    \label{integral after move}
 \begin{split}
  I(\mu',\sigma') & =\int_{t_k \geq\cdots t_{\sigma'(k+Q_j)} \geq t_{\sigma'(k+Q_{j+1})} \cdots\geq t_{\sigma'(k+Q_n)}\geq 0} J^k(\underline{t}_{k+Q_n};\mu')dt_{k+Q_1} \dots dt_{k+Q_n}  \\
  & = \int_{t_k \geq\cdots t_{\sigma'(k+Q_j)} \geq t_{\sigma'(k+Q_{j+1})} \cdots\geq t_{\sigma'(k+Q_n)}\geq 0}  \cdots e^{i(t_{k+Q_{j-1}}-t_{k+Q_j}) \Delta^{(k+Q_{j-1})}_{\pm}} \\
  & \quad \times  B_{i;k+Q_{j-1}+1,\cdots,k+Q_j} e^{i(t_{k+Q_j}-t_{k+Q_{j+1}}) \Delta^{(k+Q_j)}_{\pm}} B_{l;k+Q_j+1,\cdots,k+Q_{j+1}}  \\
  & \quad \times e^{i(t_{k+Q_{j+1}}-t_{k+Q_{j+2}})\Delta_{\pm}^{(k+Q_{j+1})}} (\cdots)' dt_{k+Q_1} \dots dt_{k+Q_n}
 \end{split}
\end{equation}
\\
The $\cdots$ in \eqref{integral before move} and \eqref{integral after move} coincide.  \\
For $1\leq r\leq r_0=\min\{p_j,p_{j+1}\}$, $s\geq Q_j$ and index $m$: $j+1\leq m\leq n$, any $B_{k+Q_{j-1}+r;s+1,\cdots,s+p_m}$ (when it is highlighted) in $(\cdots)$ of \eqref{integral before move} will become $B_{k+Q_j+r;s+1,\cdots,s+p_m}$ in $(\cdots)'$ of \eqref{integral after move} and any $B_{k+Q_j+r;s+1,\cdots,s+p_m}$ (when it is highlighted) in $(\cdots)$ of \eqref{integral after move} become $B_{k+Q_{j-1}+r;s+1,\cdots,s+p_m}$ in $(\cdots)'$ of \eqref{integral before move};

All the changes are illustrated in the table below:

\begin{center}
\begin{tabular}{l c r}
  \qquad \qquad $(\cdots)$ & $\longleftrightarrow$ & $(\cdots)'$ \qquad \qquad \\
  \hline
  $B_{k+Q_{j-1}+1;s+1,\cdots,s+p_m}$ & $\leftrightarrow$  & $B_{k+Q_j+1;s+1,\cdots,s+p_m}$ \\
  $B_{k+Q_{j-1}+2;s+1,\cdots,s+p_m}$ & $\leftrightarrow$  & $B_{k+Q_j+2;s+1,\cdots,s+p_m}$ \\
  $\qquad \qquad \vdots$ & $\leftrightarrow$ & $\vdots \qquad \qquad$   \\
  $B_{k+Q_{j-1}+r_0;s+1,\cdots,s+p_m}$ & $\leftrightarrow$  & $B_{k+Q_j+r_0;s+1,\cdots,s+p_m}$ \\
\end{tabular}
\end{center}

In order to prove $I(\mu,\sigma)=I(\mu',\sigma')$ we introduce $P$ and $\tilde{P}$ which are defined as:
\begin{align}
&P=B_{l;k+Q_{j-1}+1,\cdots,k+Q_j} e^{i(t_{k+Q_j}-t_{k+Q_{j+1}}) \Delta^{(k+Q_j)}_{\pm}} B_{i;k+Q_j+1,\cdots,k+Q_{j+1}}  \label{P} \\
&\tilde{P}=B_{i;k+Q_j+1,\cdots,k+Q_{j+1}} e^{-i(t_{k+Q_j}-t_{k+Q_{j+1}}) \tilde{\Delta}^{(k+Q_j)}_{\pm}} B_{l;k+Q_{j-1}+1,\cdots,k+Q_j}  \label{P tilde}
\end{align}
where
\begin{equation*}
\begin{split}
\tilde{\Delta}^{(k+Q_j)}_{\pm}=\Delta^{(k+Q_j)}_{\pm}& -\Delta_{\pm,x_{k+Q_j}}
-\Delta_{\pm,x_{k+Q_j-1}}-\cdots-\Delta_{\pm,x_{k+Q_{j-1}+1}}   \\
& +\Delta_{\pm,x_{k+Q_j+1}}+\Delta_{\pm,x_{k+Q_j+2}}+\cdots+\Delta_{\pm,x_{k+Q_{j+1}}}
\end{split}
\end{equation*}
We've used this notion above: $\Delta_{\pm,x_j}=\Delta_{x_j}-\Delta_{x'_j}$.

We will show that
\begin{equation}     \label{supp equality}
\begin{split}
& e^{i(t_{k+Q_{j-1}}-t_{k+Q_j}) \Delta^{(k+Q_{j-1})}_{\pm}} P e^{i(t_{k+Q_{j+1}}-t_{k+Q_{j+2}})\Delta_{\pm}^{(k+Q_{j+1})}} \\
& = e^{i(t_{k+Q_{j-1}}-t_{k+Q_{j+1}}) \Delta^{(k+Q_{j-1})}_{\pm}} \tilde{P} e^{i(t_{k+Q_j}-t_{k+Q_{j+2}})\Delta_{\pm}^{(k+Q_{j+1})}}
\end{split}
\end{equation}

Indeed in \eqref{P} we can write $\Delta^{(k+Q_j)}_{\pm}=\Delta_{\pm,x_i}+(\Delta^{(k+Q_j)}_{\pm}-\Delta_{\pm,x_i})$. Therefore, 
\begin{equation*}
e^{i(t_{k+Q_j}-t_{k+Q_{j+1}}) \Delta^{(k+Q_j)}_{\pm}}= e^{i(t_{k+Q_j}-t_{k+Q_{j+1}}) \Delta_{\pm,x_i}}e^{i(t_{k+Q_j}-t_{k+Q_{j+1}}) (\Delta^{(k+Q_j)}_{\pm}-\Delta_{\pm,x_i})}
\end{equation*}
Observe that the first term on the RHS of the above equation can be commuted to the left of $B_{l;k+Q_{j-1}+1,\cdots,k+Q_j}$ and the second one to the right of $B_{i;k+Q_j+1,\cdots,k+Q_{j+1}}$, thus after two commutations
\begin{equation}    \label{rewrite P}
\begin{split}
P=&e^{i(t_{k+Q_j}-t_{k+Q_{j+1}})\Delta_{\pm,x_i}} B_{l;k+Q_{j-1}+1,\cdots,k+Q_j} B_{i;k+Q_j+1,\cdots,k+Q_{j+1}}   \\
& \times e^{i(t_{k+Q_j}-t_{k+Q_{j+1}}) (\Delta^{(k+Q_j)}_{\pm}-\Delta_{\pm,x_i})}
\end{split}
\end{equation}
and the LHS of \eqref{supp equality} becomes
\begin{equation}   \label{lhs of supp equality}
\begin{split}
&e^{i(t_{k+Q_{j-1}}-t_{k+Q_j}) \Delta^{(k+Q_{j-1})}_{\pm}} P e^{i(t_{k+Q_{j+2}}-t_{k+Q_{j+2}})\Delta_{\pm}^{(k+Q_{j+1})}}   \\
& = e^{i(t_{k+Q_{j-1}}-t_{k+Q_j}) \Delta^{(k+Q_{j-1})}_{\pm}} e^{i(t_{k+Q_j}-t_{k+Q_{j+1}})\Delta_{\pm,x_i}} B_{l;k+Q_{j-1}+1,\cdots,k+Q_j}  \\
&\quad \times B_{i;k+Q_j+1,\cdots,k+Q_{j+1}} e^{i(t_{k+Q_j}-t_{k+Q_{j+1}})(\Delta^{(k+Q_j)}_{\pm}-\Delta_{\pm,x_i})} e^{i(t_{k+Q_{j+1}}-t_{k+Q_{j+2}})\Delta_{\pm}^{(k+Q_{j+1})}} \\
& = e^{i(t_{k+Q_{j-1}}-t_{k+Q_j}) \Delta^{(k+Q_{j-1})}_{\pm}} e^{i(t_{k+Q_j}-t_{k+Q_{j+1}})\Delta_{\pm,x_i}} B_{i;k+Q_j+1,\cdots,k+Q_{j+1}}  \\
&\quad \times B_{l;k+Q_{j-1},\cdots,k+Q_j} e^{i(t_{k+Q_{j+1}}-t_{k+Q_{j+2}}) (\Delta_{\pm,x_i}+\Delta_{\pm,x_{k+Q_j+1}} + \cdots+\Delta_{\pm,x_{k+Q_{j+1}}})}  \\
& \quad \times e^{i(t_{k+Q_j}-t_{k+Q_{j+2}}) (\Delta_{\pm,x_1}+\cdots+\hat{\Delta}_{\pm,x_i}+\cdots+\Delta_{\pm,x_{k+Q_j}})}
\end{split}
\end{equation}
where a hat denotes a missing term.

Similarly, we can rewrite $\tilde{\Delta}^{(k+Q_j)}_{\pm}$ as
\begin{align*}
\tilde{\Delta}^{(k+Q_j)}_{\pm}& =\Delta^{(k+Q_j)}_{\pm}-\Delta_{\pm,x_{k+Q_j}}
-\cdots-\Delta_{\pm,x_{k+Q_{j-1}+1}}   \\
&\qquad +\Delta_{\pm,x_{k+Q_j+1}}+\cdots+\Delta_{\pm,x_{k+Q_{j+1}}}  \\
& = \Delta^{(k+Q_{j-1})}_{\pm}+\Delta_{\pm,x_{k+Q_j+1}} +\cdots+\Delta_{\pm,x_{k+Q_{j+1}}}  \\
& = (\Delta^{(k+Q_{j-1})}_{\pm}-\Delta_{\pm,x_i}) +(\Delta_{\pm,x_i}+ \Delta_{\pm,x_{k+Q_j+1}}+\cdots+\Delta_{\pm,x_{k+Q_{j+1}}})
\end{align*}

Hence the factor $e^{-i(t_{k+Q_j}-t_{k+Q_{j+1}}) \tilde{\Delta}^{(k+Q_j)}_{\pm}}$ appearing in the definition of $\tilde{P}$ can be rewritten as
\begin{equation*}
 \begin{split}
  & e^{-i(t_{k+Q_j}-t_{k+Q_{j+1}}) \tilde{\Delta}^{(k+Q_j)}_{\pm}}  \\
  & = e^{-i(t_{k+Q_j}-t_{k+Q_{j+1}})(\Delta^{(k+Q_{j-1})}_{\pm}-\Delta_{\pm,x_i})}  \\
  & \times e^{-i(t_{k+Q_j}-t_{k+Q_{j+1}})(\Delta_{\pm,x_i}+\Delta_{\pm,x_{k+Q_j+1}}+ \cdots+\Delta_{\pm,x_{k+Q_{j+1}}})}
\end{split}
\end{equation*}
and consequently,
\begin{equation}    \label{rewrite P tilde}
\begin{split}
\tilde{P}=& e^{-i(t_{k+Q_j}-t_{k+Q_{j+1}}) (\Delta^{(k+Q_{j-1})}_{\pm}-\Delta_{\pm,x_i})} B_{i;k+Q_j+1,\cdots,k+Q_{j+1}} B_{l;k+Q_{j-1}+1,\cdots,k+Q_j}   \\
& \quad \times e^{-i(t_{k+Q_j}-t_{k+Q_{j+1}})(\Delta_{\pm,x_i}+ \Delta_{\pm,x_{k+Q_j+1}}+\cdots+\Delta_{\pm,x_{k+Q_{j+1}}})}
\end{split}
\end{equation}

The RHS of \eqref{supp equality} equals to
\begin{align}
& e^{i(t_{k+Q_{j-1}}-t_{k+Q_{j+1}}) \Delta^{(k+Q_{j-1})}_{\pm}} \tilde{P} e^{i(t_{k+Q_j}-t_{k+Q_{j+2}})\Delta_{\pm}^{(k+Q_{j+1})}}  \\
& = e^{i(t_{k+Q_{j-1}}-t_{k+Q_{j+1}}) \Delta^{(k+Q_{j-1})}_{\pm}} e^{-i(t_{k+Q_j}-t_{k+Q_{j+1}})(\Delta^{(k+Q_{j-1})}_{\pm}-\Delta_{\pm,x_i})}    \notag   \\
&\quad \times B_{i;k+Q_j+1,\cdots,k+Q_{j+1}} B_{l;k+Q_{j-1}+1,\cdots,k+Q_j}      \notag  \\
& \quad \times  e^{-i(t_{k+Q_j}-t_{k+Q_{j+1}})(\Delta_{\pm,x_i}+\Delta_{\pm,x_{k+Q_j+1}}+\cdots+\Delta_{\pm,x_{k+Q_{j+1}}})} e^{i(t_{k+Q_j}-t_{k+Q_{j+2}})\Delta_{\pm}^{(k+Q_{j+1})}}   \notag  \\
& = e^{i(t_{k+Q_{j-1}}-t_{k+Q_j}) \Delta^{(k+Q_{j-1})}_{\pm}} e^{i(t_{k+Q_j}-t_{k+Q_{j+1}})\Delta_{\pm,x_i}} B_{i;k+Q_j+1,\cdots,k+Q_{j+1}}    \notag \\
& \quad \times B_{l;k+Q_{j-1}+1,\cdots,k+Q_j} e^{i(t_{k+Q_{j+1}}-t_{k+Q_{j+2}}) (\Delta_{\pm,x_i}+\Delta_{\pm,x_{k+Q_j+1}} +\cdots+\Delta_{\pm,x_{k+Q_{j+1}}})}    \notag \\
& \quad \times e^{i(t_{k+Q_j}-t_{k+Q_{j+2}}) (\Delta_{\pm,x_1}+\cdots+\hat{\Delta}_{\pm,x_i}+\cdots+\Delta_{\pm,x_{k+Q_j}})}   \notag
\end{align}
which is the same as \eqref{lhs of supp equality}, so \eqref{supp equality} is proved.

Note $r_0=\min\{p_j, p_{j+1}\}$. By the symmetry property of $\gamma^{(k+Q_n)}$, we can perform the following exchanges without changing it's value
\begin{itemize}
 \item exchange $(x_{k+Q_{j-1}+1},x'_{k+Q_{j-1}+1})$ with $(x_{k+Q_j+1},x'_{k+Q_j+1})$
 \item exchange $(x_{k+Q_{j-1}+2},x'_{k+Q_{j-1}+2})$ with $(x_{k+Q_j+2},x'_{k+Q_j+2})$
 \item $\cdots$
 \item exchange $(x_{k+Q_{j-1}+r_0},x'_{k+Q_{j-1}+r_0})$ with $(x_{k+Q_j+r_0},x'_{k+Q_j+r_0})$
\end{itemize}
After performing these exchanges only in the arguments of $\gamma^{(k+Q_n)}$ we can rewrite \eqref{integral before move} based on \eqref{supp equality} as follows:
\begin{align}   
& \quad I(\mu,\sigma)    \label{replaced integral} \\
&=\int_{t_k \geq\cdots t_{\sigma(k+Q_j)} \geq t_{\sigma(k+Q_{j+1})} \cdots\geq t_{\sigma(k+Q_n)}\geq 0}\cdots    \notag \\
& \times e^{i(t_{k+Q_{j-1}}-t_{k+Q_j}) \Delta^{(k+Q_{j-1})}_{\pm}} P e^{i(t_{k+Q_{j+1}}-t_{k+Q_{j+2}})\Delta_{\pm}^{(k+Q_{j+1})}} (\cdots)' dt_{k+Q_1} \dots dt_{k+Q_n}  \notag \\
&=\int_{t_k \geq\cdots t_{\sigma(k+Q_j)} \geq t_{\sigma(k+Q_{j+1})} \cdots\geq t_{\sigma(k+Q_n)}\geq 0}\cdots   \notag \\
& \times e^{i(t_{k+Q_{j-1}}-t_{k+Q_{j+1}}) \Delta^{(k+Q_{j-1})}_{\pm}} \tilde{P} e^{i(t_{k+Q_j}-t_{k+Q_{j+2}})\Delta_{\pm}^{(k+Q_{j+1})}}(\cdots)' dt_{k+Q_1} \dots dt_{k+Q_n}  \notag \\
&=\int_{t_k \geq\cdots t_{\sigma(k+Q_j)} \geq t_{\sigma(k+Q_{j+1})} \cdots\geq t_{\sigma(k+Q_n)}\geq 0}\int_{\mathbb{R}^{(k+Q_{j+1})d}}\cdots  e^{i(t_{k+Q_{j-1}}-t_{k+Q_{j+1}}) \Delta^{(k+Q_{j-1})}_{\pm}}    \notag \\
& \quad \times \delta_{i;k+Q_j+1,\cdots,k+Q_{j+1}} e^{-i(t_{k+Q_j}-t_{k+Q_{j+1}}) \tilde{\Delta}^{(k+Q_j)}_{\pm}} \delta_{l;k+Q_{j-1}+1,\cdots,k+Q_j}   \notag \\
& \quad \times e^{i(t_{k+Q_j}-t_{k+Q_{j+2}})\Delta_{\pm}^{(k+Q_{j+1})}}(\cdots)' dt_{k+Q_1} \dots dt_{k+Q_n}   \notag
\end{align}
in which $\delta_{j;s+1,\cdots,s+p_m}$ denotes the abbreviated kernel of the operator $B_{j;s+1,\cdots,s+p_m}$:

\begin{equation}   \label{contraction kernel}
\begin{split}
 \delta_{j;s+1,\cdots,s+p_m} =& \delta(x_j-x_{s+1}) \delta (x_j-x'_{s+1})\cdots \delta(x_j-x_{s+p_m})\delta (x_j-x'_{s+p_m}) \\
 & - \delta(x'_j-x_{s+1}) \delta (x'_j-x'_{s+1}) \cdots \delta(x'_j-x_{s+p_m}) \delta (x'_j-x'_{s+p_m}).
\end{split}
\end{equation}
In \eqref{replaced integral} we perform the change of variables that
\begin{flushleft}
\begin{itemize}
 \item exchange $(t_{k+Q_{j-1}+1},x_{k+Q_{j-1}+1},x'_{k+Q_{j-1}+1})$ with $(t_{k+Q_j+1},x_{k+Q_j+1},x'_{k+Q_j+1})$
 \item exchange $(t_{k+Q_{j-1}+2},x_{k+Q_{j-1}+2},x'_{k+Q_{j-1}+2})$ with $(t_{k+Q_j+2},x_{k+Q_j+2},x'_{k+Q_j+2})$
 \item $\cdots$
 \item exchange $(t_{k+Q_{j-1}+r_0},x_{k+Q_{j-1}+r_0},x'_{k+Q_{j-1}+r_0})$ with $(t_{k+Q_j+r_0},x_{k+Q_j+r_0},x'_{k+Q_j+r_0})$
\end{itemize}
\end{flushleft}
in the whole integral. 
Under the same change of variables $\tilde{\Delta}^{(k+Q_j)}_{\pm}$ becomes
\begin{equation*}
\begin{split}
\tilde{\Delta}^{(k+Q_j)}_{\pm}&=\Delta^{(k+Q_j)}_{\pm} -\Delta_{\pm,x_{k+Q_j}}
-\cdots-\Delta_{\pm,x_{k+Q_{j-1}+1}}   \\
& \qquad +\Delta_{\pm,x_{k+Q_j+1}}+\cdots+ \Delta_{\pm,x_{k+Q_{j+1}}}  \\
&= \Delta^{(k+Q_{j-1})}_{\pm}+\Delta_{\pm,x_{k+Q_j+1}}+\cdots+\Delta_{\pm,x_{k+Q_{j+1}}}   \\
& \to  \Delta^{(k+Q_{j-1})}_{\pm}+\Delta_{\pm,x_{k+Q_{j-1}+1}}+\cdots + \Delta_{\pm,x_{k+Q_j}}   \\
&=  \Delta^{(k+Q_j)}_{\pm}
\end{split}
\end{equation*}
Note that $\Delta_{\pm}^{(k+Q_{j+1})}$ stay unchanged under this change of variable. Therefore, we obtain:
\begin{equation}
\begin{split}
I(\mu,\sigma)
&=\int_{t_k \geq\cdots t_{\sigma'(k+Q_j)} \geq t_{\sigma'(k+Q_{j+1})} \cdots\geq t_{\sigma'(k+Q_n)}}\cdots  e^{i(t_{k+Q_{j-1}}-t_{k+Q_j}) \Delta^{(k+Q_{j-1})}_{\pm}}    \\
& \quad \times B_{i;k+Q_{j-1}+1,\cdots,k+Q_j} e^{i(t_{k+Q_j}-t_{k+Q_{j+1}}) \Delta^{(k+Q_j)}_{\pm}} B_{l;k+Q_j+1,\cdots,k+Q_{j+1}}   \\
& \quad \times e^{i(t_{k+Q_{j+1}}-t_{k+Q_{j+2}})\Delta_{\pm}^{(k+Q_{j+1})}}(\cdots)' dt_{k+Q_1} \dots dt_{k+Q_n}   \\
&= I(\mu',\sigma')
\end{split}
\end{equation}
where $\sigma'=(k+Q_j,k+Q_{j+1}) \circ \sigma$.  $(k+Q_j,k+Q_{j+1})$ denotes the permutation which reverses $k+Q_j$ and $k+Q_{j+1}$.
\end{proof}

Next, let us consider the subset $\{\mu_s\} \subset M$ of ``special upper echelon" matrices in which each highlighted element of a higher row is to the left of each highlighted element of a lower row. A simple example of a ``special upper echelon" matrix is given below (with $k=1,n=4,p_1=2, p_2=1,p_3=3,p_4=2$)
\begin{equation*}
  \begin{pmatrix}
  \mathbf{B_{1;2,3}} & \mathbf{B_{1;4}} & B_{1;5,6,7} & B_{1;8,9} \\
  0 & B_{2;4} & B_{2;5,6,7} & B_{2;8,9}  \\
  0 & B_{3;4} & B_{3;5,6,7} & B_{3;8,9}  \\
  0 & 0 & \mathbf{B_{4;5,6,7}} & B_{4;8,9}  \\
  0 & 0 & 0 & B_{5;8,9}  \\
  0 & 0 & 0 & B_{6;8,9}  \\
  0 & 0 & 0 & \mathbf{B_{7;8,9}}  \\
 \end{pmatrix}
\end{equation*}

\begin{lemma}
 For each element of $M$ there is a finite number of acceptable moves which brings the matrix to upper echelon form. 
\end{lemma}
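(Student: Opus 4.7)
The plan is to exhibit a strictly decreasing potential on the set of matrices arising from $M$, so that the finiteness of its range forces termination of any sequence of acceptable moves, and to identify the terminal configurations with the special upper echelon ones. For $\mu \in M$, let $r_c := \mu(k + Q_{c-1} + 1) \in \{1,\dots,k+Q_{c-1}\}$ denote the row index of the highlighted entry in column $c$, and define
\begin{equation*}
  \Phi(\mu) := (r_1, r_2, \dots, r_n),
\end{equation*}
ordered lexicographically. Since the upper echelon condition is exactly $r_1 \leq r_2 \leq \cdots \leq r_n$, it suffices to show that every acceptable move strictly decreases $\Phi$ with respect to the lex order.

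The key structural input is the constraint built into the definition of $M$, namely $r_c \leq k + Q_{c-1}$ for every column $c$. For an acceptable move at position $j$ we have by hypothesis $r_{j+1} < r_j$, and combined with $r_j \leq k + Q_{j-1}$ this yields $r_j, r_{j+1} \leq k + Q_{j-1}$. Hence neither of these two indices belongs to the rows $\{k+Q_{j-1}+m : 1 \leq m \leq r_0\} \cup \{k+Q_j+m : 1 \leq m \leq r_0\}$ that are permuted by the move, so on columns $j$ and $j+1$ the move acts as a direct swap of highlights, producing $r'_j = r_{j+1}$ and $r'_{j+1} = r_j$. Similarly, for any $c < j$ we have $r_c \leq k + Q_{c-1} \leq k + Q_{j-2} < k + Q_{j-1} + 1$, so $r_c$ lies strictly below the row-swap band and is therefore unchanged; thus $r'_c = r_c$ for all $c < j$.

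Combining these observations, $\Phi(\mu') = (r_1, \dots, r_{j-1}, r_{j+1}, r'_{j+1}, \dots, r'_n)$ agrees with $\Phi(\mu)$ in its first $j-1$ entries and is strictly smaller in the $j$-th entry (where $r_{j+1} < r_j$), so $\Phi(\mu') <_{\mathrm{lex}} \Phi(\mu)$. Because $\Phi$ takes values in the finite set $\{1,\dots,k+Q_{n-1}\}^n$, any descending chain in lex order is finite, so starting from any $\mu \in M$ one reaches, after finitely many acceptable moves, a matrix on which no further acceptable move can be performed. At such a terminal matrix the condition $r_{j+1} < r_j$ fails for every $j$, i.e.\ $r_1 \leq r_2 \leq \cdots \leq r_n$, which is the special upper echelon form.

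The only real point to check is that the simultaneous row swap accompanying each column swap does not perturb the first $j-1$ entries of $\Phi$ and does not shift $r_j, r_{j+1}$ themselves; both facts reduce to the triangular inequality $r_c \leq k + Q_{c-1}$ defining $M$, combined with the trigger condition $r_{j+1} < r_j$. This is the only place where the structure of $M$ is used, and once it is in hand the argument is a textbook lex-descent, so I do not foresee any significant obstacle beyond careful bookkeeping of which indices lie in the row-swap band.
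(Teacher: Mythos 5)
Your proof is correct, but takes a genuinely different route from the paper's. The paper's argument is constructive: process the rows in order, using acceptable moves to pack the highlights of row $1$ into the leftmost columns, then those of row $2$ into the next block, and so on, remarking that later rounds do not disturb the rows already sorted; termination of this explicit schedule is asserted informally. Your argument instead exhibits a strictly decreasing lexicographic potential $\Phi(\mu)=(r_1,\dots,r_n)$ with $r_c=\mu(k+Q_{c-1}+1)$, shows every acceptable move strictly decreases it, and identifies the terminal configurations with the upper echelon ones. The key verification --- that a move at position $j$ fixes $r_1,\dots,r_{j-1}$ and replaces $r_j$ by the strictly smaller $r_{j+1}$ --- is handled correctly: the triangular constraint $r_c\leq k+Q_{c-1}$ defining $M$, together with the trigger $r_{j+1}<r_j$, places the highlighted rows of columns $1$ through $j+1$ strictly below the row-swap band $\{k+Q_{j-1}+m,\,k+Q_j+m : 1\leq m\leq r_0\}$, so the accompanying row exchanges touch only columns $>j+1$ and the column swap acts alone on $(r_j,r_{j+1})$. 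Your version buys a fully rigorous termination proof valid for \emph{any} maximal sequence of acceptable moves (not just one chosen schedule), at the cost of not producing an explicit algorithm; the paper's version is more procedural but relies on an unproven claim that the process halts.
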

\begin{proof}
We start from the first row and take acceptable moves to bring all highlighted entries in the first row in consecutive order. Since our goal is the upper echelon form, the updated highlighted entries will occupy $\bf{B_{1;k+1,\cdots,k+Q_1}}$ through $\bf{B_{1;k+Q_{j_1-1}+1,\cdots,k+Q_{j_1}}}$. Then if there are any highlighted entries on the second row, bring them to positions $\bf{B_{2;k+Q_{j_1}+1,\cdots,k+Q_{j_1+1}}}$ through  $\bf{B_{2;k+Q_{j_2-1}+1,\cdots,k+Q_{j_2}}}$. Here $j_1<j_2$. Noticed that this will not effect the highlighted positions of the first row. If there is no highlighted entire on the second row, just leave it and move to the third row. Keep repeating these steps and we will end up with a special upper echelon matrix after finitely many steps.
\end{proof}

\begin{lemma}   \label{lem:number of echelon matrix}
 Let $C_{k,n}$ be the number of $(k+Q_{n-1}) \times n$ special upper echelon matrices of the type discussed above. Then $C_{k,n} \leq 2^{k+(p_0+1)(n-1)}$.
\end{lemma}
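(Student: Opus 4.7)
My plan is to parameterize each special upper echelon matrix by the row indices of its highlighted entries and then apply a standard binomial estimate.

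First I will identify each $(k+Q_{n-1}) \times n$ special upper echelon matrix with a tuple $(r_1,\ldots,r_n)$, where $r_j$ denotes the row of the unique highlighted entry in column $j$. The structure of the matrix forces $1 \leq r_j \leq k + Q_{j-1}$, since column $j$ has nonzero entries only in rows $1$ through $k + Q_{j-1}$. The defining property of the special upper echelon form — namely, that every highlighted entry in a row with smaller index lies to the left of every highlighted entry in a row with larger index — is precisely the monotonicity condition $r_1 \leq r_2 \leq \cdots \leq r_n$. Conversely, any non-decreasing tuple obeying $1 \leq r_j \leq k + Q_{j-1}$ determines a unique special upper echelon matrix, so the correspondence is a bijection.

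Next, since $r_j \leq k + Q_{j-1} \leq k + Q_{n-1}$ for every $j$, the count $C_{k,n}$ is bounded above by the number of non-decreasing sequences of length $n$ with values in $\{1, 2, \ldots, k + Q_{n-1}\}$. By the standard stars-and-bars argument, this number equals $\binom{k + Q_{n-1} + n - 1}{n}$.

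Finally, using $Q_{n-1} = p_1 + \cdots + p_{n-1} \leq p_0(n-1)$ together with the elementary bound $\binom{N}{m} \leq 2^N$, I obtain
$$C_{k,n} \;\leq\; \binom{k + Q_{n-1} + n - 1}{n} \;\leq\; 2^{k + Q_{n-1} + n - 1} \;\leq\; 2^{k + (p_0+1)(n-1)},$$
as desired. There is no serious obstacle here: the main conceptual step is recognizing that the special upper echelon condition is equivalent to a non-decreasing row-index profile, after which the estimate is a routine combinatorial inequality.
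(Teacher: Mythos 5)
Your proof is correct and takes a genuinely different, arguably cleaner combinatorial route than the paper. The paper proves the bound by a two-step ``dis-assemble/re-assemble'' argument: first count ordered partitions of the $n$ column indices into consecutive blocks (getting $P_n = 2^{n-1}$), then count the ways to assign those blocks to rows (getting $\sum_{i=0}^n \binom{k+Q_{n-1}}{i}$), and multiply. You instead directly observe that a special upper echelon matrix is equivalent to a non-decreasing sequence of row indices $r_1 \le \cdots \le r_n$ with $r_j \le k+Q_{j-1}$, and invoke stars-and-bars to get $\binom{k+Q_{n-1}+n-1}{n}$ before applying $\binom{N}{m}\le 2^N$. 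Both routes arrive at the same intermediate bound $2^{k+Q_{n-1}+n-1}$, but yours has the advantage of being an exact bijective count (after relaxing the column-dependent cap $r_j \le k+Q_{j-1}$ to the uniform $r_j \le k+Q_{n-1}$), whereas the paper's estimate double-counts by multiplying the total number of partitions $P_n$ by the sum over all possible block counts $i$; indeed, by Vandermonde's identity the tight version of the paper's count collapses exactly to your binomial coefficient. Your argument is self-contained and sound.
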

\begin{proof}
The proof consists of two steps. First of all, we dis-assemble the matrix by ``lifting'' all highlighted entries to the first row and put them in the same subsets if they were originally from the same row. In this way, the first row is partitioned into many subsets. Let $P_n$ denote the number of all possible partitions, then 
\begin{equation}    \label{P_n}
P_n=\sum\limits_{i=0}^{n-1}\binom{n-1}{i}=2^{n-1}
\end{equation}
The idea is to put $n-1$ pads in the space among the $n$ elements to separate them. Since we can separate them into different numbers (from $1$ to $n$) of subsets, we can choose to use $0$ pads, $1$ pads, $\cdots$, upto $n-1$ pads. Hence \eqref{P_n} follows. 

The second step is to re-assemble the upper echelon matrix by ``lowering'' the first subset to the first used row, the second subset to the second used row, etc. Note here, we do not require that only the upper triangle matrix is used, which may result in more matrices. This does not matter since we are looking for an upper bound of the number of such matrices. Suppose an arbitrary partition of $n$ has $i$ subsets. Then there will be exactly $\binom{k+Q_{n-1}}{i}$ ways to lower them in an order preserving way to the $k+Q_{n-1}$ available rows. 
Thus
\begin{equation*}
C_{k,n} \leq P_n \sum\limits_{i=0}^{n} \binom{k+Q_{n-1}}{i} \leq 2^{k+Q_{n-1}+n-1} \leq 2^{k+(p_0+1)(n-1)}
\end{equation*}
as desired (since $Q_{n-1}=p_1+p_2+\cdots+p_{n-1} \leq (n-1)p_0$).
\end{proof}
 
\subsection{Equivalence Classes}

\begin{cor}    \label{cor:equal domain integral}
Let $\mu_s$ be a special upper echelon matrix. We write $\mu \sim \mu_s$ if $\mu$ can be transformed to $\mu_s$ in finitely many acceptable moves. Then there exists a subset $D$ of $[0,t_k]^n$ such that
\small
\begin{equation}
\sum\limits_{\mu \sim \mu_s}\int_0^{t_k} ... \int_0^{t_{k+Q_{n-1}}} J^k(\underline{t}_{k+Q_n};\mu)dt_{k+Q_1} \dots dt_{k+Q_n} = \int\limits_{D} J^k(\underline{t}_{k+Q_n};\mu)dt_{k+Q_1} \dots dt_{k+Q_n}
\end{equation}
\end{cor}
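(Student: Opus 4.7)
The plan is to apply Lemma \ref{lem:equal integral} iteratively along a chain of acceptable moves so as to rewrite each summand on the left-hand side as an integral over a permuted time-simplex with the same integrand, and then to verify that these simplices tile a single subset $D \subset [0,t_k]^n$ up to measure zero. Concretely, for each $\mu \sim \mu_s$ I would fix a finite sequence of acceptable moves carrying $\mu$ to $\mu_s$; each move composes the running permutation of the time indices with a transposition of the form $(k+Q_j, k+Q_{j+1})$, so starting from $\sigma = \mathrm{id}$ and composing produces a permutation $\sigma_\mu$ of $\{k+Q_1, \dots, k+Q_n\}$. Repeated application of Lemma \ref{lem:equal integral} then gives
$$\int_0^{t_k}\!\!\cdots\!\int_0^{t_{k+Q_{n-1}}} J^k(\underline{t}_{k+Q_n};\mu)\, dt_{k+Q_1}\cdots dt_{k+Q_n} = \int_{D_{\sigma_\mu}} J^k(\underline{t}_{k+Q_n};\mu_s)\, dt_{k+Q_1}\cdots dt_{k+Q_n},$$
where $D_{\sigma_\mu} := \{(t_{k+Q_1},\dots,t_{k+Q_n}) \in [0,t_k]^n : t_k \geq t_{\sigma_\mu(k+Q_1)} \geq \cdots \geq t_{\sigma_\mu(k+Q_n)} \geq 0\}$.

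Next I would verify that the resulting simplices are essentially pairwise disjoint. Each acceptable move is reversible — its inverse simply undoes the corresponding row, column, and time-label swaps — so the pair $(\mu_s, \sigma_\mu)$ determines $\mu$ uniquely by running the chain of moves backwards. Consequently distinct $\mu, \mu' \in [\mu_s]$ give distinct permutations $\sigma_\mu \neq \sigma_{\mu'}$, since matching permutations would force $\mu = \mu'$ upon reverse-unwinding. Distinct permutations correspond to distinct total orderings of the time variables, so $D_{\sigma_\mu} \cap D_{\sigma_{\mu'}}$ is contained in the boundary set $\{t_{k+Q_i} = t_{k+Q_j} \text{ for some } i \neq j\}$, which has Lebesgue measure zero in $[0,t_k]^n$.

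Setting $D := \bigcup_{\mu \sim \mu_s} D_{\sigma_\mu} \subset [0,t_k]^n$, the essential disjointness of the constituent simplices together with their shared integrand $J^k(\underline{t}_{k+Q_n}; \mu_s)$ — which may be written as $J^k(\underline{t}_{k+Q_n}; \mu)$ for any $\mu \in [\mu_s]$ in view of Lemma \ref{lem:equal integral}, matching the notation in the statement — and the countable additivity of the Lebesgue integral together deliver the claimed identity. The main technical point, as is standard in these board-game arguments \emph{à la} Klainerman–Machedon, is to confirm that $\sigma_\mu$ is a well-defined function of $\mu$ alone and not of the particular reduction chain chosen: this follows from the observation that the highlighted pattern of $\mu_s$ is fixed, and that any two reduction paths from $\mu$ to $\mu_s$ differ by commuting pairs of disjoint transpositions, which produce the same net permutation of the time variables.
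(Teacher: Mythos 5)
Your proof follows the same route as the paper's: identify each $I(\mu,\mathrm{id})$ with $I(\mu_s,\sigma_\mu)$ by repeated application of Lemma \ref{lem:equal integral}, use injectivity of $\mu \mapsto \sigma_\mu$ to conclude that the resulting time-simplices are essentially disjoint, and take $D$ to be their union. One caution on the injectivity step: as written, ``running the chain of moves backwards'' from $(\mu_s,\sigma_\mu)$ is circular, since to reverse the chain you must already know which chain was used, hence already know $\mu$; what the argument really needs is that the \emph{pair} $(\mu_s,\sigma)$ determines $\mu$ without reference to the chain, namely that undoing the permutation $\sigma$ of time labels together with the induced row/column exchanges reconstructs the original highlighted matrix. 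The paper simply asserts the injectivity without this detail, so you are no worse off, but the phrasing should be adjusted. Your closing remark about well-definedness of $\sigma_\mu$ across reduction paths is superfluous once you fix one chain per $\mu$ (as you do), and the asserted justification --- that any two reduction paths differ by commuting disjoint transpositions --- is itself a nontrivial claim that would require a separate proof.
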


\normalsize
\begin{proof}
Consider the following integral
\begin{equation*}
I(\mu,id)=\int_0^{t_k} ... \int_0^{t_{k+Q_{n-1}}} J^k(\underline{t}_{k+Q_n};\mu)dt_{k+Q_1} \dots dt_{k+Q_n}
\end{equation*}
and perform finitely many acceptable moves on the matrix associated to $I(\mu,id)$ until it is transformed to the special upper echelon matrix associated with $I(\mu_s,\sigma)$. By Lemma \ref{lem:equal integral}
\begin{equation*}
I(\mu,id)=I(\mu_s,\sigma).
\end{equation*}
Assume that $(\mu_1,id)$ and $(\mu_2,id)$ with $\mu_1 \neq \mu_2$ yield the same echelon form $\mu_s$, then the corresponding permutations $\sigma_1$ and $\sigma_2$ must be different. Therefore, $D$ can be chosen to be the union of all $\{t_k \geq t_{\sigma(k+Q_1)} \geq t_{\sigma(k+Q_2)} \geq \cdots \geq  t_{\sigma(k+Q_{n-1})} \}$ for all permutations $\sigma$ which occur in a given equivalence class of some $\mu_s$.
\end{proof}

\vspace{2 mm}
\section{Proof of Theorem \ref{thm:2D uniqueness}}

Now we are ready to prove Theorem \ref{thm:2D uniqueness}. 
\begin{proof}[\textbf{Proof of Theorem \ref{thm:2D uniqueness}}]
Fix $t_k$. Recall the expansion of $\gamma^{(k)}$:
\begin{equation}
  \gamma^{(k)}(t_k,\cdot)=\sum\limits_{\mu \in M} \int_0^{t_k} ... \int_0^{t_{k+Q_{n-1}}} J^k(\underline{t}_{k+Q_n};\mu) dt_{k+Q_1} \dots dt_{k+Q_n}
\end{equation}

and $J^k$: 
\begin{align*}
 & J^k(\underline{t}_{k+Q_n};\mu)=e^{i(t_k-t_{k+Q_1}) \Delta^{(k)}_{\pm}} B_{\mu(k+1);k+1,\cdots,k+Q_1}e^{i(t_{k+Q_1}-t_{k+Q_2}) \Delta^{(k+Q_1)}_{\pm}} \cdots  \\
 &  \times e^{i(t_{k+Q_{n-1}}-t_{k+Q_n})\Delta_{\pm}^{(k+Q_{n-1})}} B_{\mu(k+Q_{n-1}+1);k+Q_{n-1}+1,\cdots,k+Q_n}\big(\gamma^{(k+Q_n)}(t_{k+Q_n})\big)
\end{align*}

Thanks to Corollary \ref{cor:equal domain integral} and Lemma \ref{lem:number of echelon matrix} we can write $\gamma^{(k)}(t_k,\cdot)$ as a sum of at most $2^{k+(Q_1+1)(n-1)}$ terms of the form
\begin{equation}
\int\limits_{D} J^k(\underline{t}_{k+Q_n};\mu_s)dt_{k+Q_1} \dots dt_{k+Q_n}.
\end{equation}
Let $I_k^n=\overbrace{[0,t_k] \times [0,t_k] \times \cdots \times
[0,t_k]}^{\text{n copies}}$  \\
and $D_{t_{k+Q_1}}=\{(t_{k+Q_2},\cdots,t_{k+Q_n})|(t_{k+Q_1},t_{k+Q_2},\cdots,t_{k+Q_n}) \in D\}$, then 

\begin{flushleft}
\begin{align*}
& \Big\|S^{(k,\alpha)}\int_{0}^{t_k}\dotsi\int_{0}^{t_{k+Q_{n-1}}} J^k(\underline{t}_{k+Q_n})dt_{k+Q_1}\dots dt_{k+Q_n}\Big\|_{L^2(\mathbb{R}^{2k} \times \mathbb{R}^{2k})} \qquad \qquad \qquad \qquad   \\
& \mathrel{\mathop{\sim}^{\rm \footnotemark}} \Big\|S^{(k,\alpha)} \int_D  e^{i(t_k-t_{k+Q_1}) \Delta^{(k)}_{\pm}} B_{\mu_s(k+1);k+1,\cdots,k+Q_1} e^{i(t_{k+Q_1}-t_{k+Q_2})\Delta^{(k+Q_1)}_{\pm}}   \\
& \qquad \times B_{\mu_s(k+Q_1+1);k+Q_1+1,\cdots,k+Q_2}\cdots dt_{k+Q_1}\cdots dt_{k+Q_n} \Big\|_{L^2(\mathbb{R}^{2k} \times \mathbb{R}^{2k})}  
\end{align*}
\end{flushleft}
\footnotetext{the implicit scalar is proportional to $2^{k+(p_0+1)(n-1)}$ which can be absorbed in $C^k(C_0\sqrt{T})^n$.}

\begin{align*}
& = \Big\|\int_0^{t_k} e^{i(t_k-t_{k+Q_1})\Delta^{(k)}_{\pm}} \Big(\int_{D_{t_{k+Q_1}}} S^{(k,\alpha)}B_{\mu_s(k+1);k+1,\cdots,k+Q_1} e^{i(t_{k+Q_1}-t_{k+Q_2})\Delta^{(k+Q_1)}_{\pm}}   \\
& \qquad \qquad \times B_{\mu_s(k+Q_1+1);k+Q_1+1,\cdots,k+Q_2}\cdots dt_{k+Q_2}\cdots dt_{k+Q_n}\Big)dt_{k+Q_1} \Big\|_{L^2(\mathbb{R}^{2k} \times \mathbb{R}^{2k})}  \\
& \leq \int_0^{t_k}\Big\| e^{i(t_k-t_{k+Q_1})\Delta^{(k)}_{\pm}} \int_{D_{t_{k+Q_1}}} S^{(k,\alpha)}B_{\mu_s(k+1);k+1,\cdots,k+Q_1} e^{i(t_{k+Q_1}-t_{k+Q_2})\Delta^{(k+Q_1)}_{\pm}}   \\
& \qquad \qquad \times B_{\mu_s(k+Q_1+1);k+Q_1+1,\cdots,k+Q_2}\cdots dt_{k+Q_2}\cdots dt_{k+Q_n} \Big\|_{L^2(\mathbb{R}^{2k} \times \mathbb{R}^{2k})} dt_{k+Q_1}   \\
& = \int_0^{t_k}\Big\| \int_{D_{t_{k+Q_1}}} S^{(k,\alpha)}B_{\mu_s(k+1);k+1,\cdots,k+Q_1} e^{i(t_{k+Q_1}-t_{k+Q_2})\Delta^{(k+Q_1)}_{\pm}}   \\
& \qquad \qquad \times B_{\mu_s(k+Q_1+1);k+Q_1+1,\cdots,k+Q_2}\cdots dt_{k+Q_2}\cdots dt_{k+Q_n} \Big\|_{L^2(\mathbb{R}^{2k} \times \mathbb{R}^{2k})} dt_{k+Q_1}   \\
& \leq \int_0^{t_k}\Big( \int_{D_{t_{k+Q_1}}} \Big\|  S^{(k,\alpha)}B_{\mu_s(k+1);k+1,\cdots,k+Q_1} e^{i(t_{k+Q_1}-t_{k+Q_2})\Delta^{(k+Q_1)}_{\pm}}   \\
& \qquad \qquad \times B_{\mu_s(k+Q_1+1);k+Q_1+1,\cdots,k+Q_2}\cdots\Big\|_{L^2(\mathbb{R}^{2k} \times\mathbb{R}^{2k})} dt_{k+Q_2}\cdots dt_{k+Q_n} \Big) dt_{k+Q_1}   \\
& \leq \int_{I_k^n} \Big\|  S^{(k,\alpha)}B_{\mu_s(k+1);k+1,\cdots,k+Q_1} e^{i(t_{k+Q_1}-t_{k+Q_2})\Delta^{(k+Q_1)}_{\pm}} \\
& \qquad \qquad \times B_{\mu_s(k+Q_1+1);k+Q_1+1,\cdots,k+Q_2}\cdots\Big\|_{L^2(\mathbb{R}^{2k} \times\mathbb{R}^{2k})} dt_{k+Q_1}dt_{k+Q_2}\cdots dt_{k+Q_n}   \\
& \mathrel{\mathop{\leq}^{\rm \text{C-S on $t_{k+Q_1}$}}} t_k^{\frac{1}{2}}\int_{I_k^{n-1}} \Big\|  S^{(k,\alpha)}B_{\mu_s(k+1);k+1,\cdots,k+Q_1} e^{i(t_{k+Q_1}-t_{k+Q_2})\Delta^{(k+Q_1)}_{\pm}}  \\
& \qquad \qquad \times \Big(B_{\mu_s(k+Q_1+1);k+Q_1+1,\cdots,k+Q_2}\cdots\Big) \Big\|_{L^2\big((t_{k+Q_1}\in[0,t_k])\times \mathbb{R}^{2k} \times\mathbb{R}^{2k}\big)} dt_{k+Q_2}\cdots dt_{k+Q_n}  \\
& \mathrel{\mathop{\leq}^{\rm \text{\eqref{free evolving bound}}}} C_{\alpha}t_k^{\frac{1}{2}}\int_{I_k^{n-1}} \Big\| S^{(k+Q_1,\alpha)} B_{\mu_s(k+Q_1+1);k+Q_1+1,\cdots,k+Q_2} e^{i(t_{k+Q_2}-t_{k+Q_3})\Delta^{(k+Q_2)}_{\pm}}  \\
& \qquad \qquad \times \Big(B_{\mu_s(k+Q_2+1);k+Q_2+1,\cdots,k+Q_2+Q_1}\cdots\Big) \Big\|_{L^2\big(\mathbb{R}^{2(k+Q_1)} \times\mathbb{R}^{2(k+Q_1)}\big)}dt_{k+Q_2}\cdots dt_{k+Q_n}  \\
& \leq\cdots [\text{repeatedly applying Cauchy-Schwartz inequality and \eqref{free evolving bound}}]  \\
& \leq (C_{\alpha}t_k^{\frac{1}{2}})^{n-1} \int_0^{t_k} \Big\| S^{(k+Q_{n-1},\alpha)} B_{\mu_s(k+Q_{n-1}+1);k+Q_{n-1}+1,\cdots,k+Q_n}    \\
& \qquad \qquad \times \gamma^{k+Q_n}(t_{k+Q_n},\cdot) \Big\|_{L^2\big(\mathbb{R}^{2(k+Q_{n-1})} \times\mathbb{R}^{2(k+Q_{n-1})}\big)}dt_{k+Q_n}    \\
& \mathrel{\mathop{\leq}^{\rm \text{\eqref{priori spacial bound}}}} (C_{\alpha}t_k^{\frac{1}{2}})^{n-1} C^{k+Q_n}   \\
& \leq C^k(C_0\sqrt{T})^n
\end{align*}
\normalsize
we choose appropriate $C$ and $C_0$ to obtain the last line. As we have already seen in the proof of Theorem \ref{thm:1D uniqueness}, the sum over $p_1,p_2,\cdots, p_n$ and product of potential constants $b_0^{(p_1)},b_0^{(p_2)} \cdots, b_0^{(p_n)}$ will contribute extra factor $p_0 (b_0)^n$, which can be absorbed in constant $C$ and $C_0$. This completes the proof. 
\end{proof}

\normalsize
\begin{rmk}
 The main ingredients in the above proof are the free evolution bound \eqref{free evolving bound} and a priori energy bound \eqref{priori spacial bound}. The a priori energy bound usually requires $\alpha \leq 1$ (may see \eqref{alpha upper bound}). While in \eqref{free evolving bound}, we will need $\alpha > \frac{d}{2}-\frac{1}{2(2p-1)}$ which is at least $1$ when $d\geq 3$ (see \eqref{lower bound on alpha}). Therefore only the cases $d=1,2$ give a nonempty intersection for the survival of $\alpha$. Which implies that, under this setting, the method we used here to prove the uniqueness fails for the higher dimensional cases, unless we have better constrains on $\alpha$. Klainerman and Machedon obtained a better estimate (on a different space) here which allows them to prove the uniqueness for the case $d=3$, $p=1$. Actually, we are answering the same questions on the convergence of BBGKY hierarchy to $p$-GP hierarchy as in \cite{KSS} (for $d=2$, $p=1$) and \cite{CPquintic} (for $d\leq 2$, $p=2$), for any 
positive integer $p$. The case when $d=3$, $p=1$ is covered by \cite{CPcubic} recently with a new approach. 
\end{rmk}

\begin{acklg} 
 I sincerely thank my advisor Nata\v{s}a Pavlovi\'c for useful discussions and comments on this topic and her numerous valuable advice and generous help during my study and research. I am also grateful to Thomas Chen and Maja Taskovi\'c for reading the manuscript carefully and for many help suggestions. 
\end{acklg}

\appendix

\section{Approximation of the initial wave function}
Recall the proof of the a priori bound in Corollary \ref{cor:finite a priori energy bound}, we need the expectation of $H_N^k$ to be of the order $N^k$ at time 0. The main idea to obtain this is to approximate the initial wave function with cutoffs. We will prove Lemma \ref{lem: regularization of the initial data} in this section, with which \eqref{trace convergence} is immediate.

\begin{lemma}    \label{lem: regularization of the initial data}
 Suppose $\psi_N \in L^2(\mathbb{R}^{dN})$ with $\|\psi_N\|=1$ is a family of $N$-particle wave functions with the associated marginal densities $\gamma_N^{(k)}$, $k=1,2,\cdots$.  \\
Let $\chi$ be a bump function such that $0 \leq \chi \leq 1$, $\chi(s)=1$ for $s\in [0,1]$ and $\chi(s)=0$ for $s\geq 2$. $\kappa>0$ is a parameter. Define
\begin{equation}   
 \tilde\psi_N:=\frac{\chi(\frac{\kappa}{N}H_N)\psi_N}{\big\|\chi(\frac{\kappa}{N}H_N)\psi_N \big\|}
\end{equation}
We denote by $\tilde\gamma_{N,t}^{(k)}$ the corresponding $k$-marginal density associated with $\tilde\psi_N$.   
We also assume that 
\begin{equation}  \label{first order N bound}
 \langle \psi_N, H_N\psi_N\rangle \leq CN
\end{equation}
and 
\begin{equation}   \label{first marginal convergence}
 \gamma_N^{(1)} \to \Ket\phi \Bra\phi \quad as \ \ N \to \infty
\end{equation}
with $\phi \in H^1(\mathbb{R}^d)$. 
Then, for $\kappa>0$ small enough and for every $k\leq 1$ we have
\begin{equation}   \label{trace convergence needed}
  \lim_{N\to \infty}Tr\big|\tilde\gamma_{N}^{(k)}-\Ket{\phi} \Bra{\phi}^{\otimes k}\big|=0
\end{equation}
\end{lemma}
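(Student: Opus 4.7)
The plan is to exploit the fact that the energy cutoff $\chi(\tfrac{\kappa}{N}H_N)$ produces an $L^{2}$-approximation of $\psi_N$ with error $O(\sqrt{\kappa})$, uniformly in $N$, and then to transfer this closeness from wave functions to their $k$-particle marginals; combined with the fact that the original marginals $\gamma_N^{(k)}$ already converge to the factorized limit $\Ket{\phi}\Bra{\phi}^{\otimes k}$, this produces the claim up to a $\kappa$-dependent error that can be made arbitrarily small.

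First I would establish $\|\psi_N - \tilde\psi_N\| \leq C\sqrt{\kappa}$ uniformly in $N$. Since each $V^{(p)}\geq 0$ and the kinetic term is non-negative, $H_N$ is a non-negative self-adjoint operator on $L^{2}_s(\mathbb{R}^{dN})$, so the elementary pointwise bound $(1-\chi(s))^{2}\leq s$ on $[0,\infty)$ combined with the spectral theorem and hypothesis \eqref{first order N bound} gives
\begin{equation*}
  \|(1-\chi(\tfrac{\kappa}{N}H_N))\psi_N\|^{2} \leq \tfrac{\kappa}{N}\langle \psi_N, H_N \psi_N\rangle \leq C\kappa.
\end{equation*}
Hence $\|\chi(\tfrac{\kappa}{N}H_N)\psi_N\|\geq \sqrt{1-C\kappa}\geq 1-C\kappa$ for $\kappa$ small, and a triangle-inequality argument accounting for the renormalization factor in the definition of $\tilde\psi_N$ finishes the $L^{2}$ estimate.

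Next I would transfer the $L^{2}$-closeness to trace-norm closeness of the marginals. Since the map $\varphi\mapsto \Ket{\varphi}\Bra{\varphi}$ is Lipschitz from unit vectors to rank-one projections with trace-norm constant $2$, and partial trace is a contraction in the trace norm,
\begin{equation*}
  Tr\big|\gamma_N^{(k)}-\tilde\gamma_N^{(k)}\big| \leq Tr\big|\Ket{\psi_N}\Bra{\psi_N}-\Ket{\tilde\psi_N}\Bra{\tilde\psi_N}\big| \leq 2\|\psi_N-\tilde\psi_N\| \leq C\sqrt{\kappa}
\end{equation*}
uniformly in $N$ and $k$. Combining with the standard pure-state asymptotic factorization result---which upgrades the hypothesis $\gamma_N^{(1)}\to \Ket{\phi}\Bra{\phi}$ in trace norm to $\gamma_N^{(k)}\to \Ket{\phi}\Bra{\phi}^{\otimes k}$ in trace norm for every $k\geq 1$, by exploiting the permutation symmetry of $\psi_N$ together with the rank-one structure of the limit (see \cite{ESY},\cite{KSS},\cite{CPquintic})---a triangle-inequality argument gives
\begin{equation*}
  \limsup_{N\to\infty}Tr\big|\tilde\gamma_N^{(k)}-\Ket{\phi}\Bra{\phi}^{\otimes k}\big|\leq C\sqrt{\kappa},
\end{equation*}
which is arbitrarily small for $\kappa$ small enough, yielding the conclusion. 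The main obstacle is the asymptotic factorization step: trace-norm convergence of the one-particle marginal to a pure state does not propagate to higher marginals by soft compactness alone, and the argument crucially uses the positivity and unit-trace normalization of $\gamma_N^{(k)}$ together with the rigidity of rank-one projections under weak$^{*}$ limits.
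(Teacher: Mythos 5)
Your argument has a genuine gap at the very last step. You correctly establish $\|\psi_N-\tilde\psi_N\|\leq C\sqrt{\kappa}$ uniformly in $N$, and correctly transfer this to $Tr|\gamma_N^{(k)}-\tilde\gamma_N^{(k)}|\leq C\sqrt{\kappa}$, also uniformly in $N$; the Lieb--Seiringer upgrade from $\gamma_N^{(1)}\to\Ket{\phi}\Bra{\phi}$ to $\gamma_N^{(k)}\to\Ket{\phi}\Bra{\phi}^{\otimes k}$ is likewise fine. But these pieces only give
\begin{equation*}
\limsup_{N\to\infty}Tr\big|\tilde\gamma_N^{(k)}-\Ket{\phi}\Bra{\phi}^{\otimes k}\big|\leq C\sqrt{\kappa},
\end{equation*}
which is \emph{not} the conclusion of the lemma. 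In the lemma, $\kappa>0$ is a fixed parameter (the definition of $\tilde\psi_N$ depends on it), and the claim is that for each fixed small $\kappa$ the limit in $N$ is \emph{exactly zero}. The cutoff genuinely removes an $O(\sqrt{\kappa})$ piece of $\psi_N$ in $L^2$ that does not shrink with $N$, so comparing $\tilde\gamma_N^{(k)}$ to $\gamma_N^{(k)}$ can never close the $\kappa$-dependent gap. This matters downstream: in the convergence argument at the end of Section~\ref{section: compactness and convergence} one first \emph{fixes} $\kappa$ and then needs $\tilde\gamma_{N,t}^{(k)}\to\Ket{\phi_t}\Bra{\phi_t}^{\otimes k}$ as $N\to\infty$ at that fixed $\kappa$; a $\limsup\leq C\sqrt\kappa$ would not suffice.

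The paper closes this gap by not comparing $\tilde\psi_N$ to $\psi_N$ at all, but rather to a \emph{different} cutoff object whose one-particle marginal equals the desired limit \emph{exactly}. Concretely, one first writes $\psi_N\approx\phi\otimes\xi_N^{(N-1)}$ with vanishing $L^2$ error (this is the point where the hypothesis $\gamma_N^{(1)}\to\Ket\phi\Bra\phi$ is really used), regularizes $\phi$ to $\phi_*\in H^2$ with $\|\phi-\phi_*\|$ small, and introduces a modified Hamiltonian $\breve H_N$ acting only on the last $N-1$ variables. The key observation is that the modified cutoff factorizes: $\breve\Xi(\phi_*\otimes\xi_N^{(N-1)})=\phi_*\otimes\breve\Xi\xi_N^{(N-1)}$, so the $1$-marginal of the normalized $\breve\psi_N$ is \emph{exactly} $\Ket{\phi_*}\Bra{\phi_*}$. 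The distance $\|\tilde\psi_N-\breve\psi_N\|$ is then controlled by $\|\psi_N-\phi\otimes\xi_N^{(N-1)}\|$ and $\|\phi-\phi_*\|$ (plus the $\Xi$ versus $\breve\Xi$ comparison from \cite{ESY06}), all of which vanish or can be made arbitrarily small as $N\to\infty$, without shrinking $\kappa$. So for a fixed small $\kappa$ the limit in $N$ is zero. Your proposal is missing exactly this replacement step; any argument that only compares the cutoff marginal to the uncutoff one will hit the same $\sqrt{\kappa}$ wall.
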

 
\begin{proof}
 The proof is similar in spirit to the proof for the two-body interactions case, which can be found in \cite{ESY},\cite{ESY06},\cite{ESY09}. Sketch of the key steps are listed below. We just need to show
\begin{equation}   \label{trace convenience to proof}
  Tr \big|\tilde\gamma_{N}^{(1)}-\Ket{\phi} \Bra{\phi}\big|\to 0, \quad as \ \ N\to \infty
\end{equation}
since \eqref{trace convenience to proof} implies \eqref{trace convergence needed} (proved by Lieb and Seiringer in \cite{LS}). Moreover, by the equivalence of weak$^*$ convergence and trace norm convergence, it is enough to prove that for every compact operator $J^{(1)}\in \mathcal{K}_1$ and for every $\epsilon>0$, there exists $N_0=N_0(J^{(1)}, \epsilon)$ such that
\begin{equation}   \label{first trace convergence}
 \big|Tr J^{(1)}\big(\tilde\gamma_{N}^{(1)}-\Ket{\phi} \Bra{\phi}\big)\big|\leq \epsilon, \quad for \ \ N>N_0
\end{equation}
The proof of \eqref{first trace convergence} is divided into five steps.

\emph{Step 1.} By \eqref{first marginal convergence}, we know that there exists a sequence $\xi_N^{(N-1)} \in L^2(\mathbb{R}^{d(N-1)})$, $\|\xi_N^{(N-1)}\|=1$ satisfying
\begin{equation}    \label{step 1 estimate}
 \|\psi_N-\phi\otimes \xi_N^{(N-1)}\|  \to 0, \quad as \ \ N\to 0
\end{equation}
This was proved by Alessandro Michelangeli in \cite{Michelangeli}. The proof in current case is identical to the proof presented in \cite{ESY09}. 

\emph{Step 2.} There exists $\phi_* \in H^2(\mathbb{R}^d)$ with $\|\phi_*\|=1$ such that
\begin{equation}   \label{step 2 estimate}
 \|\phi-\phi_*\| \leq \frac{\epsilon}{32\vvvert J^{(1)}\vvvert}.
\end{equation}

\emph{Step 3.} Let $\Xi=\chi(\frac{\kappa}{N}H_N)$. Then by \eqref{first order N bound}:
\begin{equation}       
 \|(1-\Xi)\psi_N\|^2=\langle\psi_N, (1-\Xi)^2\psi_N\rangle  \leq\langle\psi_N, \mathbf{1}(\kappa H_N\geq N)\psi_N\rangle \leq \frac{\kappa}{N} \langle \psi_N, H_N\psi_N\rangle \leq C\kappa
\end{equation}
is uniformly in $N$. Since $\|\psi_N\|=1$, by triangle inequality we know
\begin{equation}      \label{distance between psiN tildepsiN}
 \|\psi_N-\tilde\psi_N\|=\|\frac{\psi_N}{\|\psi_N\|}-\frac{\Xi\psi_N}{\|\Xi\psi_N\|}\| \leq \frac{2}{\|\psi_N\|}\|\psi_N-\Xi\psi_N\|=2\|(1-\Xi)\psi_N\| \leq C\kappa^{\frac{1}{2}}.
\end{equation}
The above inequality is needed in \eqref{bound by psiN tildepsiN}. One can find $\kappa>0$ small enough such that $\|\Xi\psi_N\|\geq \frac{1}{2}$. Use triangle inequality and note that $\|\Xi\| \leq 1$. We have
\begin{equation}    \label{step 3 estimate}
 \begin{split}
 & \big\|\frac{\Xi\psi_N}{\|\Xi\psi_N\|}-\frac{\Xi(\phi_*\otimes\xi_N^{(N-1)})}{\|\Xi(\phi_*\otimes\xi_N^{(N-1)})\|} \big\|     \\
 & = \big\|\frac{\Xi\psi_N}{\|\Xi\psi_N\|}-\frac{\Xi(\phi_*\otimes\xi_N^{(N-1)})}{\|\Xi\psi_N\|}+\frac{\Xi(\phi_*\otimes\xi_N^{(N-1)})}{\|\Xi\psi_N\|}-\frac{\Xi(\phi_*\otimes\xi_N^{(N-1)})}{\|\Xi(\phi_*\otimes\xi_N^{(N-1)})\|} \big\|    \\
 & \leq \frac{1}{\|\Xi\psi_N\|}\big\|\Xi\psi_N-\Xi(\phi_*\otimes\xi_N^{(N-1)})\big\|+\frac{1}{\|\Xi\psi_N\|}\big\|\|\Xi\psi_N\|-\|\Xi(\phi_*\otimes\xi_N^{(N-1)})\|\big\|      \\
 & \leq \frac{2}{\|\Xi\psi_N\|}\big\|\Xi\big(\psi_N-\phi_*\otimes\xi_N^{(N-1)}\big)\big\|   \\
 & \leq 4\big\|\psi_N-\phi_*\otimes\xi_N^{(N-1)}\big\|   \\
 & \leq 4\big\|\psi_N-\phi\otimes\xi_N^{(N-1)}\big\|+4\big\|\phi\otimes\xi_N^{(N-1)} -\phi_*\otimes\xi_N^{(N-1)}\big\|   \\
 & \leq 4\big\|\psi_N-\phi\otimes\xi_N^{(N-1)}\big\|+4\big\|\phi -\phi_*\big\|   \\
 & \leq \frac{\epsilon}{6\vvvert J^{(1)}\vvvert}
 \end{split}
\end{equation}
for large $N$. Here in the last inequality we used \eqref{step 1 estimate} and \eqref{step 2 estimate}.

\emph{Step 4.} As in \cite{ESY09} and \cite{CPquintic}, we define a similar Hamiltonian after taking into account of the ($p+1$)-particle interactions studied in this paper.
\begin{equation}       \label{breve Hamiltonian}
  \breve{H}_N:=\sum_{i=2}^{N}(-\Delta_{x_i})+\frac{1}{N^{p}}\sum_{2\leq i_1<\cdots<i_{p+1}\leq N}V^{(p)}_N\big(x_{i_1}-x_{i_2},\cdots,x_{i_1}-x_{i_{p+1}}\big)
\end{equation}
Instead of acting on all variables, the new Hamiltonian only acts on the last $N-1$ variables. Let $\breve\Xi=\chi(\frac{\kappa}{N}\breve{H}_N)$. Then by \eqref{step 3 estimate}, we will have
\begin{equation}     \label{step 4 estimate}
 \big\|\frac{\Xi\psi_N}{\|\Xi\psi_N\|}-\frac{\breve{\Xi}(\phi_*\otimes\xi_N^{(N-1)})}{\|\breve{\Xi}(\phi_*\otimes\xi_N^{(N-1)})\|} \big\|  \leq \frac{\epsilon}{3\vvvert J^{(1)}\vvvert}
\end{equation}
We refer the proof of \eqref{step 4 estimate} to \cite{ESY06}.

\emph{Step 5.} For \eqref{first trace convergence}, we define
\begin{equation}
 \breve{\psi}_N:=\frac{\breve{\Xi}(\phi_*\otimes\xi_N^{(N-1)})}{\|\breve{\Xi}(\phi_*\otimes\xi_N^{(N-1)})\|}=\phi_*\otimes \frac{\breve{\Xi}\xi_N^{(N-1)}}{\|\breve{\Xi}\xi_N^{(N-1)}\|}
\end{equation}
since $\breve{\Xi}$ only acts on the last $N-1$ variables and $\|\phi_*\|=1$. Further, we define
\begin{equation}
 \breve{\gamma}_{N}^{(1)}(x_1; x'_1):=\int \breve{\psi}_N(x_1;\textbf{x}_{N-1}) \bar{\breve{\psi}}_N(x'_1,\textbf{x}_{N-1})d{\textbf{x}_{N-1}}
\end{equation}
Note that $\breve{\psi}_N$ is not symmetric in all variables, but it is symmetric in the last $N-1$ variables. Clearly, $\breve{\gamma}_{N}^{(1)}$ is a density matrix and
\begin{equation*}
 \breve{\gamma}_{N}^{(1)}=\Ket{\phi_*} \Bra{\phi_*} 
\end{equation*}
Thus, using $\|\tilde\psi_N-\breve{\psi}_N\| \leq \frac{\epsilon}{3\vvvert J^{(1)}\vvvert}$, which is equivalent to \eqref{step 4 estimate} and $\|\phi-\phi_*\| \leq \frac{\epsilon}{32\vvvert J^{(1)}\vvvert}$ from \eqref{step 2 estimate}, we obtain
\begin{equation}
 \begin{split}
  \big|Tr J^{(1)}\big(\tilde\gamma_{N}^{(1)}-\Ket{\phi} \Bra{\phi}\big)\big|&\leq \big|Tr J^{(1)}\big(\tilde\gamma_{N}^{(1)}-\breve{\gamma}_{N}^{(1)}\big)\big|+\big|Tr J^{(1)}\big(\Ket{\phi_*} \Bra{\phi_*}-\Ket{\phi} \Bra{\phi}\big)\big|   \\
  & \leq 2\big\vvvert J^{(1)}\big\vvvert \|\tilde\psi_N-\breve{\psi}_N\|+2\big\vvvert J^{(1)}\big\vvvert \|\phi_*-\phi\|    \\
  & \leq \epsilon
 \end{split}
\end{equation}
for sufficiently large $N$ with arbitrary $\epsilon$ and small enough $\kappa$. Hence \eqref{first trace convergence} follows.
\end{proof}

\section{A Poincar\'e type inequality}

\begin{lemma}  \label{lem: lemma B}
 Let $h$ be a non-negative probability measure on $\mathbb{R}^d$ satisfying $\int_{\mathbb{R}^d}(1+x^2)^{\frac{1}{2}}h(x)dx<\infty$. Then for $h_{\epsilon}(x)=\frac{1}{\epsilon^d}h(\frac{x}{\epsilon}), \epsilon >0$, and every $0\leq \kappa <1$, there exists a $C>0$ such that
\begin{equation}
 \begin{split}
  &\big|Tr J^{(k)}\big(h_{\epsilon}(x_j-x_{k+1})\cdots h_{\epsilon}(x_j-x_{k+p})-\delta(x_j-x_{k+1})\cdots\delta(x_j-x_{k+p}) \big)\gamma^{(k+p)} \big|    \\
  & \qquad \leq C\epsilon^{\kappa}\big\vvvert J^{(k)}\big\vvvert Tr\big|S_j S_{k+1}\cdots S_{k+p}\gamma^{(k+p)}S_{k+p}\cdots S_{k+1}S_j \big|
 \end{split}
\end{equation}
for all non-negative $\gamma^{(k+p)} \in \mathcal{L}_{k+p}^1$
\end{lemma}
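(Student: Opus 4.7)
The plan is to pass to Fourier variables, where each $\delta(x_j-x_{k+l})$ becomes the constant $1$ in the conjugate variable while each $h_\epsilon(x_j-x_{k+l})$ becomes the multiplier $\hat h(\epsilon\,\xi_{k+l})$. Thus the operator whose trace we must estimate is the multiplication operator with Fourier symbol
\[
m_\epsilon(\xi_{k+1},\dots,\xi_{k+p}) \;:=\; \prod_{l=1}^{p}\hat h(\epsilon\,\xi_{k+l})\;-\;1,
\]
acting in the variables that are contracted against $x_j$. Since $h$ is a non-negative probability measure, $\hat h(0)=1$ and $|\hat h|\le 1$, so a telescoping identity $\prod a_l-1=\sum_{l=1}^{p}(a_l-1)\prod_{i<l}a_i$ reduces the analysis to bounding $p$ pieces of the form $(\hat h(\epsilon\,\xi_{k+l})-1)\cdot\prod_{i<l}\hat h(\epsilon\,\xi_{k+i})$.

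The first key step is the elementary H\"older-type estimate
\[
\big|\hat h(\epsilon\,\xi)-1\big| \;=\; \Big|\int(e^{-i\epsilon\,\xi\cdot x}-1)\,h(x)\,dx\Big|
\;\le\; C\,\epsilon^{\kappa}\,|\xi|^{\kappa}\int(1+|x|)\,h(x)\,dx,
\]
valid for $0\le\kappa<1$ by interpolating the trivial bound $|e^{it}-1|\le 2$ with $|e^{it}-1|\le|t|$; the moment hypothesis on $h$ makes the integral finite. The other factors $|\hat h(\epsilon\,\xi_{k+i})|\le 1$ are harmless.

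The second step is to convert the factor $|\xi_{k+l}|^{\kappa}$ into a derivative on $\gamma^{(k+p)}$. Since $\kappa<1$, $|\xi_{k+l}|^{\kappa}\le\langle\xi_{k+l}\rangle\le\langle\xi_{k+l}\rangle\langle\xi'_{k+l}\rangle$, which is exactly (a piece of) the Fourier symbol of $S_{k+l}S'_{k+l}$. Inserting this into the trace and writing everything in kernel form, one gets a bound of the type
\[
\bigl|\,\text{Tr}\,J^{(k)}(m_\epsilon\,\gamma^{(k+p)})\bigr|
\;\le\; C\,\epsilon^{\kappa}\,\sum_{l=1}^{p}\big|\text{Tr}\,J^{(k)}\,(\text{bounded multiplier})\,S_{k+l}\gamma^{(k+p)}S_{k+l}\big|.
\]
Symmetrically distributing one $S_j$ onto each side of $\gamma^{(k+p)}$ (using that $S_j$ commutes with the multipliers in the $x_{k+l}$ variables, and that the $x_j$ derivative can always be moved onto $J^{(k)}$ at the cost of the $\vvvert\cdot\vvvert$ norm, exactly as in the analogous estimates of \cite{KSS} and \cite{CPquintic}) produces the claimed right-hand side with $S_j S_{k+1}\cdots S_{k+p}\gamma^{(k+p)}S_{k+p}\cdots S_{k+1}S_j$ in the trace.

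The main obstacle I expect is purely bookkeeping: in the telescoped piece indexed by $l$, only $S_{k+l}$ appears naturally, not the full string $S_{k+1}\cdots S_{k+p}$. To produce the symmetric string on the right-hand side one must insert the remaining derivatives and estimate the extra factors $\langle\xi_{k+i}\rangle\langle\xi'_{k+i}\rangle$ using the trivial bound $|\hat h(\epsilon\,\xi_{k+i})|\le 1$ together with Cauchy--Schwarz for the partial trace against $J^{(k)}$. This is where the operator norm $\vvvert J^{(k)}\vvvert$ enters — the integration over the Fourier variables conjugate to $x_j$ is absorbed precisely as in the definition of $\vvvert J^{(k)}\vvvert$, which contains weights $\langle p_j\rangle\langle p'_j\rangle$ in the $x_j$ direction. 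Once this distribution of derivatives is carried out carefully, summing over $l=1,\dots,p$ produces the factor $C\,\epsilon^{\kappa}\,\vvvert J^{(k)}\vvvert$ and completes the estimate.
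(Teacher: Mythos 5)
Your proposal follows essentially the same route as the paper's proof: pass to Fourier variables, apply the elementary $\kappa$-H\"older bound on $|e^{i\theta}-1|$ (harvesting $\epsilon^{\kappa}$ and the $\kappa$-moment of $h$), convert the resulting factors $|q|^{\kappa}$ into pieces of the symbols of $S_{k+l}$, and close with a weighted Cauchy--Schwarz and absorption into $\vvvert J^{(k)}\vvvert$, just as the paper does after decomposing the non-negative $\gamma^{(k+p)}$ into rank-one pieces. One small caution: $h_\epsilon(x_j-x_{k+l})$ acts in the trace as $\hat h(\epsilon(q_{k+l}-q'_{k+l}))$ in the momentum-transfer variable, not as a one-sided multiplier $\hat h(\epsilon\xi_{k+l})$; your step two with both $\langle\xi_{k+l}\rangle$ and $\langle\xi'_{k+l}\rangle$ already tacitly accounts for this, but the symbol should be stated in the transfer variable from the outset.
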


\begin{proof}
 Following \cite{KSS},\cite{CPquintic}, we prove the case $k=1$. For case of $k>1$, the proof is analogous. Since $1\leq j \leq k$, so $j=1$ in current case. By the non-negativity of $\gamma^{(1+p)}$, we can decompose it as  $\gamma^{(1+p)}=\sum_i \lambda_i \Ket{\psi_i} \Bra{\psi_i}$, with $\psi_i \in L^2(\mathbb{R}^{(1+p)d})$ and $\lambda_i\geq 0$, $\sum \lambda_i \leq 1$. Then
\begin{equation}
 \begin{split}
   &Tr J^{(1)}\big(h_{\epsilon}(x_1-x_{2})\cdots h_{\epsilon}(x_1-x_{1+p})-\delta(x_1-x_{2})\cdots\delta(x_1-x_{1+p}) \big)\gamma^{(1+p)}       \\
   & =\sum_i \lambda_i\langle\psi_i, J^{(1)}\big(h_{\epsilon}(x_1-x_{2})\cdots h_{\epsilon}(x_1-x_{1+p})-\delta(x_1-x_{2})\cdots\delta(x_1-x_{1+p}) \big)\psi_i\rangle     \\
   & =\sum_i \lambda_i\langle\Psi_i, \big(h_{\epsilon}(x_1-x_{2})\cdots h_{\epsilon}(x_1-x_{1+p})-\delta(x_1-x_{2})\cdots\delta(x_1-x_{1+p}) \big)\psi_i\rangle     \\ 
 \end{split}
\end{equation}
where $\Psi_i=(J^{(1)}\otimes 1)\psi_i$. Next we switch to Fourier side to obtain
\begin{equation}
 \begin{split}
   &\langle\Psi_i, \big(h_{\epsilon}(x_1-x_{2})\cdots h_{\epsilon}(x_1-x_{1+p})-\delta(x_1-x_{2})\cdots\delta(x_1-x_{1+p}) \big)\psi_i\rangle   \\
   & =\int dq_1\cdots dq_{1+p}dq'_1\cdots dq'_{1+p}\bar{\hat\Psi}_i(q_1,\cdots,q_{1+p})\hat{\psi}_i(q'_1,\cdots,q'_{1+p})    \\
   & \quad \times \int dx_2\cdots dx_{1+p}h(x_2)\cdots h(x_{1+p})(e^{i\epsilon x_2(q_2-q'_2)}\cdots e^{i\epsilon x_{1+p}(q_{1+p}-q'_{1+p})}-1)   \\
   & \qquad \times \delta(q_1+\cdots+q_{1+p}-q'_1-\cdots-q'_{1+p})
 \end{split}
\end{equation}
Since for $x \in \mathbb{R}$, $|e^{ix}-1|=2|\sin \frac{x}{2}|\leq C|x|^{\kappa}$ is always true with arbitrary $0<\kappa<1$ and  constant $C>0$ independent of $\kappa$, we have the following
\begin{equation}
 \begin{split}
  \big|e^{i\epsilon x_2(q_2-q'_2)}\cdots e^{i\epsilon x_{1+p}(q_{1+p}-q'_{1+p})}-1\big| &\leq C\epsilon^{\kappa}\big(\sum_{i=2}^{1+p}|x_i(q_i-q'_i)|\big)^{\kappa}  \\
  & \leq C\epsilon^{\kappa}\sum_{i=2}^{1+p}|x_i(q_i-q'_i)|^{\kappa}     \\
  & \leq C\epsilon^{\kappa}\sum_{i=2}^{1+p}|x_i|^{\kappa}(|q_i|^{\kappa}+|q'_i|^{\kappa})
 \end{split}
\end{equation}
The last inequality follows from $(a+b)^{\kappa}\leq a^{\kappa}+b^{\kappa}$ for $\kappa \in (0,1)$ and $a, b$ both nonnegative. And the second to the last inequality follows in a similar way, but with an implicit constant depending on $p$. Thus
\small
\begin{equation}
 \begin{split}
  &\big|\langle\Psi_i, \big(h_{\epsilon}(x_1-x_{2})\cdots h_{\epsilon}(x_1-x_{1+p})-\delta(x_1-x_{2})\cdots\delta(x_1-x_{1+p}) \big)\psi_i\rangle\big| \\
  & \quad \leq C\epsilon^{\kappa} \int dq_1\cdots dq_{1+p}dq'_1\cdots dq'_{1+p}|\hat\Psi_i(q_1,\cdots,q_{1+p})||\hat{\psi}_i(q'_1,\cdots,q'_{1+p})|    \\
  & \qquad \times \Big(\prod_{i=2}^{1+p}\int |x_i|^{\kappa}h(x_i)dx_i\Big) \Big(\sum_{i=2}^{1+p}|q_i|^{\kappa}+|q'_i|^{\kappa}\Big) \delta(q_1+\cdots+q_{1+p}-q'_1-\cdots-q'_{1+p})
 \end{split}
\end{equation}
\normalsize
Clearly the $p$ copies of integrations involving $h$ are finite by assumption. And the summation term $\sum_{i=2}^{1+p}(|q_i|^{\kappa}+|q'_i|^{\kappa})$ contains a total of $p$ terms. We will show how to control one of them, say $|q_2|^k$. The final upper bound on this part will be the same (up to a constant $p$).
\small 
\begin{align}
  &\int dq_1\cdots dq_{1+p}dq'_1\cdots dq'_{1+p}\delta(q_1+\cdots+q_{1+p}-q'_1-\cdots-q'_{1+p})  \notag \\
  & \qquad \times \big|\hat\Psi_i(q_1,\cdots,q_{1+p})\big|\big|\hat{\psi}_i(q'_1,\cdots,q'_{1+p})\big||q_2|^{\kappa}   \notag \\
  & \quad =\int dq_1\cdots dq_{1+p}dq'_1\cdots dq'_{1+p}\delta(q_1+\cdots+q_{1+p}-q'_1-\cdots-q'_{1+p})  \notag \\
  & \qquad \quad\times\frac{\langle q_1\rangle \langle q_2\rangle\cdots\langle q_{1+p}\rangle}{\langle q'_1\rangle \langle q'_2\rangle\cdots\langle q'_{1+p}\rangle}\big|\hat\Psi_i(q_1,\cdots,q_{1+p})\big|   \frac{\langle q'_1\rangle \langle q'_2\rangle\cdots\langle q'_{1+p}\rangle}{\langle q_1\rangle \langle q_2\rangle^{1-\kappa}\cdots\langle q_{1+p}\rangle}\big|\hat{\psi}_i(q'_1,\cdots,q'_{1+p})\big|  \notag  \\
  & \quad \leq \rho \int dq_1\cdots dq_{1+p}dq'_1\cdots dq'_{1+p}\delta(q_1+\cdots+q_{1+p}-q'_1-\cdots-q'_{1+p})  \notag  \\
  & \qquad \quad\times\frac{\langle q_1\rangle^2 \langle q_2\rangle^2\cdots\langle q_{1+p}\rangle^2}{\langle q'_1\rangle^2 \langle q'_2\rangle^2\cdots\langle q'_{1+p}\rangle^2}\big|\hat\Psi_i(q_1,\cdots,q_{1+p})\big|^2   \notag \\
  & \qquad + \frac{1}{\rho} \int dq_1\cdots dq_{1+p}dq'_1\cdots dq'_{1+p}\delta(q_1+\cdots+q_{1+p}-q'_1-\cdots-q'_{1+p})  \notag  \\
  & \qquad \qquad\times\frac{\langle q'_1\rangle^2 \langle q'_2\rangle^2\cdots\langle q'_{1+p}\rangle^2}{\langle q_1\rangle^2 \langle q_2\rangle^{2(1-\kappa)} \langle q_3\rangle^2\cdots \langle q_{1+p}\rangle^2}\big|\hat\psi_i(q'_1,\cdots,q'_{1+p})\big|^2    \notag \\
  & \quad \leq \rho \langle\Psi_i, S_1^2S_2^2\cdots S_{1+p}^2\Psi_i\rangle \sup_{Q'}\int \frac{dq'_1\cdots dq'_{p}}{\langle q'_1\rangle^2 \langle q'_2\rangle^2\cdots\langle q'_{p}\rangle^2\langle Q'-q'_1-\cdots-q'_{p}\rangle^2}   \label{last 1 integration}  \\
  & \qquad +\frac{1}{\rho} \langle\psi_i, S_1^2S_2^2\cdots S_{1+p}^2 \psi_i\rangle \sup_{Q}\int \frac{dq_1dq_3\cdots dq_{1+p}}{\langle q_1\rangle^2 \langle Q-q_1-q_3-\cdots-q_{p}\rangle^{2(1-\kappa)}\langle q_3\rangle^2\cdots\langle q_{1+p}\rangle^2}    \label{last 2 integration}
\end{align}
\normalsize
for arbitrary $\rho>0$. We can apply \eqref{analysis inequality} to the last two integrations \eqref{last 1 integration} and \eqref{last 2 integration} for all $\kappa \in (0,1)$ to have
\begin{equation}
 \begin{split}
   &\big|Tr J^{(1)}\big(h_{\epsilon}(x_1-x_{2})\cdots h_{\epsilon}(x_1-x_{1+p})-\delta(x_1-x_{2})\cdots\delta(x_1-x_{1+p}) \big)\gamma^{(1+p)}\big|     \\
   & \quad \leq C\epsilon^{\kappa}\big(\rho Tr J^{(1)}S_1^2S_2^2\cdots S_{1+p}^2 J^{(1)}\gamma^{(1+p)}+\frac{1}{\rho}Tr S_1^2S_2^2\cdots S_{1+p}^2\gamma^{(1+p)}\big)      \\
   & \quad \leq C\epsilon^{\kappa}\big(\rho Tr S_1^{-1}J^{(1)}S_1^2 J^{(1)}S_1^{-1}S_1S_2\cdots S_{1+p}\gamma^{(1+p)}S_{1+p}\cdots S_2S_1         \\
   & \qquad \qquad \quad +\frac{1}{\rho}Tr S_1^2S_2^2\cdots S_{1+p}^2 \gamma^{(1+p)}\big)      \\
   &\quad \leq C\epsilon^{\kappa}\big(\rho\|S_1^{-1}J^{(1)}S_1\| \|S_1J^{(1)}S_1^{-1}\|+\frac{1}{\rho}\big)Tr S_1^2S_2^2\cdots S_{1+p}^2 \gamma^{(1+p)}   \\
   & \quad \leq C\epsilon^{\kappa}\vvvert J^{(1)}\vvvert Tr S_1^2S_2^2\cdots S_{1+p}^2 \gamma^{(1+p)}
 \end{split}
\end{equation}
by taking $\rho=\vvvert J^{(1)}\vvvert^{-1}$ in the last inequality.
\end{proof}

\vspace{1mm}
\bibliographystyle{abbrv}
\bibliography{UniqueGP}

\end{document}